\documentclass[10pt]{article}
\usepackage[utf8]{inputenc}
\usepackage{amsmath}
\usepackage{amsthm}
\usepackage{amssymb}
\usepackage{fullpage}
\usepackage{cite}
\usepackage{hyperref}
\usepackage{subcaption}
\usepackage{graphicx}
\usepackage[linesnumbered, nofillcomment,noend]{algorithm2e}
\usepackage{stackrel}

\newtheorem{theorem}{Theorem}
\newtheorem*{thm*}{Theorem}

\newtheorem{lemma}{Lemma}
\newtheorem{corollary}{Corollary}
\newtheorem{observation}{Observation}

\begin{document}

\title{\bf New bounds on the tile complexity of thin rectangles at temperature-1}

\author{%
David Furcy\thanks{Computer Science Department, University of Wisconsin Oshkosh, Oshkosh, WI 54901, USA,\protect\url{furcyd@uwosh.edu}.}
\and
Scott M. Summers\thanks{Computer Science Department, University of Wisconsin Oshkosh, Oshkosh, WI 54901, USA,\protect\url{summerss@uwosh.edu}.}
\and
Christian Wendlandt\thanks{Computer Science Department, University of Wisconsin Oshkosh, Oshkosh, WI 54901, USA,\protect\url{wendlc69@uwosh.edu}.}
}

\date{}
\maketitle

\begin{abstract}
In this paper, we study the minimum number of unique tile types required for the self-assembly of thin rectangles in Winfree's abstract Tile Assembly Model (aTAM), restricted to temperature-1. Using Catalan numbers, planar self-assembly and a restricted version of the Window Movie Lemma, we derive a new lower bound on the tile complexity of thin rectangles at temperature-1 in 2D. Then, we give the first known upper bound on the tile complexity of ``just-barely'' 3D thin rectangles at temperature-1, where tiles are allowed to be placed at most one step into the third dimension. Our construction, which produces a unique terminal assembly, implements a just-barely 3D, zig-zag counter, whose base depends on the dimensions of the target rectangle, and whose digits are encoded geometrically, vertically-oriented and in binary. 
\end{abstract}

\section{Introduction}
\label{sec:intro}
Intuitively, self-assembly is the process through which
simple, unorganized components spontaneously combine, according to
local interaction rules, to form some kind of organized final
structure.

While nature exhibits numerous examples of self-assembly, researchers have been investigating the extent to which the power of nano-scale self-assembly can be harnessed for the systematic nano-fabrication of atomically-precise computational, biomedical and mechanical devices. For example, in the early 1980s, Ned Seeman \cite{Seem82} exhibited an experimental technique for controlling nano-scale self-assembly known as ``DNA tile self-assembly''.

Erik Winfree's abstract Tile Assembly Model (aTAM) is a simple, discrete mathematical model of DNA tile self-assembly. In the aTAM, a DNA tile is represented as an un-rotatable unit square \emph{tile}. Each side of a tile may have a \emph{glue} that consists of an integer strength, usually 0, 1 or 2, and an alpha-numeric label. The idea is that, if two tiles abut with matching kinds of glues, then they bind with the strength of the glue. In the aTAM, a tile set may consist of a finite number of tiles, because individual DNA tiles are expensive to manufacture. However, an infinite number of copies of each tile are assumed to be available during the self-assembly process in the aTAM. Self-assembly starts by designating a \emph{seed} tile and placing it at the origin. Then, a tile can come in and bind to the seed-containing \emph{assembly} if it binds with total strength at least a certain experimenter-chosen, integer value called the \emph{temperature} that is usually 1 or 2. Self-assembly proceeds as tiles come in and bind one-at-a-time in an asynchronous and non-deterministic fashion. 

Tile sets are designed to work at a certain temperature value. For instance, a tile set that self-assembles correctly at temperature-2 will probably not self-assemble correctly at temperature-1. However, if a tile set works correctly at temperature-1, then it can be easily modified to work correctly at temperature-2 (or higher). In what follows, we will refer to a ``temperature-2'' tile set as a tile set that self-assembles correctly only if the temperature is 2 and a ``temperature-1'' tile set as a tile set that self-assembles correctly if the temperature is 1.  

Temperature-2 tile sets give the tile set designer more control over the order in which tiles bind to the seed-containing assembly. 
For example, in a temperature-2 tile set, unlike in a temperature-1 tile set, the placement of a tile at a certain location can be prevented until after the placements of at least two other tiles at two respective adjacent locations. 
This is known as \emph{cooperative binding}.
Cooperative binding in temperature-2 tile sets leads to the self-assembly of computationally and geometrically interesting structures, in the sense of Turing universality \cite{Winf98}, the efficient self-assembly of $N \times N$ squares \cite{AdlemanCGH01,RotWin00} and algorithmically-specified shapes \cite{SolWin07}. 

While it is not known whether the results cited in the previous paragraph hold for temperature-1 tile sets, the general problem of characterizing the power of non-cooperative tile self-assembly is important from both a theoretical and practical standpoint. This is because when cooperative self-assembly is implemented in the laboratory \cite{RoPaWi04,BarSchRotWin09,SchWin07,MaoLabReiSee00,WinLiuWenSee98}, erroneous non-cooperative binding events may occur, leading to the production of invalid final structures. Of course, the obvious way to minimize such erroneous non-cooperative binding events is for experimenters to always implement systems that work in non-cooperative self-assembly because temperature-1 tile sets will work at temperature-1 or temperature-2. Yet, how capable is non-cooperative self-assembly, in general, or even in certain cases? At the time of this writing, no such general characterization of the power of non-cooperative self-assembly exists, but there are numerous results that show the apparent weakness of specific classes of temperature-1 tile self-assembly \cite{jLSAT1,ManuchSS10,WindowMovieLemma,MeunierW17}. 

Although these results highlight the weakness of certain types of temperature-1 tile self-assembly, if 3D (unit cube) tiles are allowed to be placed in just-barely-three-dimensional Cartesian space (where tiles may be placed in just the $z=0$ and $z=1$ planes), then temperature-1 self-assembly is nearly as powerful as its two-dimensional cooperative counterpart. For example, like 2D temperature-2 tile self-assembly, just-barely 3D temperature-1 tile self-assembly is capable of simulating Turing machines \cite{CookFuSch11} and the efficient self-assembly of squares \cite{CookFuSch11,jFurcyMickaSummers} and algorithmically-specified shapes \cite{FurcyS18}. 

Furthermore, Aggarwal, Cheng, Goldwasser, Kao, Moisset de Espan\'{e}s and Schweller \cite{AGKS05g} studied the efficient self-assembly of $k \times N$ rectangles, where $k < \frac{\log N}{\log \log N - \log \log \log N}$ at temperature-2 in 2D (the upper bound on $k$ makes such a rectangle \emph{thin}). They proved that the size of the smallest set of tiles that uniquely self-assemble into (i.e., the \emph{tile complexity} of) a thin $k \times N$ rectangle is $O\left(N^{\frac{1}{k}}+k\right)$ and $\Omega\left(\frac{N^{\frac{1}{k}}}{k}\right)$ at temperature-2. Their lower bound actually applies to all tile sets (temperature-1, temperature-2, etc.) but their upper bound construction requires temperature-2 and does not work correctly at temperature-1. 

In this paper, we continue the line of research into the tile complexity of thin rectangles, initiated by Aggarwal, Cheng, Goldwasser, Kao, Moisset de Espan\'{e}s and Schweller, but exclusively for temperature-1 tile self-assembly.

\subsection{Main results of this paper}

The main results of this paper are bounds on the tile complexity of thin rectangles at temperature-1. We give an improved lower bound for the tile complexity of 2D thin rectangles as well as a non-trivial upper bound for the tile complexity of just-barely 3D thin rectangles. Intuitively, a just-barely 3D thin rectangle is like having at most two 2D thin rectangles stacked up one on top of the other. We prove two main results: one negative (lower bound) and one positive (upper bound). Our main negative result gives a new and improved asymptotic lower bound on the tile complexity of a 2D thin rectangle at temperature-1, without assuming unique production of the terminal assembly (unique self-assembly). 

\begin{theorem}
\label{thm:one}
The tile complexity of a $k \times N$ rectangle for temperature-1 tile sets is $\Omega\left( N^{\frac{1}{k}}\right)$.
\end{theorem}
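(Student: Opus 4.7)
The plan is a pigeonhole argument that bounds the number of distinct ``window movies'' across the vertical cuts of a $k \times N$ rectangle, combined with a splicing step provided by a restricted Window Movie Lemma tailored to temperature-$1$. I would fix any temperature-$1$ tile system $\mathcal{T}$ whose terminal assemblies include a $k \times N$ rectangle $\alpha$, fix an assembly sequence $\pi$ producing $\alpha$, and for each of the $N-1$ vertical cuts $w_1,\ldots,w_{N-1}$ between adjacent columns, record the window movie: the ordered list of glues that appear across the cut along $\pi$, together with the side from which each glue is first activated.

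Next I would bound the number of distinct window movies across a $k$-position cut. At temperature-$1$, every non-seed tile is attached via a single strength-$1$ glue, and the assembly is planar, so the attachment relation restricted to the tiles bordering the cut has a tree-like structure. From the seed side, the assembly ``pushes'' across the cut via some subset of the $k$ positions, each activation followed by further bindings on the far side. Planarity together with this tree structure forces the activation pattern to have a Dyck/nesting form, so the number of such patterns on $k$ positions is bounded by the Catalan number $C_k = O(4^k/k^{3/2})$. Multiplying by at most $g^k$ choices for the actual glue labels (with $g$ linear in $|T|$) and a $2^k$ factor for activation sides, the total number of possible window movies is at most $(c \cdot |T|)^k$ for an absolute constant $c$, with all polynomial-in-$k$ factors absorbed.

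Then I would apply pigeonhole: if $N - 1$ exceeds this count, two cuts $w_i$ and $w_j$ share an identical window movie. The restricted Window Movie Lemma at temperature-$1$ would then permit splicing $\alpha$ between $w_i$ and $w_j$, either excising the intermediate region (producing a valid producible assembly on a shorter column range) or duplicating it (producing one on a longer range). Iterating the duplication gives arbitrarily wide producible assemblies built from the same tile set; one of these must then either extend past column $N$ or place a tile outside the $k$-row band, contradicting the assumption that $\alpha$ is a terminal $k \times N$ rectangle. Rearranging $N \le (c \cdot |T|)^k$ yields $|T| = \Omega(N^{1/k})$.

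The main obstacle will be making the Catalan-number step rigorous. The previous $\Omega(N^{1/k}/k)$ bound effectively bounded orderings by $k!$; the improvement comes from exploiting planarity to replace $k!$ by $C_k$, which absorbs the extra $1/k$ factor into the constant. To do this cleanly I will need a version of the Window Movie Lemma restricted enough that agreement of window movies forces agreement of the full nested activation structure (not merely the multiset of glues), while remaining strong enough that the splice produces a legitimate producible assembly with no tile collision or missing binding. Getting these two requirements to coexist is the crux of the argument.
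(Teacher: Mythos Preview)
Your high-level plan (pigeonhole on window movies across the $N-1$ vertical cuts, then splice via a Window Movie Lemma, with Catalan numbers replacing the crude $k!$ ordering bound) is exactly the paper's strategy. The gap is in the Catalan step as you apply it.

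You count window movies of the \emph{full} assembly sequence and assert that ``planarity together with this tree structure forces the activation pattern to have a Dyck/nesting form.'' That is not justified. At temperature~$1$ the attachment relation is indeed a tree, but the assembly sequence is only required to be a linear extension of that tree; different branches can be interleaved arbitrarily. If several branches of the tree each cross the cut, the bond-forming submovie can list those crossings in essentially any order, and there is no non-crossing/Dyck constraint among edges that lie on incomparable branches. So for the full bond-forming submovie you are back to a factorial-type count, not a Catalan one, and you do not beat the $\Omega\!\left(N^{1/k}/k\right)$ bound.

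The paper closes exactly this gap by not working with the full window movie at all. It first invokes the temperature-$1$ observation that the binding graph contains a \emph{simple path} $s$ from the seed to a point in the far extreme column, and that the tiles along $s$ form a producible assembly on their own. It then counts only the \emph{restricted} glue window submovie $M_{\vec\alpha,w}\upharpoonright s$. For a simple path crossing a vertical cut, the number $e$ of crossings is odd, and planarity forces both the ``near-side'' pairing (consecutive crossings joined on the seed side) and the ``far-side'' pairing to be non-crossing; the paper builds an explicit injection from each pairing into strings of $e-1$ balanced parentheses, yielding two Catalan factors $\mathcal{C}_{(e-1)/2}$. Together with $\binom{k}{e}$, choices of first/last crossing, and $|G|^e$ for glues, the total is at most $|G|^k\cdot 2^{3k+2}\cdot k$. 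The splice uses a \emph{restricted} Window Movie Lemma (matching only the submovie along $s$), which is legitimate precisely because at $\tau=1$ the path assembly is itself producible. This reduction to a single simple path, and the accompanying restricted WML, is the missing idea in your proposal.
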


Currently, the best upper bound for the tile complexity of a $k \times N$ rectangle for temperature-1 tile sets is $N + k - 1$, and is obtained via a straightforward generalization of the ``Comb construction'' by Rothemund and Winfree (see Figure 2b of \cite{RotWin00}). So, while Theorem~\ref{thm:one} currently does not give a tight bound, its proof technique showcases a novel application of Catalan numbers to proving lower bounds for temperature-1 self-assembly in 2D, and could be of independent interest. 

Our main positive result is the first non-trivial upper bound on the tile complexity of just-barely 3D thin rectangles. 

\begin{theorem}
\label{thm:two}
The tile complexity of a just-barely 3D $k \times N$ thin rectangle for temperature-1 tile sets is $O\left(N^{\frac{1}{\left \lfloor \frac{k}{3} \right \rfloor}} + \log N \right)$. Moreover, our construction produces a tile set that self-assembles into a unique final assembly.
\end{theorem}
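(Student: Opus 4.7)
The plan is to construct a tile set of the claimed size that deterministically self-assembles the $k \times N$ rectangle via a just-barely 3D zig-zag counter whose second layer ($z=1$) is used to simulate, at temperature-1, the cooperative binding that a 2D construction would need. I would partition the $k$ rows of the $z=0$ plane into $d = \lfloor k/3 \rfloor$ horizontal bands of three rows each (with the one or two leftover rows used only for border tiles), and let each band store one digit of a base-$b$ counter with $b = \lceil N^{1/d} \rceil$. Since $b^d \geq N$, the counter has enough distinct states to index all $N$ columns of the target, one counter value per column.

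Each digit would be encoded as a vertically-oriented binary bit pattern inside its 3-row band: the middle row holds the current bit values while the top and bottom rows carry the left-to-right and right-to-left passes of the zig-zag. An increment scans bits from least to most significant, and whenever a carry must flip a bit or propagate into the next-more-significant band, the growth briefly steps up into $z=1$ to read that bit and then returns to $z=0$, reproducing the just-barely 3D binary-increment trick used by Cook, Fu and Schweller and its successors. The bit-level logic within a digit uses $O(\log b)$ distinct tile types, shared identically across all $d$ digits. An additional $O(b)$ tile types are needed to write out the $b$ possible digit values as each column advances, and $O(k) \subseteq O(\log N)$ further tiles suffice for the seed column, border, and initialization. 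Summing, the total tile complexity is $O\left(N^{1/\lfloor k/3 \rfloor} + \log N\right)$, as claimed.

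Uniqueness of the terminal assembly would follow from the fact that the zig-zag path is a single deterministic trajectory through the $k \times N \times 2$ region: at every frontier location, the glues presented by the already-placed tiles force exactly one attachable tile, so only one sequence of tile additions is possible. A straightforward invariant --- the digit pattern in each band equals the base-$b$ representation of the current column index --- would then show that the counter reaches $N-1$ precisely when the rightmost column is placed, so growth terminates exactly at column $N-1$ and produces a single fully-filled rectangle.

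The main obstacle I expect is engineering the digit gadget so that the zig-zag turnaround \emph{and} the inter-digit carry propagation both fit inside a constant number of rows per digit, while every binding event remains deterministic at temperature-1. The $z=1$ layer must be deployed exactly at those positions where a 2D temperature-1 counter would need cooperative binding (to read a neighbor's bit before deciding whether to carry), but must never create a site where two distinct tiles could attach, since that would immediately destroy unique production. Correctly handling the corner cases --- overflow of the most significant digit, the seed and final columns, and the leftover row(s) when $k \bmod 3 \neq 0$ --- is where the construction's complexity genuinely lives, and where the tile-counting bookkeeping (ensuring shared gadgets are not accidentally duplicated and boundary tiles do not blow up the $O(b + \log N)$ bound) will require the most care.
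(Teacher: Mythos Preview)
Your high-level plan---a just-barely 3D zig-zag counter with $d=\lfloor k/3\rfloor$ digits each allotted three tiles along the $k$ direction, digits encoded geometrically in binary, the $z=1$ plane used to read bits, and directedness argued via a single deterministic attachment path (the paper phrases this as conditional determinism)---is exactly the paper's approach, and the tile-count target $O(b+\log N)$ is correct.

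The concrete layout you describe, however, is internally inconsistent in a way that hides the one non-trivial geometric idea. You assert both that each digit's binary bit pattern sits \emph{inside its 3-row band} and that there is \emph{one counter value per column}. A base-$b$ digit needs $\Theta(\log b)$ bits, and a single column of a 3-row band offers only three (or six, using $z=1$) tile positions, so there is no room. If instead the digit value is carried directly in a glue label---which ``one value per column'' forces---then the tile writing the next column's digit must receive both the old digit (from one neighbor) and the carry (from the adjacent band), and at temperature~1 a tile binds via exactly one input glue; that is precisely the cooperation you are trying to avoid, and the $z=1$ trick lets a path read a \emph{geometrically exposed} bit, not fuse a second glue-encoded input. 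The paper's actual layout spreads the $l=\lceil\log m\rceil+1$ bits of each digit as bit-bumps \emph{along the $N$ direction}, so one increment consumes a block of length $3l+2$ in $N$, not~$1$. The zig-zag reads those bumps sequentially, accumulating the value in its glue state; since the reader must distinguish every prefix of the bit string, the read and write gadgets cost $O(2^l)=O(m)$ tile types---your $O(\log b)$ figure for ``bit-level logic'' is too small---and the counter needs only about $N/(3l+2)$ increments, which is why the paper takes base $m=\lceil(N/5)^{1/d}\rceil$ rather than $\lceil N^{1/d}\rceil$ (asymptotically equivalent, but it explains the arithmetic you would otherwise find puzzling).
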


We say that our main positive result is the first non-trivial upper bound because a straightforward generalization of the aforementioned Comb construction would give an upper bound of $O\left(N + k\right)$ on the tile complexity of a just-barely 3D $k \times N$ thin rectangle.

\subsection{Comparison with related work}

Aggarwal, Cheng, Goldwasser, Kao, Moisset de Espan\'{e}s and Schweller \cite{AGKS05g} give a general lower bound of $\Omega\left( \frac{N^{\frac{1}{k}}}{k}\right)$ for the tile complexity of a 2D $k \times N$ rectangle for temperature-$\tau$ tile sets. Our main negative result, Theorem~\ref{thm:one}, is an asymptotic improvement of this result for the special case of temperature-1 self-assembly. 

Aggarwal, Cheng, Goldwasser, Kao, Moisset de Espan\'{e}s and Schweller \cite{AGKS05g} also prove that the tile complexity of a 2D $k \times N$ thin rectangle for general positive temperature tile sets is $O\left( N^{\frac{1}{k}}+k\right)$. Our main positive result, Theorem~\ref{thm:two}, is inspired by but requires a substantially different proof technique from theirs. Our construction, like theirs, uses a just-barely 3D counter, the base of which depends on the dimensions of the target rectangle, but unlike theirs, ours self-assembles in a zig-zag manner and the digits of the counter are encoded geometrically, vertically-oriented and in binary.

\section{Preliminaries}
\label{sec:prelims}

\newcommand{\lab}{{\rm label}}
\newcommand{\colorf}[1]{\color(#1)}
\newcommand{\strength}{{\rm str}}
\newcommand{\strengthf}[1]{\strength(#1)}
\newcommand{\bbval}[1]{\left[\!\left[ #1 \right] \! \right]}
\newcommand{\dom}{{\rm dom} \;}
\newcommand{\asmb}{\mathcal{A}}
\newcommand{\asmbt}[2]{\asmb^{#1}_{#2}}
\newcommand{\asmbtt}{\mathcal{A}^\tau_T}
\newcommand{\ste}[2]{#1 \mapsto #2}
\newcommand{\frontier}[3]{{\partial}^{#1}_{#2}{#3}}
\newcommand{\frontiert}[1]{\partial^{\tau}{#1}}
\newcommand{\frontiertt}[1]{\frontier{\tau}{t}{#1}}
\newcommand{\frontiertx}[2]{{\partial}^{\tau}_{#1}{#2}}
\newcommand{\frontiertau}[1]{{\partial}^{\tau}{#1}}
\newcommand{\arrowstett}[2]{#1 \xrightarrow[\tau,T]{1} #2}
\newcommand{\arrowste}[2]{#1 \stackrel{1}{\To} #2}
\newcommand{\arrowtett}[2]{#1 \xrightarrow[\tau,T]{} #2}
\newcommand{\res}[1]{\textrm{res}(#1)}
\newcommand{\termasm}[1]{\mathcal{A}_{\Box}[\mathcal{#1}]}
\newcommand{\prodasm}[1]{\mathcal{A}[\mathcal{#1}]}
\newcommand{\fgg}[1]{fgg^\#_{#1}}
\newcommand{\ftdepth}[2]{\textrm{ft-depth}_{#1}\left(#2\right)}
\newcommand{\str}[1][*]{\textrm{str}_{#1}}
\newcommand{\col}[1]{\textrm{col}_{#1}}
\newcommand{\tmblank}{\llcorner \negthinspace\lrcorner}
\newcommand{\fullgridgraph}{G^\mathrm{f}}\newcommand{\bindinggraph}{G^\mathrm{b}}

\newcommand{\setr}[2]{\left\{\ #1 \ \left|\ #2 \right. \ \right\}}
\newcommand{\setl}[2]{\left\{\ \left. #1 \ \right|\ #2 \ \right\}}

In this section, we briefly sketch a 3D version of Winfree's abstract Tile Assembly Model. Going forward, all logarithms in this paper are base-$2$.

Fix an alphabet $\Sigma$.
$\Sigma^*$ is the set of finite strings over $\Sigma$. Let $\mathbb{Z}$, $\mathbb{Z}^+$, and $\mathbb{N}$ denote the set of integers, positive integers, and non-negative integers, respectively. Let $d \in \{2, 3\}$. 

A \emph{grid graph} is an undirected graph $G=(V,E)$, where $V \subset \mathbb{Z}^d$, such that, for all $\left\{\vec{a},\vec{b}\right\} \in E$, $\vec{a} - \vec{b}$ is a $d$-dimensional unit vector. The \emph{full grid graph} of $V$ is the undirected graph $\fullgridgraph_V=(V,E)$,
such that, for all $\vec{x}, \vec{y}\in V$, $\left\{\vec{x},\vec{y}\right\} \in E \iff \| \vec{x} - \vec{y}\| = 1$, i.e., if and only if $\vec{x}$ and $\vec{y}$ are adjacent in the $d$-dimensional integer Cartesian space.

A $d$-dimensional \emph{tile type} is a tuple $t \in (\Sigma^* \times \mathbb{N})^{2d}$, e.g., a unit square (cube), with four (six) sides, listed in some standardized order, and each side having a \emph{glue} $g \in \Sigma^* \times \mathbb{N}$ consisting of a finite string \emph{label} and a non-negative integer \emph{strength}. We call a $d$-dimensional tile type merely a {\em tile type} when $d$ is clear from the context.

We assume a finite set of tile types, but an infinite number of copies of each tile type, each copy referred to as a \emph{tile}. A \emph{tile set} is a set of  tile types and is usually denoted as $T$.

A {\em configuration} is a (possibly empty) arrangement of tiles on
the integer lattice $\mathbb{Z}^d$, i.e., a partial function $\alpha:\mathbb{Z}^d
\dashrightarrow T$.  Two adjacent tiles in a configuration \emph{bind},
\emph{interact}, or are \emph{attached}, if the glues on their
abutting sides are equal (in both label and strength) and have
positive strength.  Each configuration $\alpha$ induces a
\emph{binding graph} $\bindinggraph_\alpha$, a grid graph whose
vertices are positions occupied by tiles, according to $\alpha$, with
an edge between two vertices if the tiles at those vertices
bind. For two
non-overlapping configurations $\alpha$ and $\beta$, $\alpha \cup \beta$
is defined as the unique configuration $\gamma$ satisfying, for all
$\vec{x} \in \dom{\alpha}$, $\gamma(\vec{x}) = \alpha(\vec{x})$, for
all $\vec{x} \in \dom{\beta}$, $\gamma(\vec{x}) = \beta(\vec{x})$, and
$\gamma(\vec{x})$ is undefined at any point $\vec{x} \in \mathbb{Z}^d
\backslash \left( \dom{\alpha} \cup \dom{\beta} \right)$.

An \emph{assembly} is a connected, non-empty configuration,
i.e., a partial function $\alpha:\mathbb{Z}^d \dashrightarrow T$ such that
$\bindinggraph_{\dom \alpha}$ is connected and $\dom \alpha \neq
\emptyset$. Given $\tau\in\mathbb{Z}^+$, $\alpha$ is \emph{$\tau$-stable} if every cut-set
of~$\bindinggraph_\alpha$ has weight at least $\tau$, where the weight
of an edge is the strength of the glue it represents.\footnote{A
  \emph{cut-set} is a subset of edges in a graph which, when removed
  from the graph, produces two or more disconnected subgraphs. The
  \emph{weight} of a cut-set is the sum of the weights of all of the
  edges in the cut-set.} When $\tau$ is clear from context, we say
$\alpha$ is \emph{stable}.  Given two assemblies $\alpha,\beta$, we
say $\alpha$ is a \emph{subassembly} of $\beta$, and we write $\alpha
\sqsubseteq \beta$, if $\dom\alpha \subseteq \dom\beta$ and, for all
points $\vec{p} \in \dom\alpha$, $\alpha(\vec{p}) = \beta(\vec{p})$.

A $d$-dimensional \emph{tile assembly system} (TAS) is a triple $\mathcal{T} =
(T,\sigma,\tau)$, where $T$ is a tile set, $\sigma:\mathbb{Z}^d
\dashrightarrow T$ is the finite, $\tau$-stable, \emph{seed assembly},
and $\tau\in\mathbb{Z}^+$ is the \emph{temperature}.

Given two $\tau$-stable assemblies $\alpha,\beta$, we write $\alpha
\to_1^{\mathcal{T}} \beta$ if $\alpha \sqsubseteq \beta$ and
$|\dom\beta \setminus \dom\alpha| = 1$. In this case we say $\alpha$
\emph{$\mathcal{T}$-produces $\beta$ in one step}. If $\alpha
\to_1^{\mathcal{T}} \beta$, $ \dom\beta \setminus
\dom\alpha=\{\vec{p}\}$, and $t=\beta(\vec{p})$, we write $\beta =
\alpha + (\vec{p} \mapsto t)$.  The \emph{$\mathcal{T}$-frontier} of
$\alpha$ is the set $\partial^\mathcal{T} \alpha = \bigcup_{\alpha
  \to_1^\mathcal{T} \beta} (\dom\beta \setminus \dom\alpha$), i.e.,
the set of empty locations at which a tile could stably attach to
$\alpha$. The \emph{$t$-frontier} of $\alpha$, denoted
$\partial^\mathcal{T}_t \alpha$, is the subset of
$\partial^\mathcal{T} \alpha$ defined as
$\setr{\vec{p}\in\partial^\mathcal{T} \alpha}{\alpha \to_1^\mathcal{T}
  \beta \text{ and } \beta(\vec{p})=t}.$

Let $\mathcal{A}^T$ denote the set of all assemblies of tiles from $T$, and let $\mathcal{A}^T_{< \infty}$ denote the set of finite assemblies of tiles from $T$.
A sequence of $k\in\mathbb{Z}^+ \cup \{\infty\}$ assemblies $\vec{\alpha} = \left (\alpha_0,\alpha_1,\ldots \right)$ over $\mathcal{A}^T$ is a \emph{$\mathcal{T}$-assembly sequence} if, for all $1 \leq i < k$, $\alpha_{i-1} \to_1^\mathcal{T} \alpha_{i}$.
The {\em result} of an assembly sequence $\vec{\alpha}$, denoted as $\textmd{res}(\vec{\alpha})$, is the unique limiting assembly (for a finite sequence, this is the final assembly in the sequence).

We write $\alpha \to^\mathcal{T} \beta$, and we say $\alpha$ \emph{$\mathcal{T}$-produces} $\beta$ (in 0 or more steps), if there is a $\mathcal{T}$-assembly sequence $\alpha_0,\alpha_1,\ldots$ of length $k = |\dom\beta \setminus \dom\alpha| + 1$ such that
%\begin{enumerate}
(1) $\alpha = \alpha_0$,
(2) $\dom\beta = \bigcup_{0 \leq i < k} \dom{\alpha_i}$, and
(3) for all $0 \leq i < k$, $\alpha_{i} \sqsubseteq \beta$.
%\end{enumerate}
If $k$ is finite then it is routine to verify that $\beta = \alpha_{k-1}$.

We say $\alpha$ is \emph{$\mathcal{T}$-producible} if $\sigma \to^\mathcal{T} \alpha$, and we write $\prodasm{\mathcal{T}}$ to denote the set of $\mathcal{T}$-producible assemblies. 

An assembly $\alpha$ is \emph{$\mathcal{T}$-terminal} if $\alpha$ is $\tau$-stable and $\partial^\mathcal{T} \alpha=\emptyset$. 
We write $\termasm{\mathcal{T}} \subseteq \prodasm{\mathcal{T}}$ to denote the set of $\mathcal{T}$-producible, $\mathcal{T}$-terminal assemblies. If $|\termasm{\mathcal{T}}| = 1$ then  $\mathcal{T}$ is said to be {\em directed}.

In general, a $d$-dimensional shape is a set $X \subseteq \mathbb{Z}^d$. We say that a TAS $\mathcal{T}$ \emph{self-assembles} $X$ if, for all $\alpha \in
\termasm{\mathcal{T}}$, $\dom\alpha = X$, i.e., if every terminal
assembly produced by $\mathcal{T}$ places a tile on every point in $X$
and does not place any tiles on points in $\mathbb{Z}^d \backslash\, X$. We say that a TAS $\mathcal{T}$ \emph{uniquely 
  self-assembles} $X$ if $\termasm{\mathcal{T}} = \{ \alpha \}$ and $\dom\alpha = X$.

In the spirit of \cite{RotWin00}, we define the \emph{tile complexity} of a shape $X$ at temperature $\tau$, denoted by $K^\tau_{SA}(X)$, as the minimum number of distinct tile types of any TAS in which it self-assembles, i.e., \\$K^\tau_{SA}(X) = \min \left\{ n   \; \left| \; \mathcal{T} = \left(T,\sigma,\tau\right), \left|T\right|=n \textmd{ and } X \textmd{ self-assembles in } \mathcal{T} \right.\right\}$. The \emph{directed tile complexity} of a shape $X$ at temperature $\tau$, denoted by $K^\tau_{USA}(X)$, is the minimum number of distinct tile types of any TAS in which it uniquely self-assembles, i.e., \\$K^\tau_{USA}(X) = \min \left\{ n   \; \left| \; \mathcal{T} = \left(T,\sigma,\tau\right), \left|T\right|=n \textmd{ and } X \textmd{ uniquely self-assembles in } \mathcal{T} \right.\right\}$.

%Full version
\section{Lower bound}
\label{sec:impossibility}

In this section, we prove Theorem~\ref{thm:one}, which is a lower bound on the tile complexity of 2D rectangles. So, going forward, let $k,N \in \mathbb{N}$. We say that $R^2_{k,N}$ is a 2D $k \times N$ \emph{rectangle} if $R^2_{k,N} = \{0,1, \ldots, N-1\} \times \{0,1,\ldots, k-1\}$. Throughout this section, we will denote $R^2_{k,N}$ as simply $R_{k,N}$. Our lower bound relies on the following observation regarding temperature-1 self-assembly.

\begin{observation}
\label{obs:simple}
If $\mathcal{T} = (T,\sigma,1)$ is a singly-seeded TAS in which some shape $X$ self-assembles and $\alpha \in
\mathcal{A}[\mathcal{T}]$ such that $\mathrm{dom}\ \alpha = X$, then $G^{\textmd{b}}_{\alpha}$ contains a simple path $s$
from the location of $\sigma$ to any location of $X$ and there is a corresponding (simple) assembly sequence $\vec{\alpha}$ that follows $s$ by placing tiles on and only on locations in $s$.
\end{observation}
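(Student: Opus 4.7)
The plan rests on two simple facts that together make the observation essentially a definition-unpacking exercise: since $\alpha$ is an assembly, $G^{\textrm{b}}_\alpha$ is connected by definition, and since $\tau = 1$, any single glue of positive strength suffices to attach a tile to an existing assembly. There is no cooperation to worry about, so any edge in the binding graph of $\alpha$ can be ``replayed'' as an attachment step in an assembly sequence, provided only that one of its endpoints is already present.

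First I would extract the path. Because the seed's location $\vec{v}_0 \in \mathrm{dom}\ \sigma \subseteq \mathrm{dom}\ \alpha = X$ and $G^{\textrm{b}}_\alpha$ is connected on $X$, for any target location $\vec{p} \in X$ there is a simple path $s = \vec{v}_0, \vec{v}_1, \ldots, \vec{v}_m = \vec{p}$ in $G^{\textrm{b}}_\alpha$. This handles the first half of the statement.

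Next I would build the corresponding assembly sequence inductively. Set $\alpha_0 = \sigma$ and, for $1 \le i \le m$, define $\alpha_i = \alpha_{i-1} + (\vec{v}_i \mapsto \alpha(\vec{v}_i))$. I must check that $\alpha_{i-1} \to_1^\mathcal{T} \alpha_i$, i.e., that $\alpha_i$ is $1$-stable. Since $\{\vec{v}_{i-1},\vec{v}_i\}$ is an edge of $G^{\textrm{b}}_\alpha$, the glues of $\alpha(\vec{v}_{i-1})$ and $\alpha(\vec{v}_i)$ on their abutting sides agree and have strength at least $1$. By the inductive hypothesis, $\alpha(\vec{v}_{i-1})$ already sits at $\vec{v}_{i-1}$ in $\alpha_{i-1}$, so the newly placed tile attaches with bond strength at least $\tau = 1$, making $\alpha_i$ (which is connected via its ancestor $\alpha_{i-1}$ plus this new bond) $1$-stable. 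By construction only locations on $s$ receive tiles, so the sequence is simple.

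The only place that requires mild care is that the new tile at $\vec{v}_i$ may happen to be adjacent to some earlier $\vec{v}_j$ ($j < i-1$) on the path, since $s$ is only guaranteed to be simple as a vertex sequence in $G^{\textrm{b}}_\alpha$, not geodesic in $\mathbb{Z}^d$. This causes no problem: the tiles at $\vec{v}_i$ and $\vec{v}_j$ are exactly $\alpha(\vec{v}_i)$ and $\alpha(\vec{v}_j)$, so whatever their abutting glues are in $\alpha$, the same glues meet here, and any extra binding only increases stability and never creates a conflict. This is precisely the step that would fail at higher temperatures, where an attachment might require the cooperation of a neighbor not yet placed along the path; at $\tau = 1$ it is a non-issue, which is the whole content of the observation.
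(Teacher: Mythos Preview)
Your argument is correct and is exactly the natural unpacking the paper has in mind; the paper states this as an Observation without proof, so there is nothing further to compare against. One cosmetic remark: you justify connectedness of $G^{\textrm{b}}_\alpha$ by saying ``since $\alpha$ is an assembly,'' but the cleaner reason here is that $\alpha$ is $1$-stable (being producible at $\tau=1$), and $1$-stability is precisely connectedness of the binding graph; this sidesteps any ambiguity in the paper's phrasing of the assembly definition.
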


Since, in Observation~\ref{obs:simple}, we do not necessarily assume that $\mathcal{T}$ uniquely produces $X$, there could be more than one assembly sequence for a given $s$. Throughout the rest of this section, unless stated otherwise, let $\mathcal{T} = (T,\sigma,1)$ be a singly-seeded TAS in which $R_{k,N}$ self-assembles.

\subsection{Window Movie Lemmas}

To prove our lower bound, we will use a variation of the Window Movie Lemma (WML) by Meunier, Patitz, Summers, Theyssier, Winslow and Woods and a corollary thereof. In this subsection, we review standard notation \cite{WindowMovieLemma} for and give the statements of the variation that we use in our lower bound proof.

A \emph{window} $w$ is a set of edges forming a cut-set of the full grid
graph of $\mathbb{Z}^d$. Given a window $w$ and an assembly $\alpha$, a window that {\em
  intersects} $\alpha$ is a partitioning of $\alpha$ into two
configurations (i.e., after being split into two parts, each part may
or may not be disconnected). In this case we say that the window $w$
cuts the assembly $\alpha$ into two configurations $\alpha_L$
and~$\alpha_R$, where $\alpha = \alpha_L \cup \alpha_R$. Given a window $w$, its translation by a vector $\vec{\Delta}$, written $ w
+ \vec{\Delta}$ is simply the translation of each one of $w$'s elements (edges)
by~$\vec{\Delta}$.

For a window $w$ and an assembly sequence $\vec{\alpha}$, we define a \emph{glue window movie}~$M$ to be the order of placement, position and glue type for each glue that appears along the window $w$ in $\vec{\alpha}$. Given an assembly sequence $\vec{\alpha}$ and a window $w$, the associated glue window movie is the maximal sequence $M_{\vec{\alpha},w} =  \left(\vec{v}_{1}, g_{1}\right), \left(\vec{v}_{2}, g_{2}\right), \ldots$ of pairs of grid graph vertices $\vec{v}_i$ and glues $g_i$, given by the order of the appearance of the glues along window $w$ in the assembly sequence $\vec{\alpha}$.
Furthermore, if $m$ glues appear along $w$ at the same instant (this happens upon placement of a tile which has multiple sides touching $w$) then these $m$ glues appear contiguously and are listed in lexicographical order of the unit vectors describing their orientation in $M_{\vec{\alpha},w}$. We use the notation $\mathcal{B}\left(M_{\vec{\alpha}, w}\right)$ to denote the \emph{bond-forming submovie} of $M_{\vec{\alpha},w}$, which consists of only those steps of $M_{\vec{\alpha},w}$ that place glues that eventually form positive-strength bonds in the assembly $\textmd{res}\left(\vec{\alpha}\right)$. We write $M_{\vec{\alpha}, w} + \vec{\Delta}$ to denote the translation of the glue window movie, that is $\left(\vec{v}_{1}+\vec{\Delta}, g_{1}\right), \left(\vec{v}_{2}+\vec{\Delta}, g_{2}\right), \ldots$.

Let $w$ be a window that partitions~$\alpha$ into two configurations~$\alpha_L$ and $\alpha_R$, and assume $w + \vec{\Delta}$ partitions~$\beta$ into two configurations $\beta_L$ and $\beta_R$.
Assume that $\alpha_L$, $\beta_L$ are the sub-configurations of $\alpha$ and $\beta$ containing the seed tile of $\alpha$ and $\beta$, respectively. 

\begin{lemma}[Standard Window Movie Lemma]\label{lem:wml}
If $\mathcal{B}\left(M_{\vec{\alpha},w}\right) = \mathcal{B}\left(M_{\vec{\beta},w+\vec{\Delta}}\right) - \vec{\Delta}$, then the following two assemblies are producible: $\alpha_L \left(\beta_R - \vec{\Delta}\right) = \alpha_L \cup \left(\beta_R - \vec{\Delta}\right)$ and $\beta_L\left(\alpha_R + \vec{\Delta}\right) = \beta_L \cup \left(\alpha_R + \vec{\Delta}\right)$.
\end{lemma}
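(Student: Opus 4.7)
The plan is to prove the lemma constructively by exhibiting a producing assembly sequence for each of the two claimed assemblies. I will describe the construction for $\alpha_L \cup (\beta_R - \vec{\Delta})$; the case of $\beta_L \cup (\alpha_R + \vec{\Delta})$ is completely symmetric, with the roles of $\alpha$ and $\beta$ swapped and $\vec{\Delta}$ replaced by $-\vec{\Delta}$. Fix assembly sequences $\vec{\alpha}$ and $\vec{\beta}$ that $\mathcal{T}$-produce $\alpha$ and $\beta$, respectively. The overall idea is to interleave $\vec{\alpha}$, restricted to tile placements whose positions lie in $\dom \alpha_L$, with the translated sequence $\vec{\beta} - \vec{\Delta}$, restricted to placements whose positions lie in $\dom (\beta_R - \vec{\Delta})$. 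Since the seed of $\alpha$ lies in $\alpha_L$, the new sequence can begin from that seed. Overlaps between $\alpha_L$ and $\beta_R - \vec{\Delta}$ are impossible because $w$ is a cut-set of the full grid graph, so $\alpha_L$ lies strictly on the seed side of $w$ while $\beta_R - \vec{\Delta}$ lies strictly on the opposite side.

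The interleaving is driven by the bond-forming submovies. Walk through $\vec{\alpha}$ in its native order; whenever the next event places a tile whose position is in $\dom \alpha_L$, append that placement to the new sequence. Whenever the next event in $\vec{\alpha}$ contributes a bond-forming glue to $\mathcal{B}(M_{\vec{\alpha},w})$ from the $\alpha_R$ side, pause and advance the pointer in $\vec{\beta} - \vec{\Delta}$ so as to place exactly those tiles from $\beta_R - \vec{\Delta}$ whose translated bond-forming events in $\mathcal{B}(M_{\vec{\beta},w+\vec{\Delta}}) - \vec{\Delta}$ fall at or before the current position of the $\vec{\alpha}$-pointer inside the submovie. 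Events in $\vec{\alpha}$ that place $\alpha_R$-tiles not touching $w$, and symmetric events in $\vec{\beta}$ that place $\beta_L$-tiles not touching $w + \vec{\Delta}$, are simply skipped. Because $\mathcal{B}(M_{\vec{\alpha},w}) = \mathcal{B}(M_{\vec{\beta},w+\vec{\Delta}}) - \vec{\Delta}$, the two submovies have identical length, identical positions and identical glues, so this synchronization is well-defined.

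The main obstacle is verifying that every tile placement in the interleaved sequence is \emph{valid} at temperature $1$, meaning that at the moment it is added, the new tile shares at least one strength-$1$ glue match with an already-placed neighbor. If a tile $t \in \alpha_L$ was originally attached in $\vec{\alpha}$ by bonding to a neighbor in $\alpha_L$, that neighbor has by construction been placed earlier in the new sequence, and validity is inherited directly from $\vec{\alpha}$. If $t \in \alpha_L$ was originally attached by a bond across $w$ to a neighbor in $\alpha_R$, then the scheduling of bond-forming events ensures that the matching glue from $\mathcal{B}(M_{\vec{\beta},w+\vec{\Delta}}) - \vec{\Delta}$ has already been placed on the $(\beta_R - \vec{\Delta})$-side of $w$ before $t$ is attached; by the submovie equality, this translated $\beta$-glue equals the $\alpha_R$-glue that $t$ originally bonded to in $\vec{\alpha}$, so a valid bond still exists. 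The symmetric argument handles tiles in $\beta_R - \vec{\Delta}$ whose original $\vec{\beta}$-attachment crossed $w + \vec{\Delta}$, and all internal $\beta_R$-only bonds carry over under translation.

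Once the interleaved sequence is shown to be a valid $\mathcal{T}$-assembly sequence, its result is exactly $\alpha_L \cup (\beta_R - \vec{\Delta})$, establishing producibility. Repeating the construction with $\vec{\beta}$ as the seed-bearing sequence and $\vec{\alpha} + \vec{\Delta}$ providing the complementary side yields the analogous sequence for $\beta_L \cup (\alpha_R + \vec{\Delta})$. I anticipate the primary technical work to be a careful bookkeeping of the synchronization between the two bond-forming submovies near $w$; the geometric separation and the glue-matching argument are straightforward, but writing the interleaving so that no tile is placed before its supporting glue has appeared on the far side of the window is the part that requires the most care.
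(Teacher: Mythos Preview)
The paper does not supply its own proof of Lemma~\ref{lem:wml}; it is quoted as the Standard Window Movie Lemma from \cite{WindowMovieLemma} and used as a black box (the paper only remarks that the proof of Corollary~\ref{lem:rwml} is ``identical'' to that of Lemma~\ref{lem:wml}). Your interleaving argument is precisely the standard proof from that reference: build a hybrid assembly sequence by splicing the $\alpha_L$-placements of $\vec{\alpha}$ with the translated $\beta_R$-placements of $\vec{\beta}$, synchronized through the common bond-forming submovie, and verify that every cross-window attachment remains $\tau$-stable because the matching glue has already been placed on the far side. So your approach is correct and coincides with the proof the paper is implicitly invoking.

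Two small remarks. First, you phrase the stability check as ``valid at temperature~$1$''; the lemma (and the original proof) hold at arbitrary temperature~$\tau$, and your argument goes through unchanged since the bond-forming submovie records \emph{all} bond-forming glues across $w$, so the full strength-$\tau$ attachment of any boundary tile is reproduced. Second, your scheduling rule (``pause when an $\alpha_R$-side bond-forming event appears and advance $\vec{\beta}$'') is slightly asymmetric as written; the cleanest formulation, and the one in \cite{WindowMovieLemma}, is to walk the common submovie $\mathcal{B}(M_{\vec{\alpha},w})$ event by event and, for each event, flush the pending placements from whichever of $\vec{\alpha}$ or $\vec{\beta}$ contributes that glue. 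This guarantees symmetrically that a $\beta_R$-tile needing a cross-window bond finds its $\alpha_L$-partner already present. Your informal description is close enough that this is only a matter of presentation.
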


By Observation~\ref{obs:simple}, there exist simple assembly sequences $\vec{\alpha} = \left(\alpha_i \mid 0 \leq i < l\right)$ and $\vec{\beta} = \left(\beta_i \mid 0 \leq i < m\right)$, with $l,m\in\mathbb{Z}^+ $ that place tiles along simple paths $s$ and $s'$, respectively, leading to results $\alpha$ and $\beta$, respectively. The notation $M_{\vec{\alpha}, w} \upharpoonright s$ represents the \emph{restricted} glue window submovie (\emph{restricted to} $s$), which consists of only those steps of $M$ that place glues that eventually form positive-strength bonds along $s$.

\begin{corollary}[Restricted Window Movie Lemma]\label{lem:rwml}
If\\$M_{\vec{\alpha},w} \upharpoonright s  = \left( M_{\vec{\beta},w+\vec{\Delta}} \upharpoonright s' \right) - \vec{\Delta}$, then the following two assemblies are producible: $\alpha_L \left(\beta_R - \vec{\Delta}\right) = \alpha_L \cup \left(\beta_R - \vec{\Delta}\right)$ and $\beta_L\left(\alpha_R + \vec{\Delta}\right) = \beta_L \cup \left(\alpha_R + \vec{\Delta}\right)$.
\end{corollary}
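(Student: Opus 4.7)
The plan is to derive the Restricted Window Movie Lemma as an essentially immediate consequence of the Standard Window Movie Lemma (Lemma~\ref{lem:wml}), once applied to the simple assembly sequences $\vec{\alpha}$ and $\vec{\beta}$ guaranteed by Observation~\ref{obs:simple}. The key conceptual point is that, for a simple assembly sequence placing tiles on and only on a simple path, the restricted glue window submovie and the bond-forming glue window submovie coincide.

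First I would unpack what it means for $\vec{\alpha}$ to be a simple assembly sequence whose placements lie along $s$: the result $\alpha = \mathrm{res}(\vec{\alpha})$ has $\mathrm{dom}\,\alpha = s$, and since $\alpha$ is an assembly, every positive-strength bond in $\alpha$ is an edge of $\bindinggraph_\alpha$, hence has both endpoints in $s$. Therefore the set of steps of $M_{\vec{\alpha},w}$ that place glues eventually forming positive-strength bonds in $\alpha$ is exactly the set of steps that place glues eventually forming positive-strength bonds along $s$. In other words, $\mathcal{B}\left(M_{\vec{\alpha},w}\right) = M_{\vec{\alpha},w} \upharpoonright s$. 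The same argument applied to $\vec{\beta}$ (with $s'$ in place of $s$ and $w+\vec{\Delta}$ in place of $w$) yields $\mathcal{B}\left(M_{\vec{\beta},w+\vec{\Delta}}\right) = M_{\vec{\beta},w+\vec{\Delta}} \upharpoonright s'$.

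Next I would substitute these two identities into the hypothesis of the corollary, $M_{\vec{\alpha},w} \upharpoonright s = \left(M_{\vec{\beta},w+\vec{\Delta}} \upharpoonright s'\right) - \vec{\Delta}$, to obtain $\mathcal{B}\left(M_{\vec{\alpha},w}\right) = \mathcal{B}\left(M_{\vec{\beta},w+\vec{\Delta}}\right) - \vec{\Delta}$. Since $\alpha$ and $\beta$ are themselves $\mathcal{T}$-producible assemblies (witnessed by the simple assembly sequences $\vec{\alpha}$ and $\vec{\beta}$), since $w$ is a cut-set of the full grid graph partitioning $\alpha$ into $\alpha_L, \alpha_R$ (with the seed in $\alpha_L$), and since $w+\vec{\Delta}$ analogously partitions $\beta$ into $\beta_L, \beta_R$, the entire hypothesis of Lemma~\ref{lem:wml} is satisfied. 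Invoking Lemma~\ref{lem:wml} directly gives that $\alpha_L \cup (\beta_R - \vec{\Delta})$ and $\beta_L \cup (\alpha_R + \vec{\Delta})$ are producible, which is exactly the conclusion of the corollary.

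I do not expect a genuine obstacle here, since all of the real work is already done by Lemma~\ref{lem:wml}; the corollary is best viewed as a convenient repackaging. The one point that warrants a careful sentence is the reduction $\mathcal{B}(M_{\vec{\alpha},w}) = M_{\vec{\alpha},w} \upharpoonright s$, which relies crucially on $\mathrm{dom}\,\alpha = s$ — a property that holds precisely because Observation~\ref{obs:simple} hands us a \emph{simple} assembly sequence tracking the path $s$ rather than an arbitrary assembly sequence building the entire terminal assembly. This is exactly why the restriction trick is useful for temperature-$1$ arguments, where Observation~\ref{obs:simple} is available.
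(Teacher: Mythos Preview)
Your reduction to Lemma~\ref{lem:wml} as a black box is elegant, but the equality $\mathcal{B}(M_{\vec{\alpha},w}) = M_{\vec{\alpha},w}\upharpoonright s$ need not hold. The restricted submovie keeps only those glues lying on \emph{edges of the path $s$} (this is how the paper subsequently uses it, counting crossing edges of $s$), not merely glues between vertices of $s$. Even when $\dom\alpha = s$, two non-consecutive path vertices can be grid-adjacent with matching glues across $w$; for instance, take the path $(0,0)\to(0,1)\to(1,1)\to(1,0)$ with $w$ between columns $0$ and $1$, and suppose the east glue at $(0,0)$ matches the west glue at $(1,0)$. That bond contributes two steps to $\mathcal{B}(M_{\vec{\alpha},w})$ but none to $M_{\vec{\alpha},w}\upharpoonright s$. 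Hence the restricted hypothesis of the corollary does not in general imply the bond-forming hypothesis of Lemma~\ref{lem:wml}, and the black-box invocation breaks.

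The paper sidesteps this by \emph{re-running} the proof of Lemma~\ref{lem:wml} rather than invoking the lemma itself: since $\tau=1$ and $\vec{\alpha}$ follows $s$, every tile placed by $\vec{\alpha}$ attaches via a single path-edge glue, so glues not along $s$ are unnecessary for producing $\alpha$ (and likewise for $\beta$ and $s'$). The splicing construction inside the WML proof therefore goes through verbatim using only the path-edge glues that cross the window, and those are exactly what the restricted hypothesis controls. Your underlying intuition---that for a simple path assembly the restricted movie carries all the information the WML argument needs---is correct; the repair is simply to open up the WML proof rather than treat it as opaque.
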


The proof of Corollary~\ref{lem:rwml} is identical to that of Lemma~\ref{lem:wml} and therefore is omitted. The proof is identical because $\vec{\alpha}$ $\left(\vec{\beta}\right)$ follows $s$ ($s'$) and $\tau=1$, so glues that do not follow $s$ ($s'$) are not necessary for the self-assembly of $\alpha$ ($\beta$). Note that Lemma~\ref{lem:wml} and Corollary~\ref{lem:rwml} both generalize to 3D. See Figure~\ref{fig:glue_movie_window_examples} for examples of $M_{\vec{\alpha},w}$, $\mathcal{B}\left( M_{\vec{\alpha},w} \right)$ and $M_{\vec{\alpha}, w} \upharpoonright s$. We now turn our attention to counting the number of restricted glue window submovies.

\begin{figure}
    \centering

    \begin{subfigure}[t]{0.15\textwidth}
        \centering
        \includegraphics[width=.75in]{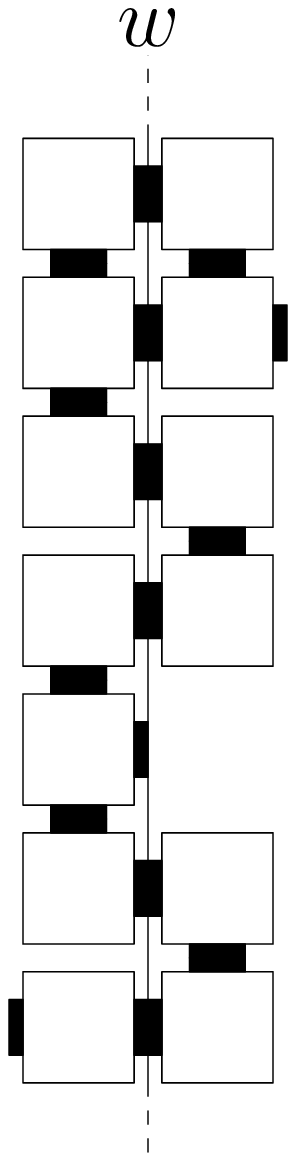}
        \caption{\label{fig:glue_movie_window_example_assembly} A subassembly of $\alpha$ and a window $w$ induced by a translation of the $y$-axis.}
    \end{subfigure}%
    ~ \hspace{15pt}
    \begin{subfigure}[t]{0.15\textwidth}
        \centering
        \includegraphics[width=.75in]{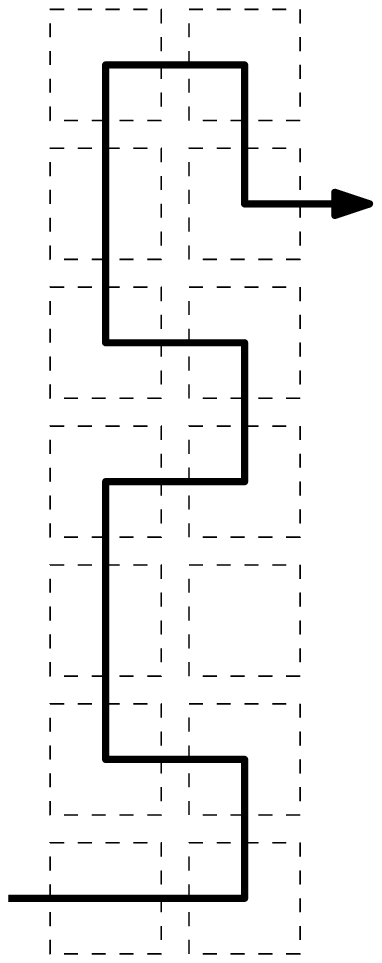}
        \caption{\label{fig:glue_movie_window_example_just_path} A portion of the simple path $s$ through $G^b_{\alpha}$.}
    \end{subfigure}%
   ~ \hspace{15pt}
    \begin{subfigure}[t]{0.15\textwidth}
        \centering
        \includegraphics[width=.75in]{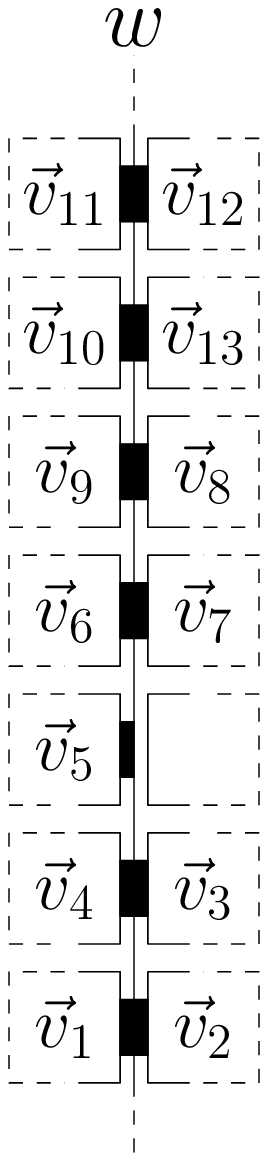}

        \caption{\label{fig:glue_movie_window_example_full} The glue window movie $M_{\vec{\alpha},w}$ }
    \end{subfigure}%
    ~ \hspace{15pt}
    \begin{subfigure}[t]{0.15\textwidth}
        \centering
        \includegraphics[width=.75in]{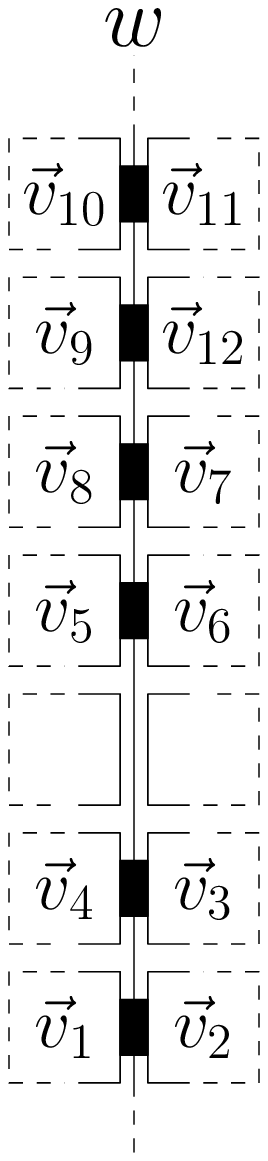}
        \caption{\label{fig:glue_movie_window_example_bond_forming_submovie} The bond-forming submovie $\mathcal{B}\left( M_{\vec{\alpha},w} \right)$ }
    \end{subfigure}%
    ~ \hspace{15pt}
    \begin{subfigure}[t]{0.15\textwidth}
        \centering
        \includegraphics[width=.75in]{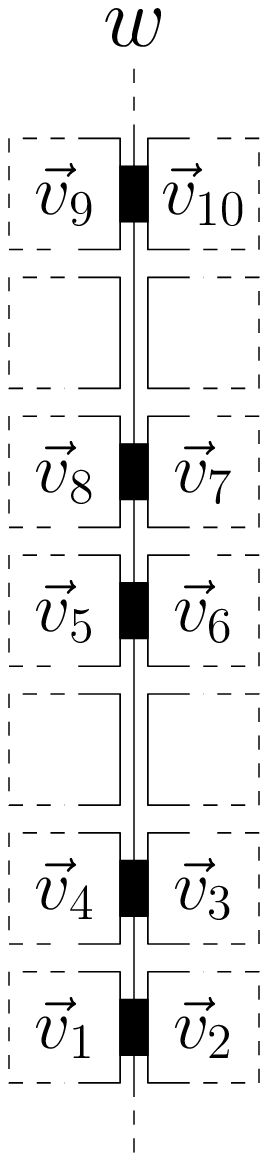}
        \caption{\label{fig:glue_movie_window_example_restricted} The restricted glue window submovie $M_{\vec{\alpha},w} \upharpoonright s$ }
    \end{subfigure}
    \caption{\label{fig:glue_movie_window_examples} An assembly, a simple path and the various types of glue window movies. }
    
   % \vspace{-10pt}
\end{figure}

\subsection{Counting procedure for undirected self-assembly in 2D}
\label{sec:counting_procedure}

In this subsection, we develop a counting procedure that we will use to obtain an upper bound on the number of distinct restricted glue window submovies. 

Let $\rho \in \mathcal{A}_\Box\left[ \mathcal{T} \right]$. By Observation~\ref{obs:simple}, there exists a simple path $s$ in $G^{\textmd{b}}_{\rho}$
from the location of $\sigma$ to any location in an extreme (i.e., leftmost or rightmost) column
of $R_{k,N}$. For the remainder of this section, unless stated otherwise, let $\vec{\alpha}$ denote any (simple) assembly sequence that follows a simple path from the location of $\sigma$ to some location in the furthest extreme column of $R_{k,N}$. So, if $\sigma$ is in the left half of $R_{k,N}$, then $s$ will go from $\sigma$ to some location in the rightmost column of $R_{k,N}$. If $\sigma$ is in the right half of $R_{k,N}$, then $s$ will go from $\sigma$ to some location in the leftmost column of $R_{k,N}$. If $\sigma$ is in the middle column of $R_{k,N}$, then $s$ can go to either the leftmost or rightmost column of $R_{k,N}$. 

Index the columns of $R_{k,N}$ from 1 (left) to $N$ (right). Assume $c_\sigma$ is the index of the column in which the seed is contained. 
Consider two consecutive columns, with indices $c_0$ and $c_1$, satisfying $\left| c_\sigma - c_0 \right| < \left| c_\sigma - c_1 \right|$, and such that either $c_0$ or $c_1$ (or both) are in between $c_{\sigma}$ and the furthest (from $\sigma$) extreme column. 
Since $s$ is a simple path from the location of $\sigma$ to some location in the extreme column of $R_{k,N}$ that is furthest from $\sigma$, $s$ crosses between $c_0$ and $c_1$ through $e$ \emph{crossing edges} in $G^{\textmd{b}}_{\rho}$, where $1 \leq e \leq k$ and $e$ is odd, visiting a total of $2e$ endpoints.  
The endpoint of a crossing edge in column $c_0$ ($c_1$) is its \emph{near} (far) endpoint. A crossing edge points \emph{away from} (towards) the seed if its near endpoint is visited first (second).

Observe that the first and last crossing edges visited by $s$ must point away from the seed but each crossing edge that points away from the seed (except the last crossing edge) is immediately followed by a corresponding crossing edge that points towards the seed, when skipping the part of $s$ that connects them without going through another crossing edge in between them. Assume that the rows of $R_{k,N}$ are assigned an index from $1$ (top) to $k$ (bottom). Let $E \subseteq \{1,\ldots, k\}$ be such that $|E| = e$ and $f \in E$ and $l \in E$ be the row indices of the first and last crossing edges visited, respectively. We define a \emph{near} (far) \emph{crossing pairing over} $E$ \emph{starting at} $f \in E$ (ending at $l \in E$) as a set of $p = \frac{e-1}{2}$ non-overlapping pairs of elements in $E\backslash \{f\}$ ($E\backslash\{l\}$), where each pair contains (the row indices of) one crossing edge pointing away from the seed and its corresponding crossing edge pointing towards the seed. See Figure~\ref{fig:counting_proc_good} for examples of the previous definitions. For counting purposes pertaining to a forthcoming argument, we establish an injective mapping
from near crossing pairings over $E$ to strings of balanced parentheses of length $e - 1$. 

\begin{lemma}
\label{lem:near-crossing-pairing-to-parens}
There exists an injective function from the set of all near crossing pairings over $E$ starting at $f$ into the set of all strings of $2p$ balanced parentheses.
\end{lemma}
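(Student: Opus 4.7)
My plan is to encode each near crossing pairing as a balanced parenthesis string, with the encoding determined by the linear order of the row indices involved. List the elements of $E \setminus \{f\}$ in increasing row-index order as $r_1 < r_2 < \cdots < r_{2p}$, and for a near crossing pairing $\pi$, define a length-$2p$ parenthesis string $\phi(\pi)$ by the following rule: the $i$-th character of $\phi(\pi)$ is an open parenthesis if $r_i$ is paired in $\pi$ with some $r_j$ satisfying $j > i$, and a close parenthesis otherwise.

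I would then verify that $\phi(\pi)$ is always a balanced parenthesis string. Each pair $\{r_a, r_b\}$ with $a < b$ contributes exactly one open parenthesis (at position $a$) and one close parenthesis (at position $b$), so the total counts of each kind are both $p$. For any prefix of length $m$, the difference between the number of open and close parentheses equals the number of pairs with exactly one element at a position $\leq m$, which is clearly nonnegative; hence $\phi(\pi)$ is balanced.

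For injectivity, I would use the ``non-overlapping'' hypothesis, interpreted (and justified from the planarity of the underlying simple path $s$ and the vertex-disjointness of its far-side excursions) as asserting that the pairs, viewed as arcs joining points on a line, do not cross one another---equivalently, $\pi$ is a non-crossing perfect matching on $E \setminus \{f\}$. The standard parenthesis-matching rule (pair each close parenthesis with the most recent unmatched open parenthesis) then recovers $\pi$ from $\phi(\pi)$; this can be verified by induction on $p$, using the fact that in a non-crossing matching the partner $r_j$ of $r_1$ forces $r_2, \ldots, r_{j-1}$ to be matched among themselves, yielding two shorter balanced substrings on which the induction hypothesis applies.

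The main obstacle is the precise formulation of ``non-overlapping''. Specifically, I need to translate the geometric disjointness of the path's far-side excursions into the combinatorial non-crossing condition on matchings, and I should emphasize that this condition is essential: $\phi$ fails to be injective on arbitrary perfect matchings, since a crossing matching and its nested counterpart on the same four points yield the same parenthesis string. Once the non-crossing formulation is established, the remainder is the classical bijection between non-crossing perfect matchings on $2p$ linearly ordered points and balanced parentheses of length $2p$, which immediately yields the claimed injection.
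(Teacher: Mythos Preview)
Your approach is essentially the paper's: the encoding $\phi$ is exactly the map the paper defines, and both arguments hinge on showing that the pairs induced by the simple path $s$ are non-crossing (whence the standard correspondence with balanced-parenthesis strings yields injectivity), with non-crossing derived from planarity via a Jordan-curve style argument in the half-plane on one side of the cut. Two minor clarifications: in the paper, ``non-overlapping'' in the definition of a near crossing pairing only means the pairs are pairwise disjoint (i.e., they form a perfect matching), so non-crossing is a consequence to be \emph{proved} from the geometry of $s$ rather than a hypothesis to be interpreted; and for a \emph{near} crossing pairing the excursions of $s$ that connect the paired crossing edges lie on the \emph{near} side (the half-plane containing the seed), not the far side.
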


\begin{proof}
Given a near crossing pairing $P$ with $p$ pairs, build a string $x$ with $2p$
characters indexed from 1 to $2p$ going from left to right, as
follows. For each element $\{a,b\}$ of $P$, with $a<b$ and assuming $a$ is the $i$-th lowest row index and $b$ is the $j$-th lowest row index in $E \backslash \{f\}$, place a left parenthesis at index $i$ and a right parenthesis at index $j$.

The resulting string $x$ contains
exactly $p$ pairs of parentheses. Furthermore, all of the parentheses
in $x$ are balanced because each opening parenthesis appears to the
left of its closing parenthesis (thanks to the indexing used in the
string construction algorithm just described) and, for any two pairs
of parentheses in the string, it must be the case that either 1) they do not
overlap (i.e., the closing parenthesis of the leftmost pair is
positioned to the left of the opening parenthesis of the rightmost
pair) or 2) one pair is nested inside the other (i.e., the interval
defined by the indices of the nested pair is included in the interval
defined by the indices of the outer pair). The other case is
impossible, that is, when the two pairs $\{a,b\}$ and $\{c,d\}$ are
such that $a < c < b < d$ because it would be impossible for any path
crossing consecutive columns according to $P$ to be simple. Consider any simple
path $\pi_{\{a,b\}}$ that links crossing edges $a$ and $b$ without going through another crossing edge in between them. Since $P$ is a near crossing pairing, $\pi_{\{a,b\}}$ is fully
contained in the half-plane $H_0$ on the near side of $c_0$ toward the seed. This
path partitions $H_0$ into two spaces. Since the crossing edges $c$ and $d$ belong to different components of this
partition, any simple path linking these two crossing edges must
cross $\pi_{\{a,b\}}$. Therefore, $s$, crossing between $c_0$ and $c_1$, could not have been simple.

Finally, note that two different near crossing pairings $P_1$ and $P_2$ over $E$ starting at $f$ will map to two different strings of balanced parentheses, so the mapping is injective.
\qed
\end{proof}

\begin{corollary}
\label{lem:far-crossing-pairing-to-parens}
There exists an injective function from the set of all far crossing pairings over $E$ ending at $l$ into the set of all strings of $2p$ balanced parentheses.
\end{corollary}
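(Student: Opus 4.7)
The plan is to mirror the proof of Lemma~\ref{lem:near-crossing-pairing-to-parens} with the roles of $c_0$ and $c_1$ reversed, exploiting the symmetry between ``near'' and ``far'' crossing pairings. Given a far crossing pairing $P$ over $E$ ending at $l$, containing $p = \frac{e-1}{2}$ pairs from $E \setminus \{l\}$, I would index the $2p$ elements of $E \setminus \{l\}$ from $1$ to $2p$ in increasing order of row index, and for each pair $\{a,b\} \in P$ with $a < b$, place a left parenthesis at the position of $a$ and a right parenthesis at the position of $b$. This defines a function from far crossing pairings over $E$ ending at $l$ to strings of $2p$ parentheses; injectivity is immediate from the fact that the string determines, for each left parenthesis, the unique right parenthesis it pairs with and hence recovers $P$.

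The main step is to verify that the resulting string is balanced. Each left parenthesis precedes its matching right parenthesis by construction, so it remains to rule out the ``interleaving'' configuration in which two pairs $\{a,b\}$ and $\{c,d\}$ in $P$ satisfy $a < c < b < d$. This is exactly where the geometric content of the argument is needed, and it is the step to be careful about.

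The hard part, therefore, is adapting the planarity argument of Lemma~\ref{lem:near-crossing-pairing-to-parens} so that it operates on the correct side of the cut. Since $P$ is a \emph{far} crossing pairing, the portions of $s$ connecting the far endpoints of each paired crossing edge without passing through any other crossing edge lie entirely in the half-plane $H_1$ on the far side of $c_1$ (away from the seed), rather than in $H_0$. Consider such a simple subpath $\pi_{\{a,b\}} \subseteq H_1$ joining the far endpoints of the crossing edges indexed by $a$ and $b$. Then $\pi_{\{a,b\}}$, together with the segment of $c_1$ between rows $a$ and $b$, forms a simple closed curve in the plane that partitions $H_1$ into two regions, with the far endpoints of crossing edges $c$ and $d$ lying in different regions (one inside and one outside). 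Any simple path in $H_1$ linking those two endpoints must therefore cross $\pi_{\{a,b\}}$, contradicting the simplicity of $s$. This rules out the interleaving case and shows the constructed string is balanced.

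Once this geometric observation is in place, injectivity and the balanced-parentheses property together yield the corollary, and the entire argument is a direct mirror of the proof of Lemma~\ref{lem:near-crossing-pairing-to-parens}, with $H_0$ replaced by $H_1$ and the starting element $f$ replaced by the ending element $l$.
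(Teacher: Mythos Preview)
Your proposal is correct and is exactly the argument the paper has in mind: the corollary is stated without proof precisely because it follows by the evident symmetry with Lemma~\ref{lem:near-crossing-pairing-to-parens}, replacing $f$ by $l$, the near side $H_0$ by the far side $H_1$, and near endpoints by far endpoints. Your handling of the interleaving case via planarity in $H_1$ is the right adaptation, and injectivity follows just as in the lemma.
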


\begin{figure}
    \centering

    \begin{subfigure}[t]{0.48\textwidth}
        \centering
        \includegraphics[width=2.2in]{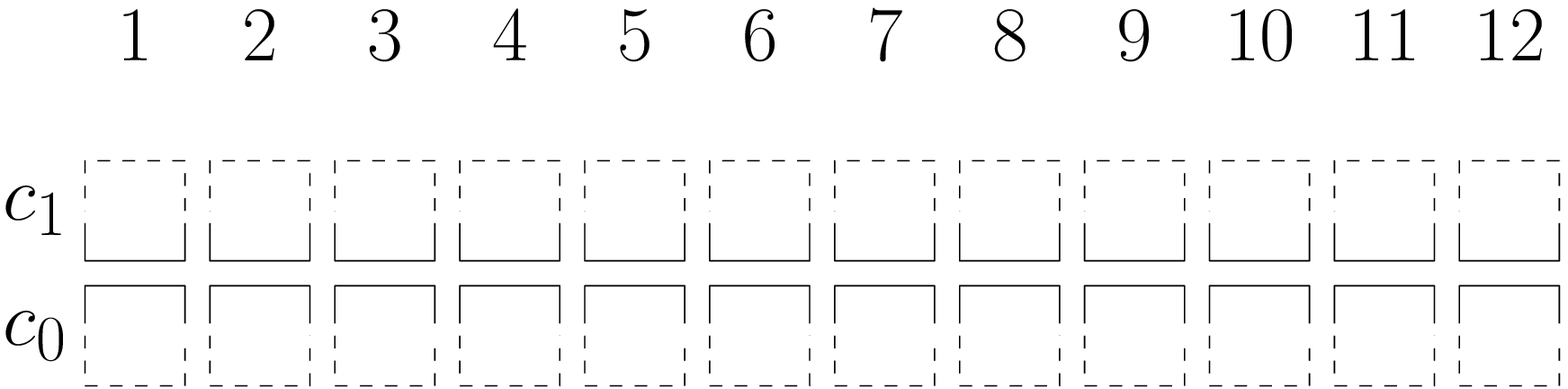}
        \caption{\label{fig:counting_proc_0}Two consecutive columns, for ``height'' $k=12$, rotated 90 degrees counter-clockwise. We show the columns this way for ease of examining the corresponding strings of balanced parentheses.}
    \end{subfigure}%
    ~
    \begin{subfigure}[t]{0.48\textwidth}
        \centering
        \includegraphics[width=2.2in]{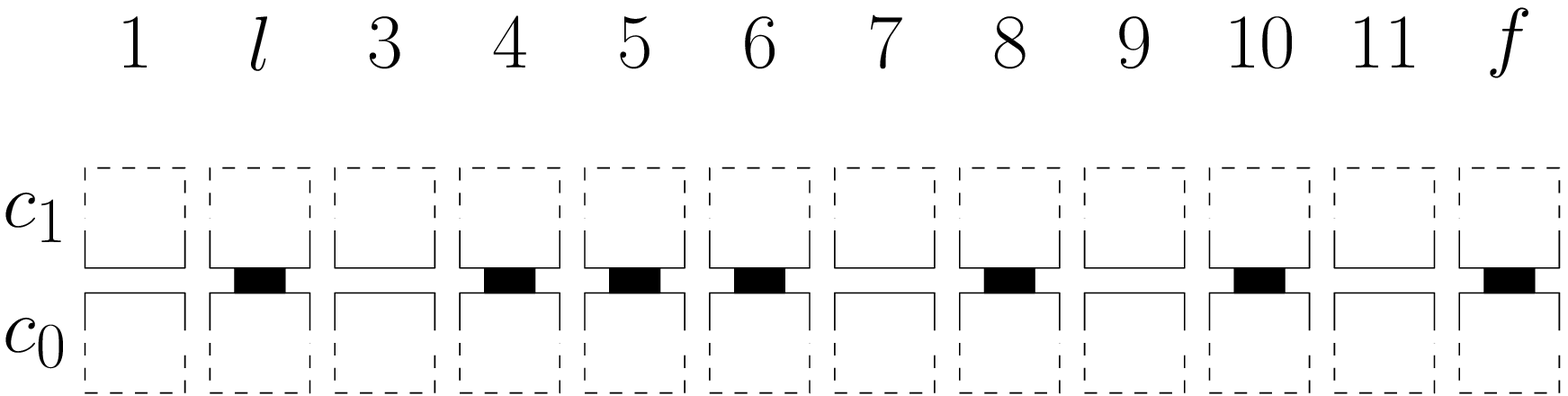}
        \caption{\label{fig:counting_proc_steps_1_2_3}Steps 1, 2 and 3. Here, $f=12$, $l=2$ and $E = \{2, 4, 5, 6, 8, 10, 12\}$, with chosen glues indicated by the little black rectangles.}
    \end{subfigure}%
    \\
   \vspace{10pt}
    \begin{subfigure}[t]{0.48\textwidth}
        \centering
        \includegraphics[width=2.2in]{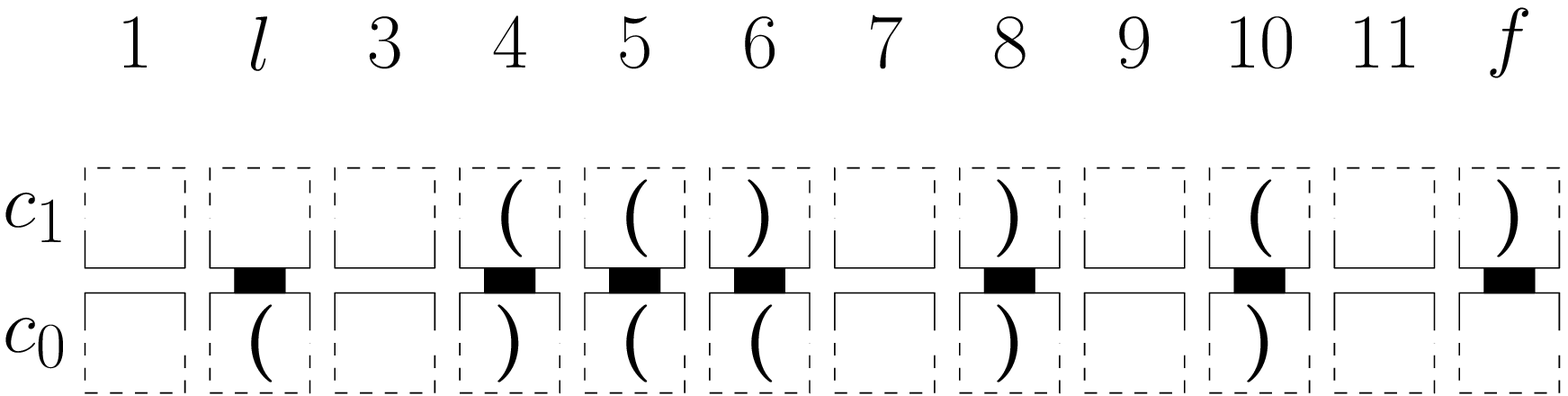}

        \caption{\label{fig:counting_proc_steps_4_5}Steps 4 and 5. Here, $x_0 = ()(())$ and $x_1 = (())()$.}
    \end{subfigure}%
    ~
    \begin{subfigure}[t]{0.48\textwidth}
        \centering
        \includegraphics[width=2.2in]{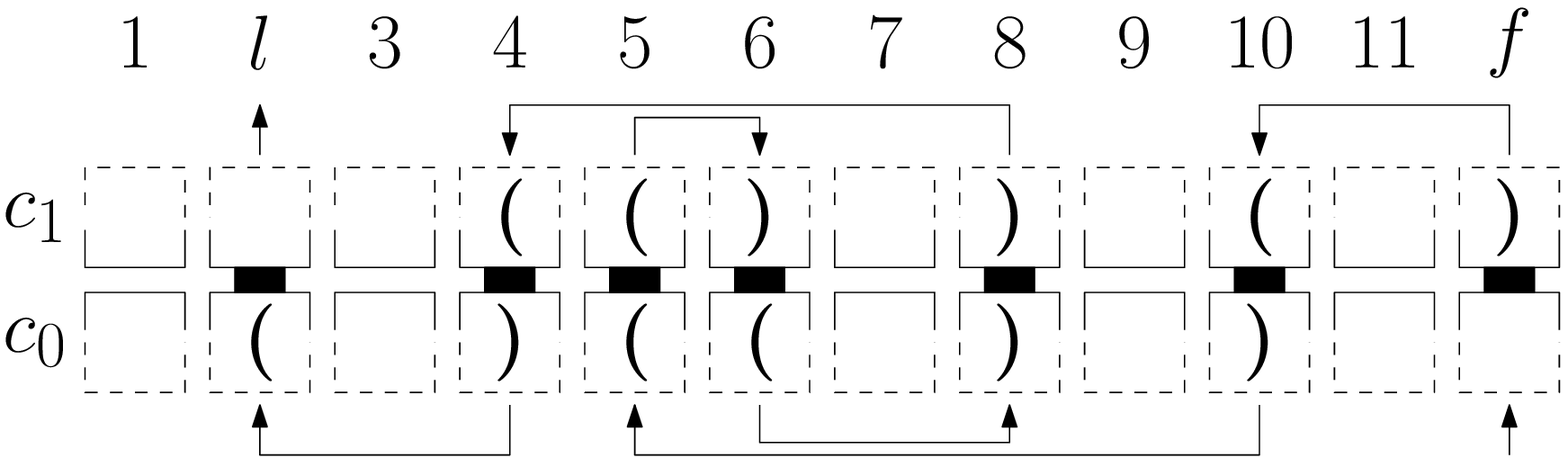}
        \caption{\label{fig:counting_proc_step_6}The $\pi$ with $|\pi|=14=2e$ that is found by the algorithm in Step 6. }
    \end{subfigure}
    \caption{\label{fig:counting_proc_good}A ``good'' sample run of the counting procedure.}
    
    \vspace{-20pt}
\end{figure}

The following is a systematic
procedure for upper bounding the number of ways to select and order the $2e$ endpoints of the $e$ crossing edges between an arbitrary pair of consecutive columns $c_0$ and $c_1$ (see Figure~\ref{fig:counting_proc_0}). Obviously, the number of ways to do this is less than or equal to $\binom{k}{e} (2e)!$. We will use crossing pairings and Catalan numbers to reduce this upper bound.

\begin{enumerate}
	%\vspace{-10pt}
		\item  Choose the set $E$ of row indices of the $e$ crossing edges, out of $k$ possible edges between consecutive columns. There are $\binom{k}{e}$ ways to do this.

    \item One of the crossing edges must be first, so, choose the first crossing edge $f$. There are $e$ ways to do this.

        \item One of the crossing edges must be last, so, choose the last crossing edge $l$. There are $e$ ways to do this.
        
        %\vspace{-10pt}
\end{enumerate}

The three previous steps are depicted in Figure~\ref{fig:counting_proc_steps_1_2_3}. We purposely allow choosing $l=f$ because our intention is to upper bound the number of ways to select and order the endpoints of the crossing edges. Moreover, $l=f$ when $e=1$. Denote as $\mathcal{C}_p = \frac{1}{p+1} \binom{2p}{p}$ the $p^{th}$ Catalan number (indexed starting at 0).

\begin{enumerate}

\setcounter{enumi}{3}

		\item For a given pair of consecutive columns, in which $f$ is visited first, $s$ induces a near crossing pairing over $E$ starting at $f$, where the elements of each pair are row indices of near endpoints of crossing edges, including $l$, but not $f$. By Lemma~\ref{lem:near-crossing-pairing-to-parens}, it suffices to count the number of ways to choose a string of $2p$ balanced parentheses. Therefore, choose a string $x_0$ of $2p$ balanced parentheses, where $x_0[i]$ corresponds to the crossing edge in $E$ with the $i$-th lowest row index, excluding $f$. There are $\mathcal{C}_p$ ways to do this.

        \item For a given pair of consecutive columns in which $l$ is visited last, $s$ induces a far crossing pairing over $E$ ending at $l$, where the elements of each pair are row indices of far endpoints of crossing edges, including $f$, but not $l$. By Corollary~\ref{lem:far-crossing-pairing-to-parens}, it suffices to count the number of ways to choose a string of $2p$ balanced parentheses. Therefore, choose a string $x_1$ of $2p$ balanced parentheses, where $x_1[i]$ corresponds to the crossing edge in $E$ with the $i$-th lowest row index, excluding $l$. There are $\mathcal{C}_p$ ways to do this.
        
       % \vspace{-20pt}
        \end{enumerate}

The two previous steps are depicted in Figure~\ref{fig:counting_proc_steps_4_5}. At this point, we have chosen the locations and connectivity pattern of all the crossing edges. We now show that there is at most one way in which both endpoints of every crossing edge may be visited by $s$, subject to the  constraints imposed by the previous steps.

\begin{enumerate}
%\vspace{-30pt}

\setcounter{enumi}{5}

\item Let $I_j(r)$ be the index $i$, such that $x_j[i]$ corresponds to the crossing edge with row index $r$. The following greedy algorithm attempts to build a path $\pi$ of locations that (1) starts at the near endpoint of the first crossing edge, (2) ends at the far endpoint of the last crossing edge and (3) visits only the endpoints of the crossing edges while following the balanced parenthesis pairings of both $x_0$ and $x_1$.

    \begin{algorithm}[H]
    	\SetAlgoLined
    	Initialize $\pi = \left( \left(c_0, f\right), \left(c_1, f\right) \right)$ to be a sequence of locations, $j = 1$ and $r = f$, where  $r$ stands for ``row number'' and $j$ stands for the current side, near (0) or far (1).

        \While{$r\ne l$}{
            Let $r'$ be the unique row index of the crossing edge, such that, $I_j(r)$ and $I_j\left(r'\right)$ are paired.

        	Set $r = r'$.

        	Append $\left(c_j, r\right)$ to $\pi$.

            Set $j = (j+1) \mod 2$.

            Append $\left(c_j, r\right)$ to $\pi$.

        }
    \end{algorithm}

	First, note that no endpoint can be visited more than once, so the algorithm always terminates. Second, note that, when the algorithm terminates, either $|\pi| = 2e$ (see Figure~\ref{fig:counting_proc_step_6}) or $|\pi|<2e$ (see Figure~\ref{fig:counting_proc_bad}). However, regardless of its length, $\pi$ is the unique sequence of endpoints of crossing edges that can be visited by any simple path starting at $\left(c_0,f\right)$, ending at $\left(c_1,l\right)$ and following the balanced parenthesis pairings of both $x_0$ and $x_1$. The uniqueness of $\pi$ follows from the uniqueness of $r'$, given $r$, based on the balanced parenthesis pairings of both $x_0$ and $x_1$.

\begin{figure}[h]
    \centering \includegraphics[width=2.2in]{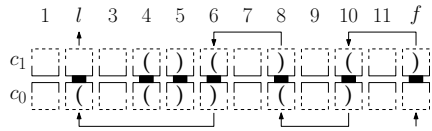}
        \caption{\label{fig:counting_proc_bad}A ``bad'' sample run of the counting procedure. Here, $x_0=(())()$ and $x_1=()()()$. With $f=12$, $l=2$ and $E = \{2, 4, 5, 6, 8, 10, 12\}$, there is no valid $\pi$ (crossing edges 4 and 5 are skipped, assuming the pairs of parentheses are faithfully followed), so the algorithm terminates with $|\pi|=10<2e=14$.}

	\vspace{-20pt}
    \end{figure}

\end{enumerate}

By the above counting procedure, there are at most $\binom{k}{e}\left( e\frac{1}{p+1}\binom{2p}{p}\right)^2\cdot 1$ ways to select and order the endpoints of the crossing edges between an arbitrary pair of consecutive columns $c_0$ and $c_1$ as they are visited by a simple path.

\subsection{Lower bound for undirected self-assembly in 2D: Theorem~\ref{thm:one}}

To prove a lower bound on $K^1_{SA}\left(R_{k,N}\right)$, we turn our attention to upper bounding the number of restricted glue window submovies of the form $M_{\vec{\alpha}, w} \upharpoonright s$. For the remainder of this subsection, assume $w$ is always some window induced by (a translation of) the $y$-axis that cuts $R_{k,N}$ between some pair of consecutive columns. 

\begin{lemma}
\label{lem:count_restricted_bond_forming_submovies}
The number of restricted glue window submovies of the form $M_{\vec{\alpha},w}~\upharpoonright~s$ is less than or equal to $|G|^k \cdot 2^{3k+2}  \cdot k$, where $G$ is the set of all glues of (the tile types in) $T$.
\end{lemma}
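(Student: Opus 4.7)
The plan is to combine the counting bound from Section~\ref{sec:counting_procedure} on orderings of crossing-edge endpoints with a simple bound on the number of distinct glue assignments to those edges, and then sum over all admissible values of the number of crossing edges. A restricted glue window submovie $M_{\vec{\alpha},w}~\upharpoonright~s$ is determined by (1) the number $e$ of crossing edges through which $s$ passes across $w$ (necessarily odd, since $s$ begins on the seed's side of $w$ and ends on the opposite side), (2) the order in which the $2e$ endpoints of these edges are visited by $s$, and (3) the glue carried by each of the $e$ crossing edges. For each odd $e$ with $1 \le e \le k$, items (1) and (2) together contribute at most $\binom{k}{e}\,e^2\,\mathcal{C}_p^2$ possibilities by the counting procedure of Section~\ref{sec:counting_procedure}, where $p = (e-1)/2$. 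For item (3), since the two tile faces meeting across a bonded edge must carry identical positive-strength glues, the glue assignments to the $e$ crossing edges contribute at most $|G|^e$ possibilities.

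Combining these counts, the total number of restricted submovies is at most
\[
\sum_{\substack{1 \le e \le k \\ e \text{ odd}}} \binom{k}{e}\,e^2\,\mathcal{C}_p^2\,|G|^e.
\]
To bound this sum, I would apply three elementary inequalities: $\binom{k}{e} \le 2^k$; the identity $e\,\mathcal{C}_p = (2p+1)\,\mathcal{C}_p = \binom{2p+1}{p}$ together with the standard estimate $\binom{2p+1}{p} \le 2^{2p+1} = 2^e$, which yields $e^2\,\mathcal{C}_p^2 \le 2^{2e} \le 2^{2k}$; and $|G|^e \le |G|^k$ (one may assume $|G| \ge 1$ without loss of generality, since $|G| = 0$ makes the claim vacuous). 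Since there are at most $k$ odd integers in $[1,k]$, the full sum is at most $k \cdot 2^k \cdot 2^{2k} \cdot |G|^k = k \cdot 2^{3k} \cdot |G|^k$, which is within the stated bound $|G|^k \cdot 2^{3k+2} \cdot k$.

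The key subtlety is the Catalan-based estimate: naively pairing $\mathcal{C}_p \le 4^p$ with $e \le k$ introduces an unwanted factor of $k^2$ into the final product. The trick is to absorb the extra factor of $e$ into the Catalan number via the identity $e\,\mathcal{C}_p = \binom{2p+1}{p}$ and then apply $\binom{2p+1}{p} \le 2^{2p+1}$. Everything else is routine term-by-term majorization followed by summation over the (at most $k$) odd values of $e$.
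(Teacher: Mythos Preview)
Your proof is correct and follows essentially the same approach as the paper: both start from the sum $\sum_{e\text{ odd}}\binom{k}{e}(e\,\mathcal{C}_p)^2|G|^e$, bound $\binom{k}{e}\le 2^k$ and $|G|^e\le|G|^k$, and sum over at most $k$ terms. The only difference is in how the factor $(e\,\mathcal{C}_p)^2$ is handled: the paper rewrites $\mathcal{C}_p=\frac{2}{e+1}\binom{e-1}{(e-1)/2}$, bounds $\frac{e}{e+1}\le 1$, and then uses $\binom{e-1}{(e-1)/2}\le\binom{k}{(e-1)/2}\le 2^k$, picking up the extra $2^2$, whereas your identity $(2p+1)\mathcal{C}_p=\binom{2p+1}{p}\le 2^{2p+1}=2^e$ is a bit slicker and even shaves the constant down to $2^{3k}$.
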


\begin{proof}
Let $e$ be an odd number such that $1 \leq e \leq k$ and\\$M_{\vec{\alpha},w} \upharpoonright s = \left(\vec{v}_1,g_1\right), \ldots, \left(\vec{v}_{2e}, g_{2e}\right)$ be a restricted glue window submovie. Since $\vec{\alpha}$ follows a simple path, $g_{2i-1}=g_{2i}$ for $i=1,\ldots, e$. This means that we only need to assign $e$ glues, with $|G|$ choices for each glue. So, the number of ways to assign glues in $M_{\vec{\alpha},w} \upharpoonright s$ is less than or equal to $|G|^e$. Since $\vec{\alpha}$ follows a simple path, each location in $M_{\vec{\alpha},w} \upharpoonright s$ corresponds to an endpoint of a crossing edge that crosses $w$. So, the number of ways to assign locations in $M_{\vec{\alpha},w} \upharpoonright s$ is less than or equal to the number of ways to select and order the endpoints of $e$ crossing edges that cross $w$ via $\vec{\alpha}$. By the above counting procedure, the number of ways to select and order the endpoints of $e$ crossing edges that cross $w$ via $\vec{\alpha}$ is less than or equal to $\binom{k}{e}\left( e\frac{1}{p+1}\binom{2p}{p}\right)^2$. Thus, if $m$ is the total number of restricted glue window submovies of the form $M_{\vec{\alpha},w} \upharpoonright s$, then we have:
\begin{eqnarray*}
    m & \leq &  \sum_{\substack{1\leq e \leq k \\ e\ \mathrm{odd}}}\left( \binom{k}{e}\left( e\frac{1}{p+1}\binom{2p}{p}\right)^2  \ |G|^e \right) \\
         	& = & \sum_{\substack{1\leq e \leq k \\ e\ \mathrm{odd}}}\left( \binom{k}{e}\left( e\frac{2}{e+1}\binom{e-1}{(e-1)/2}\right)^2  \ |G|^e \right) \\
     &\leq & 2^2 \ \sum_{\substack{1\leq e \leq k \\ e\ \mathrm{odd}}}\left( \binom{k}{e}\left( \frac{e}{e+1}\binom{e-1}{(e-1)/2}\right)^2  \ |G|^k \right)  \\
    & = &  |G|^k \cdot 2^2 \  \sum_{\substack{1\leq e \leq k \\ e\ \mathrm{odd}}}\left( \binom{k}{e}\left( \frac{e}{e+1}\binom{e-1}{(e-1)/2}\right)^2 \right) \\
     &\leq &  |G|^k \cdot 2^2 \ \sum_{\substack{1\leq e \leq k\\ e\ \mathrm{odd}}}\left( 2^k\left( 1 \cdot \binom{e-1}{(e-1)/2}\right)^2 \right) \\
     & = &   |G|^k \cdot 2^{k+2} \ \sum_{\substack{1\leq e \leq k\\ e\ \mathrm{odd}}}  \binom{e-1}{(e-1)/2}^2 \\
     & = & |G|^k \cdot 2^{k+2} \ \sum_{\substack{0\leq e \leq k-1\\ e\ \mathrm{even}}}  \binom{e}{e/2}^2 \\
     &  \leq &   |G|^k \cdot 2^{k+2} \ \sum_{\substack{0\leq e \leq k-1\\ e\ \mathrm{even}}}  \binom{k}{e/2}^2  \\
     &  \leq & |G|^k  \cdot 2^{k+2}\  \ \sum_{\substack{0\leq e \leq k-1\\ e\ \mathrm{even}}}2^{2k} \\
     & \leq & |G|^k  \cdot 2^{3k+2} \cdot k. \\
\end{eqnarray*}
\qed
\end{proof}

We will use Lemma~\ref{lem:count_restricted_bond_forming_submovies} to prove our impossibility result.

\addtocounter{theorem}{-2}

\begin{theorem}
  $K^1_{SA}\left(R_{k,N}\right)=\Omega\left( N^{\frac{1}{k}}\right)$.
\end{theorem}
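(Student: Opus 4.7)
The plan is to combine Lemma~\ref{lem:count_restricted_bond_forming_submovies} with a pigeonhole argument over vertical windows of $R_{k,N}$ and the Restricted Window Movie Lemma (Corollary~\ref{lem:rwml}). First I will fix a terminal assembly $\rho \in \mathcal{A}_{\Box}[\mathcal{T}]$ (so $\dom \rho = R_{k,N}$), and invoke Observation~\ref{obs:simple} to obtain a simple path $s$ in $G^{\textmd{b}}_\rho$ from the seed location to some location in the furthest extreme column, together with a simple assembly sequence $\vec{\alpha}$ that follows $s$. Without loss of generality the seed lies in the left half of $R_{k,N}$, so $s$ terminates in the rightmost column and must cross every vertical window $w_i$ between consecutive columns $i$ and $i+1$ for $c_\sigma \leq i \leq N - 2$; this gives at least $\lceil N/2 \rceil - 1$ such windows (the right-half and middle-column cases are symmetric).

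Next I will invoke Lemma~\ref{lem:count_restricted_bond_forming_submovies} to upper bound the number of distinct restricted glue window submovies (up to horizontal translation) by $|G|^k \cdot 2^{3k+2} \cdot k$, where $|G| \leq 4|T|$ since each tile type has four glues. If this count is less than the number of available windows, then pigeonhole produces indices $i < j$ in the valid range such that $M_{\vec{\alpha}, w_i} \upharpoonright s = \left( M_{\vec{\alpha}, w_j} \upharpoonright s \right) - \vec{\Delta}$, where $\vec{\Delta} = (j - i, 0)$. Applying Corollary~\ref{lem:rwml} with $\vec{\alpha} = \vec{\beta}$, $s = s'$, $w = w_i$, and $w + \vec{\Delta} = w_j$ then yields that the assembly $\rho'$ formed by gluing the seed-containing portion of $\rho$ left of $w_j$ (columns $0$ through $j$) together with the non-seed-containing portion of $\rho$ cut off by $w_i$ (columns $i+1$ through $N-1$), translated right by $\vec{\Delta}$, is producible in $\mathcal{T}$. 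Its domain is the complete rectangle $\{0, \ldots, N - 1 + (j - i)\} \times \{0, \ldots, k - 1\}$.

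Finally I will derive a contradiction: $\dom \rho'$ contains points (the columns $N, \ldots, N - 1 + (j - i)$) strictly outside $R_{k,N}$, yet every finite producible assembly of $\mathcal{T}$ must have domain contained in $R_{k,N}$, since any terminal extension of such an assembly would otherwise have domain different from $R_{k,N}$, contradicting self-assembly. This impossibility forces $\lceil N/2 \rceil - 1 \leq |G|^k \cdot 2^{3k+2} \cdot k \leq (4|T|)^k \cdot 2^{3k+2} \cdot k$, and rearranging yields $|T| = \Omega\!\left( N^{1/k} \right)$. The main subtlety I anticipate is in the pigeonhole step: the count in Lemma~\ref{lem:count_restricted_bond_forming_submovies} must be read as bounding translation-equivalence classes of submovies (rather than submovies at a single fixed window), so that a collision between two windows corresponds exactly to equality after the translation $\vec{\Delta}$ that the Restricted Window Movie Lemma demands; verifying that the stitched assembly has the claimed complete rectangular domain (with the two pieces fitting together without overlap or gap) is routine because the windows are axis-aligned and $\rho$ has full rectangular support.
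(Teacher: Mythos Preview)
Your proposal is correct and follows essentially the same route as the paper: pigeonhole over the vertical windows on the far side of the seed, invoke Lemma~\ref{lem:count_restricted_bond_forming_submovies} for the count, apply Corollary~\ref{lem:rwml} to the two colliding windows, and derive a producible assembly placing a tile outside $R_{k,N}$, forcing $N = O\bigl(|G|^k \cdot 2^{3k} \cdot k\bigr)$ and hence $|T| = \Omega\bigl(N^{1/k}\bigr)$.

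One small imprecision: when you invoke Corollary~\ref{lem:rwml}, the assemblies being partitioned are $\alpha = \beta = \mathrm{res}(\vec{\alpha})$, i.e., the path assembly along $s$, not the full terminal assembly $\rho$. Consequently $\rho'$ is not the complete rectangle $\{0,\ldots,N-1+(j-i)\}\times\{0,\ldots,k-1\}$ but just the spliced path. This does not affect the argument, since the right end of $s$ lies in column $N-1$, so after translation by $(j-i,0)$ the spliced assembly still places a tile in column $N-1+(j-i) \geq N$, which is outside $R_{k,N}$; that is all you need for the contradiction. The paper phrases it exactly this way, only asserting $\mathrm{dom}\bigl(\beta_L(\alpha_R+\vec{\Delta})\bigr)\setminus R_{k,N}\neq\emptyset$.
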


\begin{proof}

Let $G$ be the set of glues of (the tile types in) $T$. It suffices to show that $|T| = \Omega\left( N^{\frac{1}{k}}\right)$. Since $s$ is the longest path in $G^{\textmd{b}}_{\rho}$, from the location of $\sigma$ to some location in an extreme column
of $R_{k,N}$, by Lemma~\ref{lem:count_restricted_bond_forming_submovies}, if $N > 2 \cdot |G|^k  \cdot 2^{3k+2}  \cdot k$, then there exists a window $w$, along with vectors $\vec{\Delta}_1$ and $\vec{\Delta}_2$, such that, $\vec{\Delta}_1 \ne \vec{0}$,  $\vec{\Delta}_2 \ne \vec{0}$ and $\vec{\Delta}_1 \ne \vec{\Delta}_2$ and satisfying $M_{\vec{\alpha},w} \upharpoonright s = \left(M_{\vec{\alpha},w+\vec{\Delta}_1} \upharpoonright s \right) - \vec{\Delta}_1$ and $M_{\vec{\alpha},w+\vec{\Delta}_1} \upharpoonright s = \left(M_{\vec{\alpha},w+\vec{\Delta}_2} \upharpoonright s \right) - \vec{\Delta}_2$. Without loss of generality, assume that $w$ and $w+\vec{\Delta}_1$ are on the same side of $\sigma$ and $w+\vec{\Delta}_1$ is to the left of $w$. Assume that $w$ partitions $\alpha$ into $\alpha_L$ and $\alpha_R$ and $w+\vec{\Delta}_1$ partitions $\beta=\alpha$ into $\beta_L$ and $\beta_R$. Then, by Corollary~\ref{lem:rwml}, $\beta_L\left(\alpha_R+\vec{\Delta}_1\right) \in \mathcal{A}[\mathcal{T}]$.  Since $\vec{\Delta}_1\ne \vec{0}$, $\textmd{dom}\left(\beta_L\left(\alpha_R + \vec{\Delta}_1\right)\right)\backslash R_{k,N} \ne \emptyset$. In other words, $\mathcal{T}$ produces some assembly that places at least one tile at a location that is not an element of $R_{k,N}$. Therefore, it must hold that  $N \leq 2 \cdot |G|^k \cdot 2^{3k+2}  \cdot k$, which implies that $|G|^k \geq \frac{N}{ 2^{3k+3}  \cdot k}$, and thus $|G| \geq \left( \frac{N}{2^{3k+3}  \cdot k} \right)^{\frac{1}{k}} \geq \frac{N^{\frac{1}{k}}}{\left( 2^{6k}  \cdot 2^k \right)^{\frac{1}{k}}} = \frac{N^{\frac{1}{k}}}{128}$. Finally, note that $|T| \geq \frac{|G|}{4}$ and it follows that $|T| = \Omega\left( N^{\frac{1}{k}} \right)$.
\qed

\end{proof}

Note that the main technique for the proof of Theorem~\ref{thm:one} does not hold in 3D because Lemma~\ref{lem:near-crossing-pairing-to-parens} requires planarity.

%Full version
%
%
%
% TODO: Just compute the tile complexities for each unit and then, at the end, note that, for
% the case of thin rectangles, we have O(log N) = O(m).
%
%
%
%
\section{Upper bound}
\label{sec:thin-rectangle-construction}

In this section, we give a construction for a singly-seeded TAS in which a sufficiently large just-barely 3D rectangle uniquely self-assembles. Going forward, we say that $R^3_{k,N} \subseteq \mathbb{Z}^3$ is a 3D $k \times N$ \emph{rectangle} if $\{0,1, \ldots,N-1\} \times \{0,1,\ldots,k-1\} \times \{0\} \subseteq R^3_{k,N} \subseteq \{0,1\ldots,N-1\} \times \{0,1\ldots,k
-1\} \times \{0,1\}$. For the sake of clarity of presentation, we represent $R^3_{k,N}$ vertically. Here is our main positive result. 

\addtocounter{theorem}{0}

\begin{theorem}
$K^1_{USA}\left( R^3_{k,N} \right) = O\left( N^{\frac{1}{\left \lfloor \frac{k}{3} \right \rfloor}} + \log N \right)$.
\end{theorem}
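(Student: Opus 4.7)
The plan is to build a singly-seeded, directed, temperature-1 TAS in just-barely 3D whose unique terminal assembly is $R^3_{k,N}$, realized by a base-$b$ zig-zag counter that fills the rectangle column by column. I would choose $b = \lceil N^{1/\lfloor k/3 \rfloor} \rceil$, partition the $k$ rows into $\lfloor k/3 \rfloor$ vertical ``digit blocks'' of three rows each, and represent each base-$b$ digit geometrically inside its block: the $\lceil \log b \rceil$ bits of the digit are laid out vertically and encoded by the presence or absence of tiles in the $z=1$ plane, with the three rows providing enough vertical room to separate the incoming bit wire, the local digit-update computation, and the outgoing bit wire. With $\lfloor k/3 \rfloor$ digits in base $b$, the counter can take at least $b^{\lfloor k/3 \rfloor} \geq N$ distinct values, which is exactly enough to index the $N$ columns of the rectangle.

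The self-assembly would then be organized as follows. The seed sits at a corner of the rectangle, and a hard-coded seed column of length $O(\log N)$ initializes the counter to the value $b^{\lfloor k/3 \rfloor} - N$, so that after exactly $N$ zig-zag passes the counter reaches $b^{\lfloor k/3 \rfloor} - 1$ and a termination gadget halts assembly. Each ``zig'' column assembles upward, reading the current digits via the $z=1$ plane and copying them to the right; each ``zag'' column assembles downward, incrementing the least-significant digit and propagating carries to higher digit blocks. The key point is that because digit values and carries are routed in the $z=1$ plane along geometrically constrained paths, every tile can be forced into place by a single matching glue, so no cooperative binding is needed. For the tile complexity: each digit block uses one family of tile types parameterized by a base-$b$ digit value together with an incoming-carry bit, contributing $O(b) = O(N^{1/\lfloor k/3 \rfloor})$ tile types; a constant number of additional tile types implement the turnaround gadgets at the top and bottom of each column and the termination gadget; and the hard-coded seed column contributes $O(\log N)$ tile types, giving the claimed total of $O(N^{1/\lfloor k/3 \rfloor} + \log N)$.

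The main obstacle I anticipate is rigorously verifying unique self-assembly at temperature-1, since non-cooperative binding makes it easy to produce spurious attachments. The argument would break into two parts: first, a careful gadget-by-gadget analysis showing that each digit block, reading one binary-encoded column on one side and writing the next on the other, admits exactly one consistent tile placement at every frontier location; and second, an induction on the column index showing that the $i$-th column is fully and uniquely determined by the binary representation of the counter value $b^{\lfloor k/3 \rfloor} - N + i$. The hardest sub-step is designing the turnaround gadget that connects the top digit block of a zig column to the top digit block of the next zag column (and symmetrically at the bottom) while correctly reversing the direction of signal propagation and handing off the carry; verifying that this gadget uses only $O(1)$ tile types and remains deterministic at temperature-1 is where the bulk of the casework would live.
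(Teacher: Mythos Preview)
Your high-level strategy matches the paper's: a just-barely 3D, temperature-1 zig-zag counter with $d=\lfloor k/3\rfloor$ base-$b$ digits, each digit living in a strip of width three in the $k$-direction, with the binary representation of each digit encoded geometrically via bumps in the $z=1$ plane, and with a hard-coded seed of size $O(\log N)$. The tile-count bookkeeping and the plan for proving directedness are also in the same spirit.

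However, your geometric picture is internally inconsistent in a way that would derail the construction if taken literally. You write both that the $\lceil\log b\rceil$ bits of a digit are ``laid out vertically'' and that the counter takes exactly $N$ values, ``one per column of the rectangle.'' These cannot both be true. If the digit bits are encoded geometrically in binary, then a single counter value cannot fit in one tile column: writing and then re-reading $\lceil\log b\rceil$ bits requires $\Theta(\log b)$ positions in the $N$-direction per counter step. Conversely, if each counter step really is one tile wide, then the digit must be carried purely in the glue label, and your three rows per digit do not give enough room to combine, at temperature~1, the glue-encoded digit coming from the previous column with the carry coming from the adjacent digit block; that combination is exactly the cooperative step you cannot perform without the geometric bit-reading mechanism.

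The paper resolves this by making each logical counter row span $3l+2=\Theta(\log m)$ tiles in the $N$-direction (with $m$ playing the role of your $b$). The three-tile strip in the $k$-direction is used for a read column, a write column, and a return column; the $l$ bit-bumps sit along the $N$-direction inside that strip. Consequently the number of counter values needed is only $\Theta(N/\log m)$, not $N$, and the paper sets $m=\lceil (N/5)^{1/d}\rceil$ accordingly. The asymptotic bound you are after is unaffected, but your parameter discussion (``$b^{\lfloor k/3\rfloor}\ge N$ values, one per column'') and your description of the three rows as ``incoming wire / computation / outgoing wire'' would need to be replaced by this picture before the construction can actually be carried out. For the uniqueness argument, the paper invokes the conditional-determinism framework of Lutz and Shutters rather than an ad hoc induction on columns; this gives a cleaner route to directedness than the gadget-by-gadget casework you anticipate.
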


\begin{figure}[htp]
	\begin{center}
    \begin{subfigure}[t]{.48\textwidth}
    \begin{center}
		\includegraphics[width=60px]{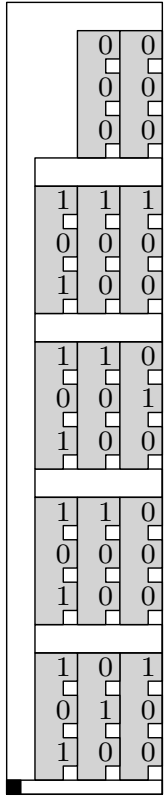}
        \caption{\label{fig:super_high_level_overview}  High-level: values of the counter. }
    \end{center}
    \end{subfigure}
    ~
    \begin{subfigure}[t]{.48\textwidth}
    \begin{center}
		\includegraphics[width=90px]{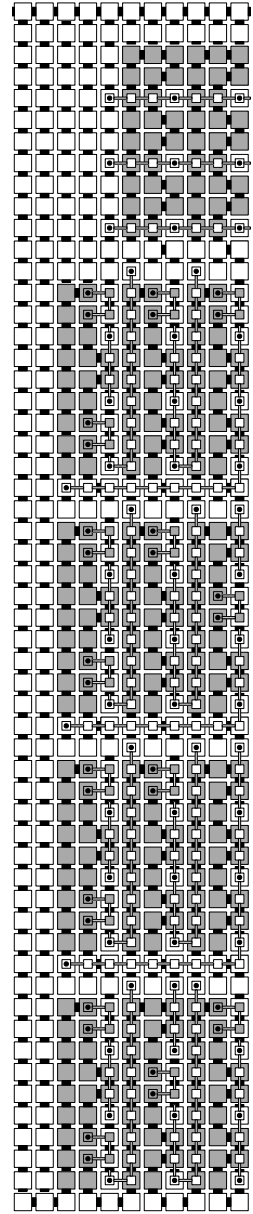}
        \caption{\label{fig:Example_Full} Low-level: full example. }
    \end{center}
    \end{subfigure}
    	 \caption{\label{fig:example_low_high} A full example of a $11\times 56$ construction.  The counter begins at 10-01-10 and is counting in ternary, so the initial value is $23 = 2\cdot 3^2 + 1\cdot 3^1 + 2 \cdot 3^0$. Note that the least significant bit for each digit, which is the lowest bit in each digit column, is actually a ``left edge'' indicator, where ``1'' means ``leftmost'' and ``0'' means ``not leftmost''.   To help distinguish overlapping tiles, ``write gadgets'' are drawn in gray.}
	\end{center}
\end{figure}

The basic idea of our construction for Theorem~\ref{thm:two} is to use a counter, the base of which is a function of $k$ and $N$. Then, we initialize the counter for our construction with a certain starting value and have it increment until its maximum value, at which point it rolls over to all 0's and the assembly goes terminal. Our construction is inspired by, but substantially different from, a similar construction by Aggarwal, Cheng, Goldwasser, Kao, Moisset de Espan\'{e}s and Schweller \cite{AGKS05g} for the self-assembly of two-dimensional $k \times N$ thin rectangles at temperature-2. Like theirs, our construction uses a counter whose base depends on $k$ and $N$. But unlike theirs, we represent each digit of the counter in our construction geometrically, using a one-bit-per-bump representation in an assembly that is three tiles wide and whose height is proportional to the binary representation of the base of the counter. See Figure~\ref{fig:example_low_high} for an example of the counter in our construction at different levels of granularity. The size of the tile set produced by the construction is $O\left(m + \log N\right) = O\left( N^{\frac{1}{\left \lfloor \frac{k}{3} \right \rfloor}} + \log N\right)$ and unique self-assembly follows from conditional determinism. The full construction of the tile set that proves  Theorem~\ref{thm:two} is given in the remainder of this section.

\subsection{Notation for gadgets and figures}

In the context of the construction of a tile set, a \emph{gadget}, referred to by a name like {\tt Gadget}, is a group of tiles that perform a specific task as they self-assemble. All gadgets are depicted in a corresponding figure, where the input glue is explicitly specified by an arrow, output glues are inferred and glues internal to the gadget are configured to ensure unique self-assembly within the gadget. We say that a gadget is \emph{general} if its input and output glues are undefined. If {\tt Gadget} is a general gadget, then we use the notation ${\tt Gadget}({\tt a}, {\tt b})$ to represent the creation of the \emph{specific gadget}, or simply \emph{gadget}, referred to as {\tt Gadget}, with input glue label {\tt a} and output glue label {\tt b} (all positive glue strengths are $1$). If a gadget has two possible output glues, then we will use the notation ${\tt Gadget}({\tt a}, {\tt b}, {\tt c})$ to denote the specific version of {\tt Gadget}, where {\tt a} is the input glue and {\tt b} and {\tt c} are the two possible output glues, listed in the order north, east, south and west, with all of the $z=0$ output glues listed before the $z=1$ output glues. If a gadget has only one output glue (and no input glue), like a gadget that contains the seed, or if a gadget has only one input glue (and no output glue), then we will use the notation ${\tt Gadget}({\tt a})$. We use the notation $\langle \cdot \rangle$ to denote some standard encoding of the concatenation of a list of symbols.

Following standard presentation conventions for ``just-barely'' 3D tile self-assembly, we use big squares to represent tiles placed in the $z=0$ plane and small squares to represent tiles placed in the $z=1$ plane. A glue between a $z=0$ tile and a $z=1$ tile is denoted as a small black disk. Glues between $z=0$ tiles are denoted as thick lines. Glues between $z=1$ tiles are denoted as thin lines. The following is our main positive result.

\subsection{Parameters for the counter}

Since the height (number of tile rows) of each logical row in the counter depends on $k$ and $N$, we must choose its starting value carefully. Therefore, let  $d=\left \lfloor \frac{k}{3} \right \rfloor$, $m=\left\lceil\left(\frac{N}{5}\right)^{\frac{1}{d}}\right\rceil$, $l=\left\lceil\log m\right\rceil+1$, $s=m^d-\left\lfloor\frac{N-3l-1}{3l+2}\right\rfloor$, $c=k\mod3$, and $r=N+1\mod 3l+2$, where $d$ is the number of digits in the counter, $m$ is the numerical base of the counter, $l$ is the number of bits needed to represent each digit in the counter's value plus one for the ``left edge'', $s$ is the numerical start of the counter, and $c$ and $r$ are the number of tile columns and tile rows, respectively, that must be filled in after and outside of the counter.  Each digit of the counter requires a width of 3 tiles, which has a direct relation with the tile complexity of the construction.  The values of $m$ and $s$ are chosen such that the counter stops just before reaching a height of $N$ tiles, at which point, the assembly is given a flat ``roof''. For example, in Figure~\ref{fig:example_low_high}, we have $d=3$, $m=3$, $l=3$, $s=23$, $c=2$, and $r=2$. We now informally justify each of the previously-defined parameters.

We define $d$ as the number of digit columns in the counter.  The number of digits in the counter is limited by the width, $k$.  Each digit column requires $3$ tiles.  We maximize the range of the counter by maximizing the number of digit columns. Given those requirements and restrictions, we end up with $d=\left\lfloor\frac{k}{3}\right\rfloor$ digit columns.

Given at least one digit column, we have the ability to count to any number of counter rows so long as we choose an appropriate base.  In that sense, choosing $m$, the base of the counter, is somewhat arbitrary.  So long as the maximum height of our construction, where we count $m^d$ times, is greater than $N$, and the minimum height of our construction, where we count once, is less than $N$, then there exists a corresponding value of $s$ such that the counter stops just before reaching a height of $N$.  We choose $m=\left\lceil\left(\frac{N}{5}\right)^{\frac{1}{d}}\right\rceil$.

The value of $l$, which is the number of binary bits needed to encode each digit of base-$m$, plus one for the ``left edge'', is easily verified by inspection.

Let $h$ be the height of the construction without any additional ``roof'' tiles.  By inspection of the example construction in Figure~\ref{fig:Example_Full}, as well as the construction of the gadget units in subsequent sections, we see that the height of the construction without additional tiles is $3l+1$ for the \texttt{Seed} unit (see Section~\ref{sec:seed_unit}), plus $3l+2$ for each incrementation of the counter, adding a Counter unit (see Section~\ref{sec:counter_unit}) row each time.  If we define $n$ as the number of Counter unit rows, then $h=n(3l+2)+3l+1$.  So then the maximum height of the counter is $m^d(3l+2)+3l+1$.

\begin{lemma}
\label{lem:max-height-counter}
$N \leq m^d(3l+2)+3l+1$.
\end{lemma}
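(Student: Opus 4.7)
The bound is essentially a direct consequence of the way the parameters $m$ and $l$ were engineered in the preceding paragraphs, so the plan is to simply unwind the two relevant ceilings and multiply. The high-level intuition is that $m$ was chosen so that $m^d$ is at least roughly $N/5$, and the vertical cost per counter row, namely $3l+2$, is always at least $5$; these two facts combine to give a total height of at least $N$.

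Concretely, the first step is to strip the ceiling from the definition of $m$. Since $m = \lceil (N/5)^{1/d}\rceil \geq (N/5)^{1/d}$, raising both sides to the $d$-th power yields $m^d \geq N/5$, or equivalently $5m^d \geq N$. This is the only place where the constant $5$ in the definition of $m$ plays a role, and the whole proof is designed around matching it.

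The second step is a lower bound on $l$. Because $m \geq 1$ (which holds whenever $N \geq 1$ and $d \geq 1$, i.e., for all relevant $k$), we have $\lceil \log m \rceil \geq 0$ and hence $l = \lceil \log m\rceil + 1 \geq 1$, so $3l+2 \geq 5$. Multiplying this with the inequality from the first step gives
\[
m^d(3l+2) \;\geq\; m^d \cdot 5 \;\geq\; N,
\]
and adding the nonnegative quantity $3l+1$ to the left-hand side preserves the inequality, giving $N \leq m^d(3l+2)+3l+1$ as desired.

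There is no genuine obstacle here; the only thing worth flagging is that the constant $5$ in the definition of $m$ was evidently chosen precisely to match the minimum value of the row-cost factor $3l+2$, so the proof is really a verification that those two constants line up correctly.
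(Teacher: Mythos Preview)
Your proof is correct and follows essentially the same approach as the paper: both strip the ceiling from $m$ to obtain $5m^d \geq N$, use $l \geq 1$ to get $3l+2 \geq 5$, and then combine these with the nonnegativity of $3l+1$. The paper writes this as a single chain of inequalities, while you break it into labeled steps, but the mathematical content is identical.
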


\begin{proof}
We have
\begin{eqnarray*}
  N 	& 	= 		& 	5\left(\frac{N}{5}\right) \
     	   	=		 \	5\left( \left(\frac{N}{5}\right)^{\frac{1}{d}} \right)^d \
	 	\leq		 \	5 \left \lceil \left( \frac{N}{5} \right)^{\frac{1}{d}} \right \rceil^d \\
	&	=	 	& 	5m^d \
	 	\leq	 	\	3lm^d+2m^d \
	 	\leq 		\	3lm^d+2m^d+3l+1 \\
	&	=		&	 m^d(3l+2)+3l+1.
\end{eqnarray*}
\end{proof}

And the minimum height is $6l+3$.

\begin{lemma}
\label{lem:min-height-counter}
$6l+3 \leq N$, for $k\geq 3$ and sufficiently large values of $N$.
\end{lemma}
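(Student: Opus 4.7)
The plan is to show that, under the hypothesis $k\geq 3$, the quantity $l=\lceil \log m \rceil + 1$ grows only logarithmically in $N$, so the inequality $6l+3\leq N$ is a routine asymptotic growth comparison.

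First I would use $k\geq 3$ to guarantee $d=\lfloor k/3\rfloor \geq 1$, which is what makes the expression $m=\lceil (N/5)^{1/d}\rceil$ well-behaved and bounded above by a polynomial in $N$. Specifically, since $d\geq 1$, we have $(N/5)^{1/d}\leq \max(1,N/5)$, so for $N\geq 5$ the ceiling satisfies $m\leq N/5 + 1$. (In fact when $d\geq 2$ one gets the much sharper bound $m\leq \sqrt{N/5}+1$, but we do not need this.)

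Next I would plug this bound into the definition of $l$. From $l=\lceil \log m\rceil + 1 \leq \log m + 2$ together with $m\leq N/5+1$, we obtain
\[
6l + 3 \;\leq\; 6\log\!\left(\tfrac{N}{5}+1\right) + 15.
\]
The right-hand side is $O(\log N)$, whereas the left-hand target $N$ grows linearly, so the inequality $6\log(N/5+1)+15\leq N$ holds for all sufficiently large $N$. This gives $6l+3\leq N$, as required.

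I do not expect a genuine obstacle: the lemma is purely a statement that a logarithmic-in-$N$ quantity is eventually dominated by $N$ itself. The only subtlety is making sure to invoke $k\geq 3$ at the right place — namely, to ensure that $d\geq 1$ and hence that $m$ is at most polynomial in $N$, because for $k<3$ we would have $d=0$ and $(N/5)^{1/d}$ would be undefined. Once this is noted, the remainder is a one-line asymptotic comparison, and no appeal to the counter construction or to earlier lemmas (such as Lemma~\ref{lem:max-height-counter}) is needed.
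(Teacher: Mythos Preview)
Your proposal is correct and follows essentially the same approach as the paper: use $k\geq 3$ to ensure $d\geq 1$, bound $l$ by $O(\log N)$, and then compare a logarithmic quantity to $N$. The only cosmetic differences are that the paper invokes the identity $\lceil \log \lceil x\rceil\rceil=\lceil \log x\rceil$ (citing Graham--Knuth--Patashnik) to pass through the inner ceiling and then pulls the $1/d$ out of the exponent, arriving at $6l+3\leq 6\log N+15$ with the explicit threshold $N\geq 49$, whereas you bound $m\leq N/5+1$ directly and leave the threshold implicit.
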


\begin{proof}
We have
\begin{eqnarray*}
6l+3 & = 		& 6 \left \lceil \log m \right \rceil + 9 \
	\ =	\	 6 \left \lceil \log \left \lceil \left( \frac{N}{5} \right)^{\frac{1}{d}} \right \rceil \right \rceil + 9 \\
	& =		& 6 \left \lceil \log \left( \left( \frac{N}{5} \right)^{\frac{1}{d}} \right) \right \rceil + 9 \quad\quad\quad\quad\quad \textmd{See \cite{Graham:1994} for a proof of this equality} \\
	& \leq 	& 6 \log \left( \left( \frac{N}{5} \right)^{\frac{1}{d}} \right) + 15
	\ = 		\ \frac{6}{d} \log \left( \frac{N}{5}  \right) + 15 \\
	& \leq	& \frac{6}{d} \log N  + 15 
	\ =		\ \frac{6}{\left \lfloor \frac{k}{3} \right \rfloor } \log N+ 15 \\
	& \leq	& 6 \log N+ 15\quad\quad\quad\quad\quad\quad\textmd{Because } k \geq 3 \\
	& \leq	& N \quad\quad\quad\quad\quad\quad\quad\quad\quad\quad \textmd{For } N \geq 49.
\end{eqnarray*}
\end{proof}

By Lemma~\ref{lem:min-height-counter}, one row of the counter might result in a final assembly that will not be tall enough but Lemma~\ref{lem:max-height-counter} says that having all possible rows of the counter might result in a final assembly that is too tall. Therefore, we must start the counter at an appropriate value to get the correct height of the final assembly. 

The counter can start at any whole number less than $m^d$ and ends when it reaches $0$ by rolling over $m^d-1$.  This means that the number of Counter unit rows $n$, is $m^d-s$, where we have defined $s$ as the starting value of the counter.  To choose the best starting value, we find the value for $n$ that gets $h$ close to $N$ without exceeding $N$.  It follows from the equation $h=n(3l+2)+3l+1$, that $n=\left\lfloor\frac{N-3l-1}{3l+2}\right\rfloor$.  Thus, $s=m^d-\left\lfloor\frac{N-3l-1}{3l+2}\right\rfloor$.

Recall that we must use $3d$ tile columns to encode the digits of the counter.  Since the remaining tile columns must still be filled in to ensure a width of $k$, we must sometimes construct additional tile columns outside of the counter.  We denote the number of additional tile columns by $c$ and conclude that its value is $k\mod3$.

Similarly, we must also account for the number of additional tile rows that must be filled in after the counter has finished counting.  We denote the number of filler rows at the top of the construction with $r$, and conclude that it is the remainder of this quotient expression $\frac{N-3l-1}{3l+2}$, which is to say that $r=N-3l-1\mod 3l+2$.  This modular expression simplifies to $r=N+1\mod 3l+2$.

\subsection{Tile set}

For the purposes of this construction, we assume the function $bin(a,b)$ gives the binary representation of $a$ that is truncated (or prepended with 0's) to a length of $b$.  We will use the notation $\left(a\right)_b\left[i\right]$ to denote the digit in the $i$-th position from the right of $a$ in base-$b$.  We define a \emph{gadget unit} as a collection of gadgets with a singular purpose.  Gadgets belonging to the same gadget unit will have their figures grouped together. The set of all gadgets created in this subsection forms the tile set $T_{k,N}$. 

\subsubsection{{\tt Vertical\_Column} tiles}

Since  \texttt{Vertical\_Column} tiles are present in a majority of our gadget units, they will only be shown in Figure~\ref{fig:Vertical_Column}.

\begin{figure}[htp]
	\begin{center}
    \begin{subfigure}[t]{.4\textwidth}
    \begin{center}
		\includegraphics[width=12px]{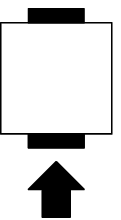}
        \caption{\label{fig:Up_Column_Tile}\texttt{Up\_Column\_Tile}}
    \end{center}
    \end{subfigure}
    \begin{subfigure}[t]{.4\textwidth}
    \begin{center}
		\includegraphics[width=12px]{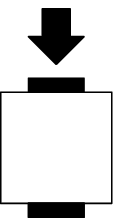}
        \caption{\label{fig:Down_Column_Tile}\texttt{Down\_Column\_Tile}}
    \end{center}
    \end{subfigure}
    	\caption{\label{fig:Vertical_Column}\texttt{Vertical\_Column} tiles are used throughout the construction to adjust the height of gadget units.}
	\end{center}
\end{figure}

\subsubsection{Seed unit}

\label{sec:seed_unit}
We begin by encoding the initial value of the counter with the Seed unit.  It has $d$ columns, where each 3-wide column represents a digit of $s$ in base-$m$.  A collection of bit-bumps on the columns' east sides encodes the digits into binary.  A small ``lip'' is added on the west side of the Seed unit to increase the width of the assembly by $c$, which catches any vertical filler tiles at the end of the construction.  The \texttt{Guess} tile on the east side of the unit initiates the first Counter unit.  See Figure~\ref{fig:Seed}.
    
We define the Seed unit by creating the following gadgets:

\begin{itemize}

\item The first gadget depends on the value of $c$:

\begin{itemize}

\item If $c=0$, create $\texttt{Seed\_Start}\left(\left<\texttt{seed},\texttt{col},d,1\right>\right)$ from the general gadget in Figure~\ref{fig:Seed_Start_0}.

\item If $c=1$, create $\texttt{Seed\_Start}\left(\left<\texttt{seed},\texttt{col},d,1\right>\right)$ from the general gadget in Figure~\ref{fig:Seed_Start_1}.

\item If $c=2$, create $\texttt{Seed\_Start}\left(\left<\texttt{seed},\texttt{col},d,1\right>\right)$ from the general gadget in Figure~\ref{fig:Seed_Start_2}.

\end{itemize}

One gadget was created in this step.

\item For each $i=1,\ldots,d$:

\begin{itemize}

\item For each $j=1,\ldots,3l-3$, create\\$\texttt{Up\_Column}\left(\left<\texttt{seed},\texttt{col},i,j\right> \! ,\left<\texttt{seed},\texttt{col},i,j+1\right>\right)$ from the general gadget in Figure~\ref{fig:Up_Column_Tile}.

\item Create $\texttt{Seed\_Msb}\left(\left<\texttt{seed},\texttt{col},i,3l-2\right> \! ,\left<\texttt{seed},\texttt{bit},i,l-1\right>\right)$ from the general gadget in Figure~\ref{fig:Seed_Msb_0} if $((s)_m[i])_2[l]=0$ or Figure~\ref{fig:Seed_Msb_1} if $((s)_m[i])_2[l]=1$.

\item For each $j=2,\ldots,l-1$, create\\$\texttt{Seed\_Bit}\left(\left<\texttt{seed},\texttt{bit},i,j\right> \! ,\left<\texttt{seed},\texttt{bit},i,j-1\right>\right)$ from the general gadget in Figure~\ref{fig:Seed_Bit_0} if $((s)_m[i])_2[j]=0$ or Figure~\ref{fig:Seed_Bit_1} if $((s)_m[i])_2[j]=1$.

\end{itemize}

In this step, $\sum_{i=1}^d{\left(1 + \sum_{j=1}^{3l-3}{1} + \sum_{j=2}^{l-1}{1}\right)} = O(dl)$ gadgets were created. This means that
\begin{eqnarray*}
dl  				& = & O\left(d \log m\right) \\
				& = & O\left( d \log \left \lceil \left(\frac{N}{5} \right)^{\frac{1}{d}} \right \rceil \right) \\
				& = & O\left( d \log \left( 2\left( \frac{N}{5} \right)^{\frac{1}{d}} \right) \right) \\
				& = & O\left( d \log \left( \frac{N}{5} \right)^{\frac{1}{d}} + d \log 2\right) \\
				& = & O\left( \log N + \left \lfloor \frac{k}{3} \right \rfloor \right) \\
				& = & O\left( \log N \right)
\end{eqnarray*}
gadgets were created in this step. 

\item Create $\texttt{Seed\_Bit}\left(\left<\texttt{seed},\texttt{bit},d,1\right> \! ,\left<\texttt{seed},\texttt{bit},d,0\right>\right)$ from the general gadget in Figure~\ref{fig:Seed_Bit_1}. One gadget was created in this step. 

\item For each $i=1,\ldots,d-1$:

\begin{itemize}

\item Create $\texttt{Seed\_Bit}\left(\left<\texttt{seed},\texttt{bit},i,1\right> \! ,\left<\texttt{seed},\texttt{bit},i,0\right>\right)$ from the general gadget in Figure~\ref{fig:Seed_Bit_0}.

\item Create $\texttt{Seed\_Spacer}\left(\left<\texttt{seed},\texttt{bit},i+1,0\right> \! ,\left<\texttt{seed},\texttt{col},i,1\right>\right)$ from the general gadget in Figure~\ref{fig:Seed_Spacer}.

\end{itemize}

In this step, $2(d-1) = O(k) = O(\log N)$ gadgets were created. 

\item Create 
	\begin{align*}
		\texttt{Seed\_End}( 		& \! \left<\texttt{seed},\texttt{bit},1,0\right> \!, \\
							& \! \left<\texttt{inc},\texttt{read},1\right> \!, \\
							& \! \left<\texttt{inc},\texttt{read},0\right> )
	\end{align*}
%$\texttt{Seed\_End}\left(\left<\texttt{seed},\texttt{bit},1,0\right>,\left<\texttt{inc},\texttt{read},1\right>,\left<\texttt{inc},\texttt{read},0\right>\right)$ from the general gadget in Figure~\ref{fig:Seed_End}. 
One gadget was created in this step.

\end{itemize}
    
\begin{figure}[htp]
	\begin{center}
    \begin{subfigure}[b]{.2\textwidth}
    \begin{center}
		\includegraphics[width=12px]{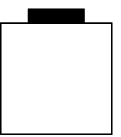}
        \caption{\label{fig:Seed_Start_0}\texttt{Seed\_Start\_0}}
        \vspace{.56\linewidth}
        \includegraphics[width=27px]{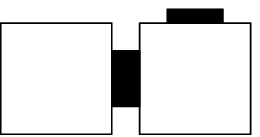}
        \caption{\label{fig:Seed_Start_1}\texttt{Seed\_Start\_1}}
        \vspace{.3\linewidth}
        \includegraphics[width=42px]{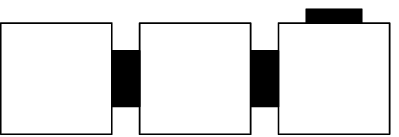}
        \caption{\label{fig:Seed_Start_2}\texttt{Seed\_Start\_2}}
    \end{center}
    \end{subfigure}
    \begin{subfigure}[b]{.2\textwidth}
    \begin{center}
		\includegraphics[width=42px]{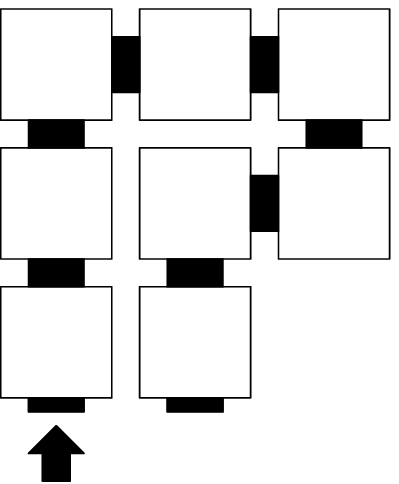}
        \caption{\label{fig:Seed_Msb_0}\texttt{Seed\_Msb\_0}}
        \vspace{.17\linewidth}
        \includegraphics[width=39px]{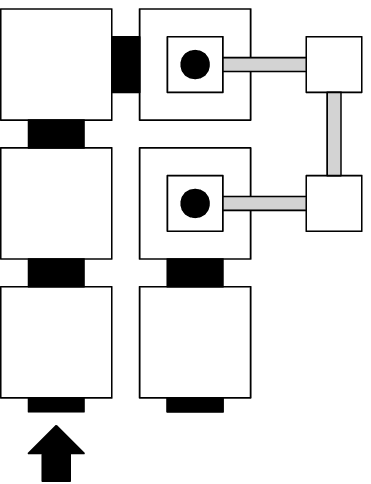}
        \caption{\label{fig:Seed_Msb_1}\texttt{Seed\_Msb\_1}}
        \vspace{.22\linewidth}
        \includegraphics[width=42px]{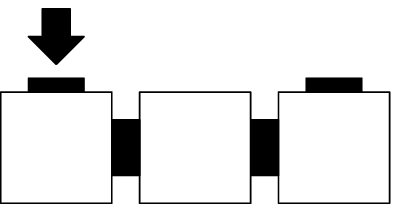}
        \caption{\label{fig:Seed_Spacer}\texttt{Seed\_Spacer}}
    \end{center}
    \end{subfigure}
    \begin{subfigure}[b]{.2\textwidth}
    \begin{center}
		\includegraphics[width=27px]{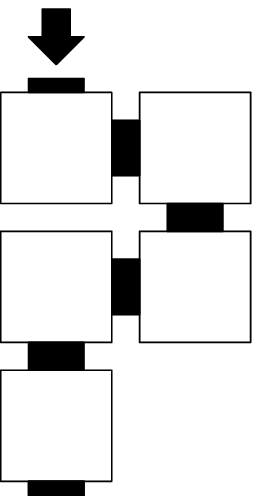}
        \caption{\label{fig:Seed_Bit_0}\texttt{Seed\_Bit\_0}}
        \vspace{.15\linewidth}
        \includegraphics[width=24px]{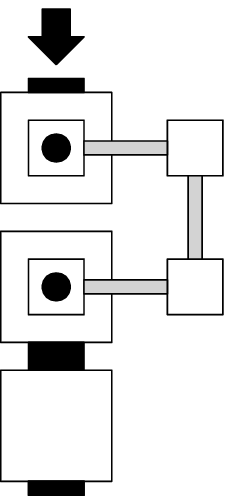}
        \caption{\label{fig:Seed_Bit_1}\texttt{Seed\_Bit\_1}}
        \vspace{.15\linewidth}
        \includegraphics[width=27px]{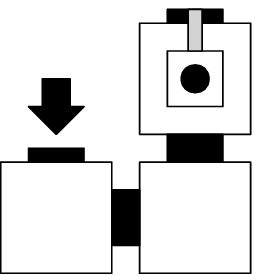}
        \caption{\label{fig:Seed_End}\texttt{Seed\_End}}
    \end{center}
    \end{subfigure}
    \vspace{20pt}

    \begin{subfigure}[b]{.38\textwidth}
    \begin{center}
		\includegraphics[width=162px]{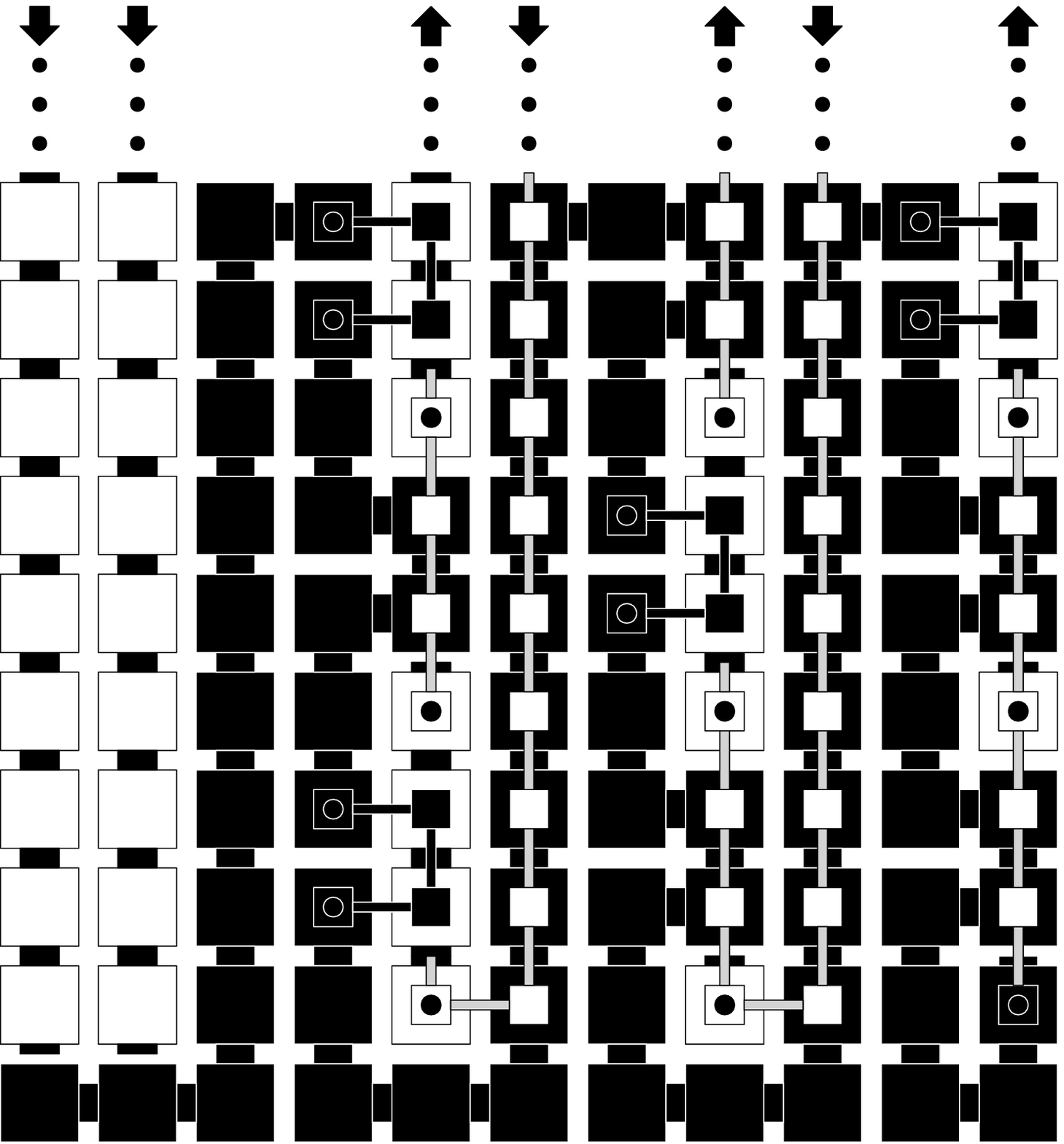}
        \caption{\label{fig:Example_Seed} The seed is the leftmost tile in the bottom row and self-assembly of the black tiles proceeds in a left-to-right fashion.}
    \end{center}
    \end{subfigure}

    	\caption{\label{fig:Seed}The Seed gadget unit.  The actual seed tile is at the far-left of any of the \texttt{Seed\_Start} gadgets.}
	\end{center}
\end{figure}

Since the number of tiles in each gadget in the Seed unit is $O(1)$, then based on the above computations, the number of gadgets created in this subsection is $O(\log N)$.

\subsubsection{Counter unit}
\label{sec:counter_unit}

For this construction, we require a set of $4m-1$ Counter gadget units to encode the digits of the counter, of which $m$ units will increment a digit of the counter, $m$ units will copy a digit of the counter, $m$ units will copy the most significant digit of the counter, and $m-1$ units will increment the most significant digit of the counter.  Each row of the counter will have a total of $d$ Counter units.  Each Counter unit reads over a series of bit-bumps protruding into their row from the preceding Seed unit or counter row.  After a Counter unit reads its bit pattern with \texttt{Guess} tiles, the unit produces a new bit pattern in the row above it that encodes a copy or increment of the current digit.  Of the ``less significant digit'' units, the one that increments $m-1$ to $0$ is unique because it initiates another increment unit, that is, a carry is passed to the digit to its left.  Other increment units, as well as the copy units, will only initiate copy units (no carry is propagated to the left).  The first bit read is always the ``left edge'' marker, which tells the unit if it represents the most significant digit of the counter value and needs to start a new row instead of another Counter unit.  The counter terminates when the most significant digit follows an increment unit and reads $m-1$ in its column.  At that point, the counter will have rolled over $m^d-1$ and the Roof unit takes over.  The gadgets belonging to the Counter units are shown in Figure~\ref{fig:Counter}.

We define the Counter units by creating the following gadgets:

\begin{itemize}

\item For each $i = 0, \ldots, l - 2$ and each $u \in \{0,1\}^i$: 

\begin{itemize}

\item Create
	\begin{align*}
		\texttt{Counter\_Read}( 			& \! \left<\texttt{inc},\texttt{read},0u\right> \!, \\
									& \! \left<\texttt{inc},\texttt{read},10u\right> \!, \\
									& \! \left<\texttt{inc},\texttt{read},00u\right> )
	\end{align*}
%$\texttt{Counter\_Read}\left(\left<\texttt{inc},\texttt{read},0u\right> \! ,\left<\texttt{inc},\texttt{read},10u\right> \! ,\left<\texttt{inc},\texttt{read},00u\right>\right)$ 
from the general gadget in Figure~\ref{fig:Counter_Read_0}.

\item Create
	\begin{align*}
		\texttt{Counter\_Read}( 			& \! \left<\texttt{inc},\texttt{read},1u\right> \!, \\
									& \! \left<\texttt{inc},\texttt{read},11u\right> \!, \\
									& \! \left<\texttt{inc},\texttt{read},01u\right> )
	\end{align*}
%$\texttt{Counter\_Read}\left(\left<\texttt{inc},\texttt{read},1u\right> \! ,\left<\texttt{inc},\texttt{read},11u\right> \!,\left<\texttt{inc},\texttt{read},01u\right>\right)$ 
from the general gadget in Figure~\ref{fig:Counter_Read_1}.

\item Create
	\begin{align*}
		\texttt{Counter\_Read}( 		& \! \left<\texttt{copy},\texttt{read},0u\right> \!, \\
								& \! \left<\texttt{copy},\texttt{read},10u\right> \!, \\
								& \! \left<\texttt{copy},\texttt{read},00u\right> )
	\end{align*}
%$\texttt{Counter\_Read}\left(\left<\texttt{copy},\texttt{read},0u\right> \!, \left<\texttt{copy},\texttt{read},10u\right> \!,\left<\texttt{copy},\texttt{read},00u\right>\right)$
from the general gadget in Figure~\ref{fig:Counter_Read_0}.

\item Create
	\begin{align*}
		\texttt{Counter\_Read}( 		& \! \left<\texttt{copy},\texttt{read},1u\right> \!, \\
								& \! \left<\texttt{copy},\texttt{read},11u\right> \!, \\
								& \! \left<\texttt{copy},\texttt{read},01u\right> )
	\end{align*}
%$\texttt{Counter\_Read}\left(\left<\texttt{copy},\texttt{read},1u\right> \! ,\left<\texttt{copy},\texttt{read},11u\right> \!, \left<\texttt{copy},\texttt{read},01u\right>\right)$ 
from the general gadget in Figure~\ref{fig:Counter_Read_1}.

\end{itemize}

These are read gadgets for all digit positions and all bits (except the most significant bit) for both copy and increment columns. In this step, $\sum_{i=0}^{l-2}{4\cdot 2^i} = 4 \left( 2^{l-1} - 1 \right) = 4 \left( 2^{\left \lceil \log m \right \rceil } - 1\right) \leq 4 \left( 2\cdot 2^{\log m}\right) = O(m) = O\left(N^{\frac{1}{\left \lfloor \frac{k}{3} \right \rfloor}}\right)$ gadgets were created. 

\end{itemize}

Recall that the least significant bit of our digit columns represents whether the digit is for the most significant digit's place or not.  It follows then, that $bin(2m-2,l)$ is the greatest digit of our base-$m$ counter but with an extra $0$ at the end, which is to say that it is m-1 but not for the most significant digit's place.  To get $m-1$ for the most significant digit's place, we simply add a $1$ to that expression and get $bin(2m-1,l)$.  Hence, we use the range $0$ to $2m-1$ to refer to all digits of base-$m$ in all positions / places, we use the range $0$ to $2m-3$ to refer to all digits of base-$m$ except $m-1$ in all positions / places, we use the index $2m-2$ to refer to the counter value $m-1$ when not in the most significant digit's place, and we use the index $2m-1$ to refer to the counter value $m-1$ when in the most significant digit's place.

\begin{itemize}

\item For each $i=0,\ldots,2m-1$, create
	\begin{align*}
			{\tt Counter\_Read\_Msb}( & \! \left \langle \texttt{copy},\texttt{read},bin(i,l) \right \rangle \! , \\
								&  \! \left \langle  \texttt{copy},\texttt{write},bin(i,l) \right \rangle \! , \\
								&  \! \left \langle \texttt{d\_fill} \right \rangle)
	\end{align*}
	from the general gadget in Figure~\ref{fig:Counter_Read_Msb_0} if $(i)_2[l]=0$ or Figure~\ref{fig:Counter_Read_Msb_1} if $(i)_2[l]=1$. These are all read gadgets for the most significant bit in copy columns only. In this step, $2m = O\left(N^{\frac{1}{\left \lfloor \frac{k}{3} \right \rfloor}}\right)$ gadgets were created.

\item For each $i=0,\ldots,2m-3$, create
	\begin{align*}
			{\tt Counter\_Read\_Msb}( & \! \left \langle  \texttt{inc},\texttt{read},bin(i,l) \right \rangle \! , \\
								&  \! \left \langle  \texttt{copy},\texttt{write},bin(i+2,l) \right \rangle \! , \\
								&  \! \left \langle \texttt{d\_fill} \right \rangle )
	\end{align*}
	from the general gadget in Figure~\ref{fig:Counter_Read_Msb_0} if $(i)_2[l]=0$ or Figure~\ref{fig:Counter_Read_Msb_1} if $(i)_2[l]=1$. These are all read gadgets for the most significant bit in increment columns but only when the digit value is between $0$ and $m - 2$. In this step, $2m-2 = O\left(N^{\frac{1}{\left \lfloor \frac{k}{3} \right \rfloor}}\right)$ gadgets were created.

\item Create 
	\begin{align*}
		\texttt{Counter\_Read\_Msb}( 	& \! \left \langle \texttt{inc},\texttt{read},bin(2m-2,l)\right \rangle \!, \\
								& \! \left \langle \texttt{inc},\texttt{write\_all\_0s},1 \right \rangle \!, \\
								& \! \left \langle \texttt{d\_fill} \right \rangle )
	\end{align*}
from the general gadget in Figure~\ref{fig:Counter_Read_Msb_1}. This is the read gadget for the most significant bit in all increment columns (except the most significant digit) but only when the digit value is $m - 1$. One gadget was created in this step.

\item For each $i = 1, \ldots, l - 1$, create\\${\tt Counter\_Write}(\langle {\tt inc}, {\tt write\_all\_0s}, i\rangle, \langle {\tt inc}, {\tt write\_all\_0s}, i+1\rangle)$ from the general gadget in Figure~\ref{fig:Counter_Write_0}. These are the write gadgets for a digit value of all 0s due to incrementation of $m -1$. In this step, $l-1 = O(\log m) = O\left(\frac{\log N}{k}\right) = O(\log N)$ gadgets were created. This group of gadgets writes a series of 0 bits, because $m-1$ was incremented to $0$.

\item For each $u \in \{0,1\}^{l-1}$:

\begin{itemize}

\item Create $\texttt{Counter\_Write}\left(\left<\texttt{copy},\texttt{write},u0\right> \!,\left<\texttt{copy},\texttt{write},u\right>\right)$ from the general gadget in Figure~\ref{fig:Counter_Write_0}.

\item Create $\texttt{Counter\_Write}\left(\left<\texttt{copy},\texttt{write},u1\right> \! ,\left<\texttt{msd},\texttt{write},u\right>\right)$ from the general gadget in Figure~\ref{fig:Counter_Write_1}. Note that ``{\tt msd}'' stands for ``most significant digit'' (since the ``left edge'' bit is 1 in this case).

\end{itemize}

These are all gadgets to write the ``left edge'' marker in all copy columns. In this step, $2\cdot 2^{l-1} = 2\cdot 2^{\left \lceil \log m \right \rceil} \leq 2 \left( 2\cdot 2^{ \log m}\right) = O(m) = O\left(N^{\frac{1}{\left \lfloor \frac{k}{3} \right \rfloor}}\right)$ gadgets were created.

\item For each $i = 1, \ldots, l - 2$ and each $u \in \{0,1\}^i$:

\begin{itemize}

\item Create $\texttt{Counter\_Write}\left(\left<\texttt{copy},\texttt{write},u0\right> \!,\left<\texttt{copy},\texttt{write},u\right>\right)$ from the general gadget in Figure~\ref{fig:Counter_Write_0}.

\item Create $\texttt{Counter\_Write}\left(\left<\texttt{copy},\texttt{write},u1\right> \!,\left<\texttt{copy},\texttt{write},u\right>\right)$ from the general gadget in Figure~\ref{fig:Counter_Write_1}.

\item Create $\texttt{Counter\_Write}\left(\left<\texttt{msd},\texttt{write},u0\right> \!,\left<\texttt{msd},\texttt{write},u\right>\right)$ from the general gadget in Figure~\ref{fig:Counter_Write_0}.

\item Create $\texttt{Counter\_Write}\left(\left<\texttt{msd},\texttt{write},u1\right> \!,\left<\texttt{msd},\texttt{write},u\right>\right)$ from the general gadget in Figure~\ref{fig:Counter_Write_1}.

\end{itemize}

These gadgets write all digits (except the ``left edge'' marker) in all copy columns. In this step, 
\begin{eqnarray*}
\sum_{i=1}^{l-2}{4\cdot 2^i} & = & 4 \left( 2^{l-1} - 2 \right) \\
					  & = & 4 \left( 2^{\left \lceil \log m \right \rceil } - 2\right) \\
					  & \leq & 4 \left( 2\cdot 2^{\log m}\right) \\
					  & = & O(m) = O\left(N^{\frac{1}{\left \lfloor \frac{k}{3} \right \rfloor}}\right)
\end{eqnarray*}
gadgets were created. 

\item Create $\texttt{Counter\_Write\_Msb}\left(\left<\texttt{inc},\texttt{write},l\right> \!,\left<\texttt{inc},\texttt{down\_z\_0},1\right>\right)$ from the general gadget in Figure~\ref{fig:Counter_Write_Msb_0}. This gadget writes 0 for the most significant bit when digit value is all 0s after incrementing $m - 1$. One gadget was created in this step.

\item Create $\texttt{Counter\_Write\_Msb}\left(\left<\texttt{copy},\texttt{write},0\right> \!,\left<\texttt{copy},\texttt{down\_z\_0},1\right>\right)$ from the general gadget in Figure~\ref{fig:Counter_Write_Msb_0}. This gadget writes 0 for the most significant bit in any copy column except the most significant digit column. One gadget was created in this step.

\item Create $\texttt{Counter\_Write\_Msb}\left(\left<\texttt{copy},\texttt{write},1\right> \!,\left<\texttt{copy},\texttt{down\_z\_0},1\right>\right)$ from the general gadget in Figure~\ref{fig:Counter_Write_Msb_1}. This gadget writes 1 for the most significant bit in any copy column except the most significant digit column. One gadget was created in this step.

\item Create $\texttt{Counter\_Write\_Msb}\left(\left<\texttt{msd},\texttt{write},0\right> \!,\left<\texttt{msd},\texttt{down\_z\_0},1\right>\right)$ from the general gadget in Figure~\ref{fig:Counter_Write_Msb_0}. This gadget writes 0 for the most significant bit in the most significant digit (i.e., copy) column. One gadget was created in this step.

\item Create $\texttt{Counter\_Write\_Msb}\left(\left<\texttt{msd},\texttt{write},1\right> \!,\left<\texttt{msd},\texttt{down\_z\_0},1\right>\right)$ from the general gadget in Figure~\ref{fig:Counter_Write_Msb_1}. This gadget writes 1 for the most significant bit in the most significant digit (i.e., copy) column. One gadget was created in this step.

\item For each $i=1,\ldots,3l-2$:

\begin{itemize}

\item Create $\texttt{Down\_Column}\left(\left<\texttt{inc},\texttt{down\_z\_0},i\right> \!,\left<\texttt{inc},\texttt{down\_z\_0},i+1\right>\right)$ from the general gadget in Figure~\ref{fig:Down_Column_Tile}.

\item Create $\texttt{Down\_Column}\left(\left<\texttt{copy},\texttt{down\_z\_0},i\right> \!,\left<\texttt{copy},\texttt{down\_z\_0},i+1\right>\right)$ from the general gadget in Figure~\ref{fig:Down_Column_Tile}.

\end{itemize}

These gadgets go back down to the bottom of the new counter row in all columns (except the most significant digit column) while remembering an increment or copy is taking place. In this step, $2(3l-2) = O(l) = O(\log m) = O\left( \frac{\log N}{k} \right) = O(\log N)$ gadgets were created.

\item For each $i=1,\ldots,3l-3$, create\\$\texttt{Down\_Column}\left(\left<\texttt{msd},\texttt{down\_z\_0},i\right> \!,\left<\texttt{msd},\texttt{down\_z\_0},i+1\right>\right)$ from the general gadget in Figure~\ref{fig:Down_Column_Tile}. These gadgets go back down to the bottom of the new counter row in the most significant digit column while remembering an increment or copy is taking place. In this step, $3l-3 = O(l)=O(\log m) = O\left(\frac{\log N}{k}\right) = O(\log N)$ gadgets were created.

\item Create\\$\texttt{Counter\_Return\_Column\_Start}\left(\left<\texttt{inc},\texttt{down\_z\_0},3l-1\right> \!,\left<\texttt{inc},\texttt{down\_z\_1},1\right>\right)$ from the general gadget in Figure~\ref{fig:Counter_Return_Column_Start}. This is the first gadget to start going down to the bottom of the previous counter row in increment columns only. One gadget was created in this step.

\item Create\\$\texttt{Counter\_Return\_Column\_Start}\left(\left<\texttt{copy},\texttt{down\_z\_0},3l-1\right> \!,\left<\texttt{copy},\texttt{down\_z\_1},1\right>\right)$ from the general gadget in Figure~\ref{fig:Counter_Return_Column_Start}. This is the first gadget to start going down to the bottom of the previous counter row in copy columns only. One gadget was created in this step.

%\item Create\\$\texttt{Return\_Row\_Start}\left(\left<\texttt{msd},\texttt{down\_z\_0},3l-2\right>,\left<\texttt{return},\texttt{start}\right>\right)$ from the general gadget in Figure~\ref{fig:Return_Row_Start}. This is the first gadget to start going back to the rightmost digit of the new counter row in the most significant digit column only. One gadget was created in this step.

\item For each $i=1,\ldots,l-1$:

\begin{itemize}

\item Create\\$\texttt{Counter\_Return\_Column}\left(\left<\texttt{inc},\texttt{down\_z\_1},i\right> \!,\left<\texttt{inc},\texttt{down\_z\_1},i+1\right>\right)$ from the general gadget in Figure~\ref{fig:Counter_Return_Column}.

\item Create\\$\texttt{Counter\_Return\_Column}\left(\left<\texttt{copy},\texttt{down\_z\_1},i\right> \!,\left<\texttt{copy},\texttt{down\_z\_1},i+1\right>\right)$ from the general gadget in Figure~\ref{fig:Counter_Return_Column}.

\end{itemize}

These gadgets go back down to the bottom of the previous counter row in all columns (except the most significant digit column) while remembering an increment or copy is taking place. In this step, $2(l-1) = O(l) = O(\log m) = O\left( \frac{\log N}{k} \right) = O(\log N)$ gadgets were created.

\item Create
	\begin{align*}
		\texttt{Counter\_Return\_Column\_End}( 	& \! \left<\texttt{inc},\texttt{down\_z\_1},l\right> \!, \\
										& \! \left<\texttt{inc},\texttt{read},1\right> \!, \\
										& \! \left<\texttt{inc},\texttt{read},0\right> )
	\end{align*}
%$\texttt{Counter\_Return\_Column\_End}\left(\left<\texttt{inc},\texttt{down\_z\_1},l\right> \!,\left<\texttt{inc},\texttt{read},1\right> \!,\left<\texttt{inc},\texttt{read},0\right>\right)$ 
from the general gadget in Figure~\ref{fig:Counter_Return_Column_End}. This gadget gets ready to read the ``left edge'' marker in the next digit (to the left) in order to increment it. One gadget was created in this step.

\item Create
	\begin{align*}
		\texttt{Counter\_Return\_Column\_End}( 	& \! \left<\texttt{copy},\texttt{down\_z\_1},l\right> \!, \\
										& \! \left<\texttt{copy},\texttt{read},1\right> \!, \\
										& \! \left<\texttt{copy},\texttt{read},0\right> )
	\end{align*}
%$\texttt{Counter\_Return\_Column\_End}\left(\left<\texttt{copy},\texttt{down\_z\_1},l\right> \!,\left<\texttt{copy},\texttt{read},1\right> \!, \left<\texttt{copy},\texttt{read},0\right>\right)$ 
from the general gadget in Figure~\ref{fig:Counter_Return_Column_End}. This gadget gets ready to read the ``left edge'' marker in the next digit (to the left) in order to copy it. One gadget was created in this step.

\end{itemize}

Since the number of tiles in each gadget in the Counter unit is $O(1)$, then based on the above computations, the number of  gadgets (and therefore the number of tile types) created in this subsection is $O\left(N^{\frac{1}{\left \lfloor \frac{k}{3} \right \rfloor}} + \log N\right)$.
    
\begin{figure}[htp]
    \begin{subfigure}[b]{.3\textwidth}
    	\centering
	\includegraphics[width=12px]{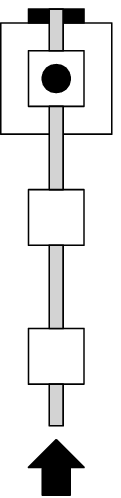}
        \caption{\label{fig:Counter_Read_0}\texttt{Counter\_Read\_0}}
     \end{subfigure}
     \begin{subfigure}[b]{.3\textwidth}
     	\centering
        \includegraphics[width=12px]{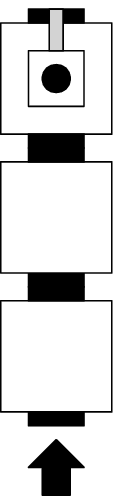}
        \caption{\label{fig:Counter_Read_1}\texttt{Counter\_Read\_1}}
     \end{subfigure}
     \begin{subfigure}[b]{.3\textwidth}
     	\centering
        \includegraphics[width=27px]{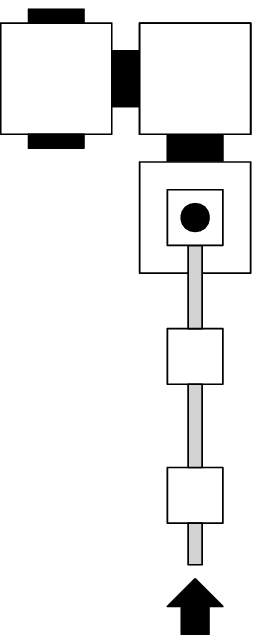}
        \caption{\label{fig:Counter_Read_Msb_0}\texttt{Counter\_Read\_Msb\_0}}
     \end{subfigure}
     \vspace{10pt}

     \begin{subfigure}[b]{.3\textwidth}
     	\centering
        \includegraphics[width=27px]{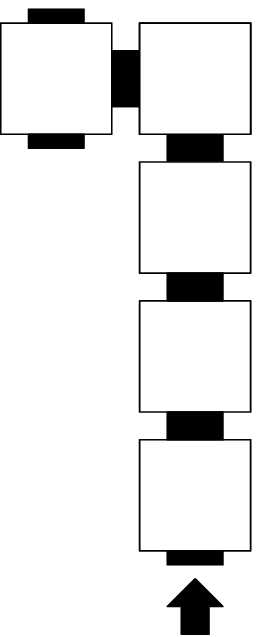}
        \caption{\label{fig:Counter_Read_Msb_1}\texttt{Counter\_Read\_Msb\_1}}
    \end{subfigure}
    \begin{subfigure}[b]{.3\textwidth}
    	\centering
	\includegraphics[width=27px]{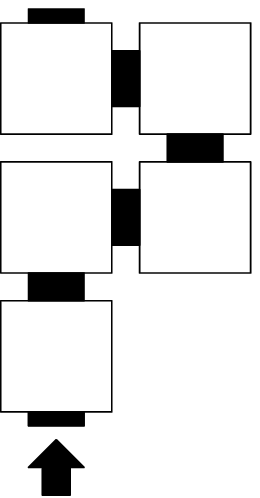}
        \caption{\label{fig:Counter_Write_0}\texttt{Counter\_Write\_0}}
     \end{subfigure}
     \begin{subfigure}[b]{.3\textwidth}
     	\centering
        \includegraphics[width=24px]{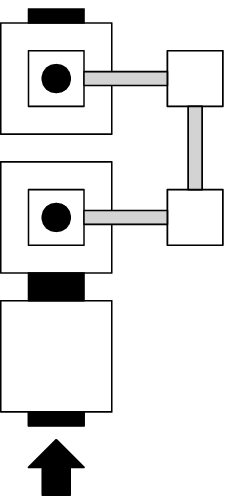}
        \caption{\label{fig:Counter_Write_1}\texttt{Counter\_Write\_1}}
    \end{subfigure}
    \vspace{10pt}

     \begin{subfigure}[b]{.3\textwidth}
     	\centering
        \includegraphics[width=42px]{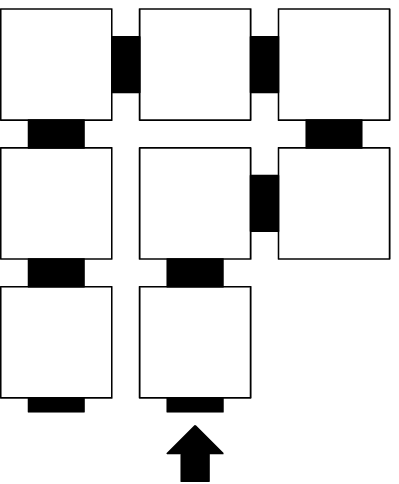}
        \caption{\label{fig:Counter_Write_Msb_0}\texttt{Counter\_Write\_Msb\_0}}
    \end{subfigure}
     \begin{subfigure}[b]{.3\textwidth}
     	\centering
        \includegraphics[width=39px]{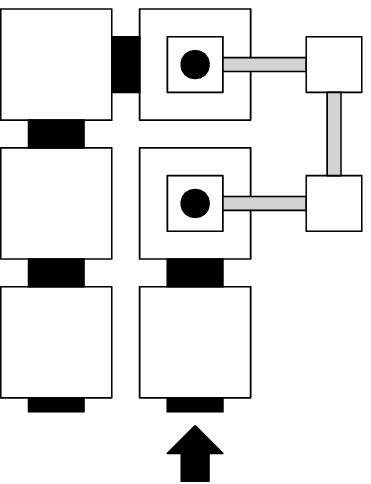}
        \caption{\label{fig:Counter_Write_Msb_1}\texttt{Counter\_Write\_Msb\_1}}
    \end{subfigure}
    \begin{subfigure}[b]{.3\textwidth}
    	\centering
	\includegraphics[width=12px]{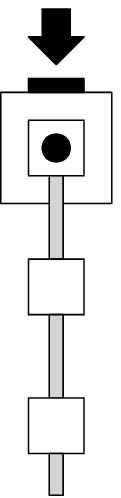}
        \caption{\label{fig:Counter_Return_Column_Start}\texttt{Counter\_Return\_Column\_Start}}
     \end{subfigure}
     \vspace{10pt}

     \begin{subfigure}[b]{.3\textwidth}
     	\centering
        \includegraphics[width=6px]{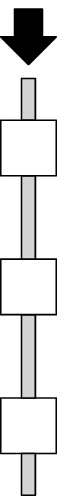}
        \caption{\label{fig:Counter_Return_Column}\texttt{Counter\_Return\_Column}}
     \end{subfigure}
     \begin{subfigure}[b]{.3\textwidth}
     	\centering
        \includegraphics[width=24px]{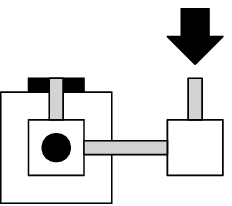}
        \caption{\label{fig:Counter_Return_Column_End}\texttt{Counter\_Return\_Column\_End}}
    \end{subfigure}	
       	\caption{\label{fig:Counter}The Counter gadget unit.  A row of units shares half of its space with the row above and the other half of its space with the row below.}
\end{figure}

\begin{figure}[htp]
	\centering
	\includegraphics[width=120px]{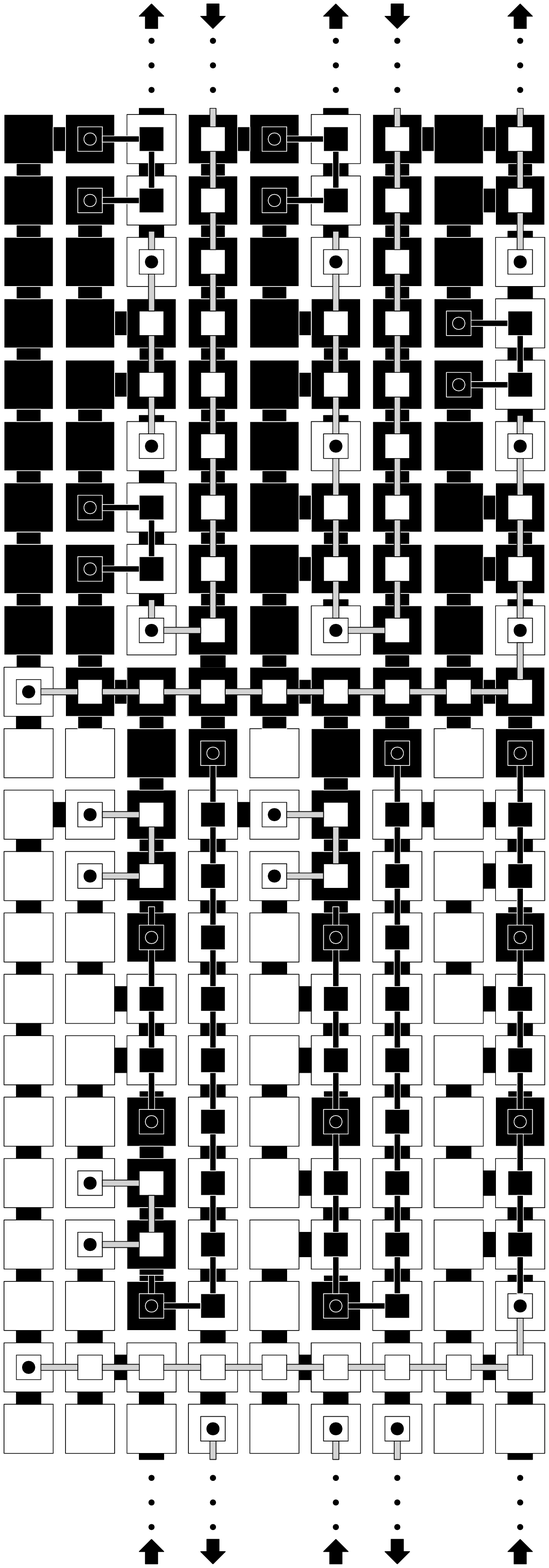}
        \caption{\label{fig:Example_Counter} An example with a row of Counter units colored black.  The unit is incrementing 10-10-00 to 10-10-01.}
\end{figure}

\subsubsection{Return Row unit}
\label{sec:return_row_unit}

	To begin a new row of Counter units following the completion of the current row, a Return Row gadget unit must return the frontier of the assembly to the east side of the construction.  After returning the frontier to the east side, it initiates an incrementing Counter unit for the least significant digit.  In cases where there is only one digit column in use, a single special gadget is used instead of the multi-piece unit.  See Figure~\ref{fig:Return}
    
    We define the Return Row unit by creating the following gadgets:
    
\begin{itemize}
\item If $d=1$, create
\begin{align*}
			{\tt Return\_Row\_Single}( 	& \! \left \langle \texttt{return},\texttt{start}\right \rangle \! , \\
									& \! \left \langle \texttt{inc},\texttt{read},1\right \rangle \!, \\
									& \! \left \langle \texttt{d\_fill} \right \rangle \! , \\
									& \! \left \langle \texttt{inc},\texttt{read},0\right \rangle )
\end{align*}
from the general gadget in Figure~\ref{fig:Return_Row_Single}. One gadget was created in this step.
\item Otherwise:

\begin{itemize}

\item Create
	\begin{align*}
		\texttt{Return\_Row\_Start}( 	& \! \left<\texttt{msd\_down\_z\_0},3l - 2\right> \!, \\
								& \! \left \langle \texttt{d\_fill} \right \rangle \!, \\
								& \! \left<\texttt{return},1\right> )
	\end{align*}
%$\texttt{Return\_Row\_Start}\left(\left<\texttt{msd\_down\_z\_0},3l - 2\right> \! , \left \langle \texttt{d\_fill} \right \rangle \! , \left<\texttt{return},1\right> \right)$ 
from the general gadget in Figure~\ref{fig:Return_Row_Start}. One gadget was created in this step.

\item For each $i=1,\ldots,d-2$, create $\texttt{Return\_Row}\left(\left<\texttt{return},i\right> \! ,\left<\texttt{return},i+1\right>\right)$ from the general gadget in Figure~\ref{fig:Return_Row}. In this step, $d-2 = O(k) = O(\log N)$ gadgets were created.

\item Create 
	\begin{align*}
		\texttt{Return\_Row\_End}( 	& \! \left<\texttt{return},d-1\right> \!, \\
								& \! \left<\texttt{inc},\texttt{read},1\right> \!, \\
								& \! \left<\texttt{inc},\texttt{read},0\right> )
	\end{align*}
%$\texttt{Return\_Row\_End}\left(\left<\texttt{return},d-1\right> \! , \left<\texttt{inc},\texttt{read},1\right> \!,\left<\texttt{inc},\texttt{read},0\right>\right)$ 
from the general gadget in Figure~\ref{fig:Return_Row_End}. One gadget was created in this step.

\end{itemize}

\end{itemize}

Since the number of tiles in each gadget in the Return Row unit is $O(1)$, then based on the above computations, the number of  gadgets (and therefore the number of tile types) created in this subsection is $O\left(  \log N \right)$.

\begin{figure}[htp]
	\begin{center}
	\begin{subfigure}[b]{.49\textwidth}
	\begin{center}
		\includegraphics[width=43.5px]{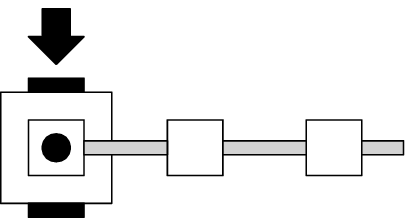}
        \caption{\label{fig:Return_Row_Start} \texttt{Return\_Row\_Start}}
        \vspace{.15\linewidth}
        \includegraphics[width=52.5px]{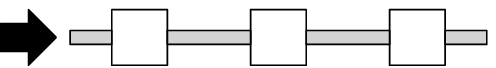}
        \caption{\label{fig:Return_Row} \texttt{Return\_Row} }
        \vspace{.15\linewidth}
        \includegraphics[width=51px]{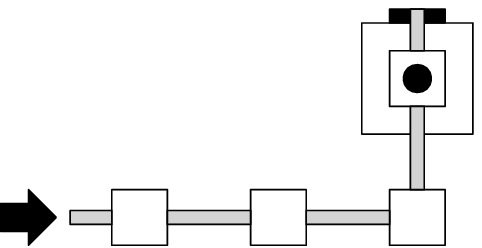}
        \caption{\label{fig:Return_Row_End} \texttt{Return\_Row\_End} }
        \vspace{.15\linewidth}
        \includegraphics[width=42px]{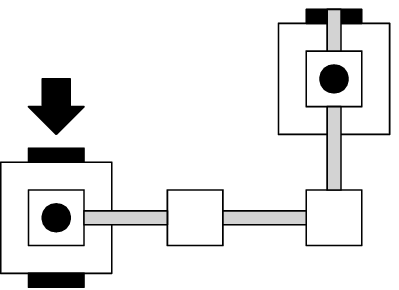}
        \caption{\label{fig:Return_Row_Single} \texttt{Return\_Row\_Single} }
	\end{center}
    \end{subfigure}
    \begin{subfigure}[b]{.49\textwidth}
    \begin{center}
		\includegraphics[width=132px]{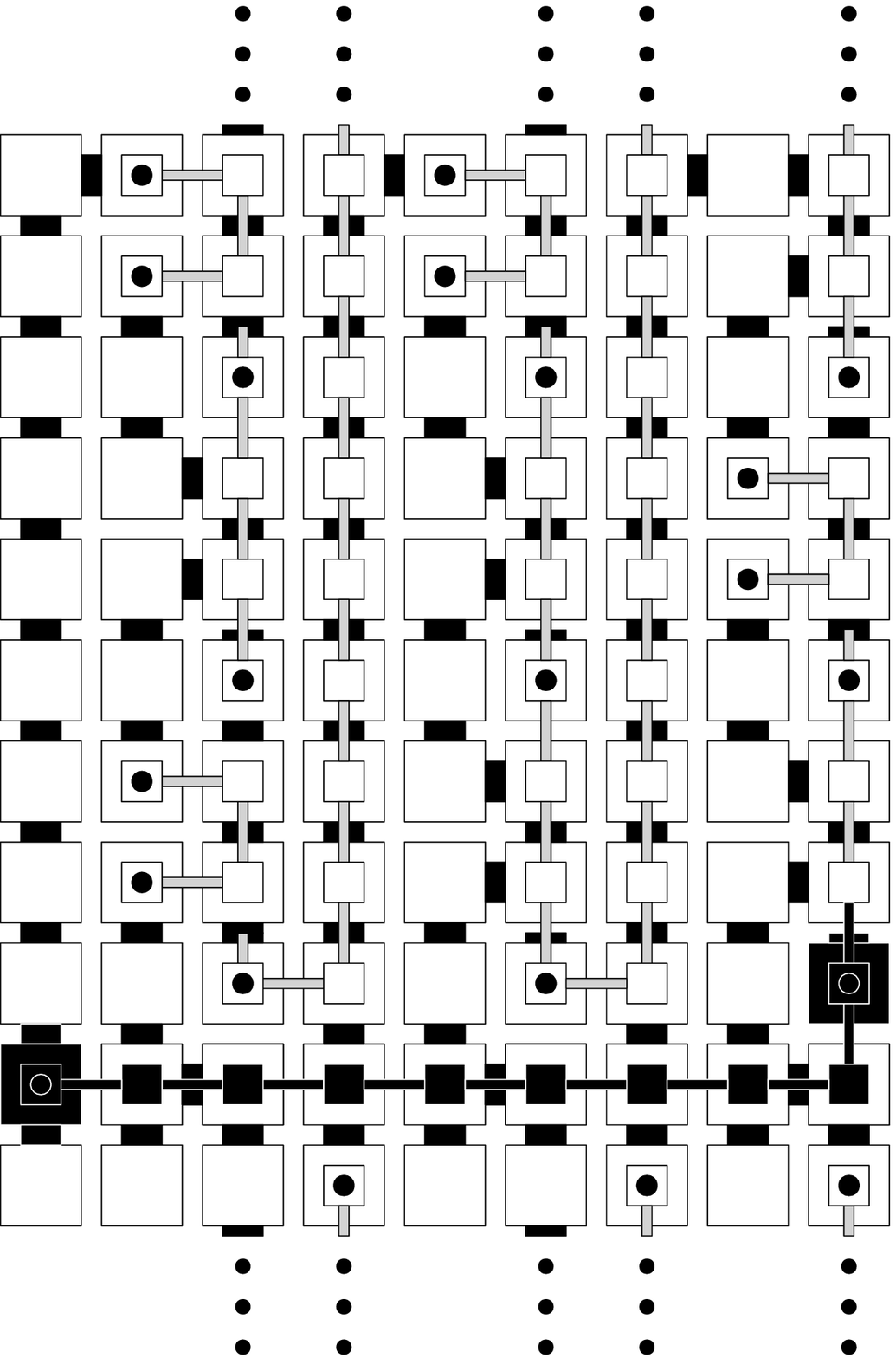}
        \caption{\label{fig:Example_Return_Row} An example with the Return Row unit colored black.}
    \end{center}
    \end{subfigure}
    	\caption{\label{fig:Return}The Return Row gadget unit.}
    \end{center}
\end{figure}

\subsubsection{Roof unit}

\label{sec:roof_unit}

The Roof gadget unit consists of a vertical tile column that reaches above the tiles from the last counter row, then extends the assembly to a height of $N$.  Along the column are glues placed periodically that accept filler extensions.  These extensions are designed to patch up holes that are left in the last counter row.  The highest tile in the column has west-facing and east-facing glues.  These glues accept shingle tiles which extend the roof westward and eastward so that the entire construction is ``covered''.  (The eastward expansion is blocked if $r=0$.)  Each shingle tile has a south-facing glue that binds to a filler tile, which will cover up any remaining gaps in the $k\times N$ shape.  The Roof unit is shown in Figure~\ref{fig:Roof}.

We define the Roof unit by creating the following gadgets:

\begin{itemize}

\item Create $\texttt{Up\_Column}\left(\left<\texttt{inc},\texttt{read},bin(2m-1,l)\right> \! ,\left<\texttt{roof},\texttt{col},2\right>\right)$ from the general gadget in Figure~\ref{fig:Up_Column_Tile}. One gadget was created in this step.

\item Create $\texttt{Up\_Column}\left(\left<\texttt{roof},\texttt{col},2\right> \! ,\left<\texttt{roof},\texttt{col},3\right>\right)$ from the general gadget in Figure~\ref{fig:Up_Column_Tile}. One gadget was created in this step.

%	\begin{align*}
%		\texttt{Name\_Of\_Gadget\_Blah}( 	& \!  \!, \\
%									& \!  \!, \\
%									& \! )
%	\end{align*}

\item For each $i=3,\ldots,l+2$, create
	\begin{align*}
		\texttt{Roof\_Chimney}( 	& \! \left<\texttt{roof},\texttt{col},i\right> \!, \\
							& \! \left<\texttt{roof},\texttt{col},i+1\right> \!, \\
							& \! \left<\texttt{roof},\texttt{filler},1\right> )
	\end{align*}
%$\texttt{Roof\_Chimney}\left(\left<\texttt{roof},\texttt{col},i\right> \! , \left<\texttt{roof},\texttt{col},i+1\right> \! ,\left<\texttt{roof},\texttt{filler},1\right>\right)$ 
from the general gadget in Figure~\ref{fig:Roof_Chimney}. In this step, $(l+2)-3+1 = O(l) = O(\log m) = O\left( \frac{\log N}{k} \right) = O(\log N)$ gadgets were created.

\item For each $i=1,\ldots,d-1$, create\\$\texttt{Roof\_Filler}\left(\left<\texttt{roof},\texttt{filler},i\right> \! ,\left<\texttt{roof},\texttt{filler},i+1\right>\right)$ from the general gadget in Figure~\ref{fig:Roof_Filler}. In this step, $d - 1 = O(k) = O(\log N)$ gadgets were created. 

\item For each $i=l+3,\ldots,l+r+4$,\\create $\texttt{Up\_Column}\left(\left<\texttt{roof},\texttt{col},i\right> \! ,\left<\texttt{roof},\texttt{col},i+1\right>\right)$ from the general gadget in Figure~\ref{fig:Up_Column_Tile}. In this step, $r+2 = O(l) = O(\log m) = O\left( \frac{\log N}{k} \right) = O(\log N)$ gadgets were created.

\item Create
	\begin{align*}
		\texttt{Roof\_Cap}( 	& \! \left<\texttt{roof},\texttt{col},l+r+5\right> \!, \\
						& \! \left<\texttt{roof},\texttt{r\_shingle},1\right> \!, \\
						& \! \left<\texttt{roof},\texttt{l\_shingle},1\right> )
	\end{align*}
%$\texttt{Roof\_Cap}\left(\left<\texttt{roof},\texttt{col},l+r+5\right> \! , \left<\texttt{roof},\texttt{r\_shingle},1\right> \! ,\left<\texttt{roof},\texttt{l\_shingle},1\right>\right)$ 
from the general gadget in Figure~\ref{fig:Roof_Cap}. One gadget was created in this step.

\item For each $i=1,\ldots,c+2$, create
	\begin{align*}
			{\tt Roof\_Left\_Shingle}( 	&  \! \left \langle \texttt{roof},\texttt{l\_shingle},i \right \rangle \! , \\
								&  \! \left \langle \texttt{d\_fill} \right \rangle \! , \\
								&  \! \left \langle \texttt{roof},\texttt{l\_shingle},i+1 \right \rangle )
	\end{align*}
	from the general gadget in Figure~\ref{fig:Roof_Left_Shingle}. In this step, $c+2 = O(1)$ gadgets were created.
	
\item If $r>0$, then for each $i=1,\ldots,3d-3$, create
	\begin{align*}
			{\tt Roof\_Right\_Shingle}( 		&\! \left \langle \texttt{roof},\texttt{r\_shingle},i \right \rangle \! , \\
								   		& \! \left \langle \texttt{roof},\texttt{r\_shingle},i+1 \right \rangle \!, \\\
										& \! \left \langle \texttt{d\_fill} \right \rangle )
	\end{align*}
	from the general gadget in Figure~\ref{fig:Roof_Right_Shingle}. In this step, if $r > 0$, then $3d-3 = O(d) = O(k) = O(\log N)$ gadgets were created.

\end{itemize}

Since the number of tiles in each gadget in the Roof is $O(1)$, then based on the above computations, the number of  gadgets (and therefore the number of tile types) created in this subsection is $O(\log N)$.

\begin{figure}[htp]
	\begin{center}
    \begin{subfigure}[b]{.49\textwidth}
    \begin{center}
		\includegraphics[width=13.5px]{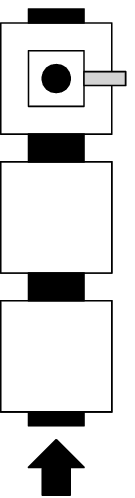}
        \caption{\label{fig:Roof_Chimney}\texttt{Roof\_Chimney}}
        \vspace{.1\linewidth}
        \includegraphics[width=52.5px]{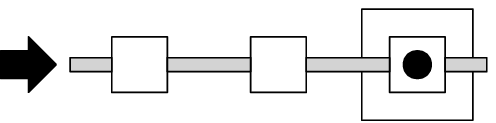}
        \caption{\label{fig:Roof_Filler}\texttt{Roof\_Filler}}
        \vspace{.1\linewidth}
		\includegraphics[width=15px]{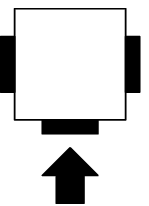}
        \caption{\label{fig:Roof_Cap}\texttt{Roof\_Cap}}
        \vspace{.1\linewidth}
        \includegraphics[width=22.5px]{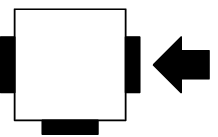}
        \caption{\label{fig:Roof_Left_Shingle}\texttt{Roof\_Left\_Shingle}}
        \vspace{.1\linewidth}
        \includegraphics[width=22.5px]{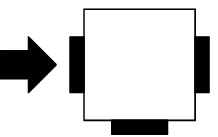}
        \caption{\label{fig:Roof_Right_Shingle}\texttt{Roof\_Right\_Shingle}}
    \end{center}
    \end{subfigure}
    \begin{subfigure}[b]{.49\textwidth}
    \begin{center}
		\includegraphics[width=162px]{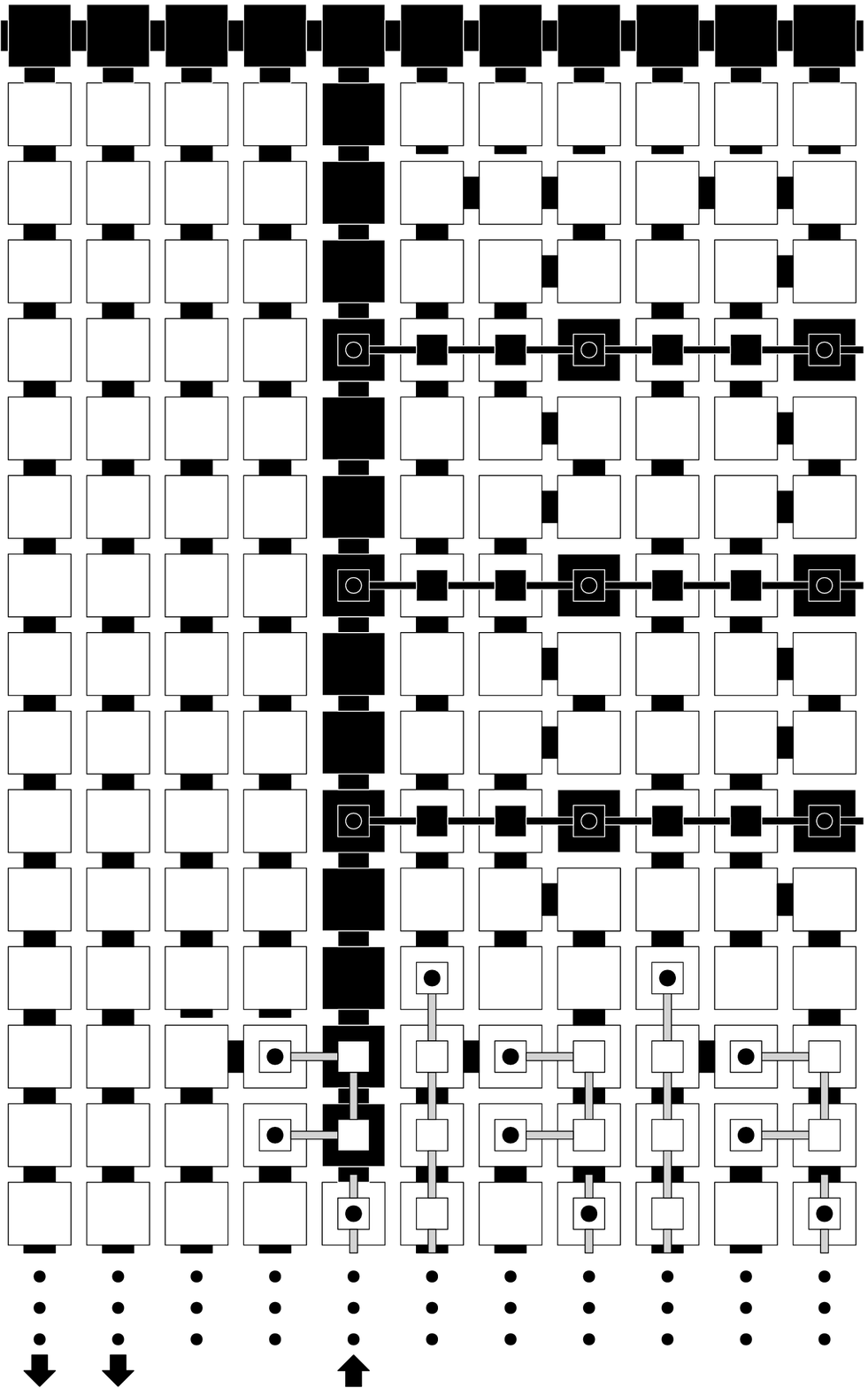}
        \caption{\label{fig:Example_Roof} An example with the Roof unit colored black.}
    \end{center}
    \end{subfigure}
    	\caption{\label{fig:Roof}The Roof gadget unit.}
	\end{center}
\end{figure}

\subsection{Proof of Theorem~\ref{thm:two}}

We use the \emph{conditional determinism} method of Lutz and Shutters \cite{LutzShutters12} to prove that our construction is directed. Conditional determinism is weaker than the local determinism method of Soloveichik and Winfree \cite{SolWin07} but strong enough to imply unique production. Although Lutz and Shutters define conditional determinism for a 2D TAS, their result holds in 3D as well, namely, if $\mathcal{T}$ is a conditionally deterministic TAS, then $\mathcal{T}$ is directed.

We merely give a brief, intuitive review of conditional determinism (see \cite{LutzShutters12} for a thorough discussion). Let $\mathcal{T} = (T,\sigma,\tau)$ be a TAS and $\vec{\alpha}$ be an assembly sequence of $\mathcal{T}$ with result $\alpha$. Conditional determinism relies on the notion of \emph{dependence} (of one position on another position in an assembly). For a position $\vec{m} \in \textmd{dom}\left({\alpha}\right)$ and some unit vector\\$\vec{u} \in \{(0,1,0),(1,0,0),(0,-1,0),(-1,0,0),(0,0,1),(0,0,-1)\}$, we say that $\vec{m}+\vec{u}$ \emph{depends} on $\vec{m}$, or $\vec{m}+\vec{u}$ is a dependent of $\vec{m}$, if, in every assembly sequence in $\mathcal{T}$, a tile is placed at $\vec{m}$ before a tile is placed at $\vec{m}+\vec{u}$. Intuitively, $\vec{\alpha}$ is conditionally deterministic if the following three properties are satisfied: (1) every tile placed by $\vec{\alpha}$ binds with exactly temperature $\tau$, (2) for every position $\vec{m} \in \textmd{dom}\;{\alpha}$, if $\alpha(\vec{m})$ and the union of its immediate output and dependent neighbors (adjacent positions that depend on $\vec{m}$) are removed from $\alpha$ to get $\alpha'$, then $\alpha(\vec{m})$ is the only tile that may bind at position $\vec{m}$ in $\alpha'$, and (3) $\alpha$ is terminal. A TAS is conditionally deterministic if it exhibits a conditionally deterministic assembly sequence. We are now ready to prove Theorem~\ref{thm:two}.

Assume that $\sigma$ is the seed assembly that consists of the single seed tile that was previously specified and let $\mathcal{T}_{k,N} = \left(T_{k,N},\sigma,1\right)$. We are now ready to prove Theorem~\ref{thm:two}. 

\begin{proof}(of Theorem~\ref{thm:two}) \\
First, note that $\left|T_{k,N}\right| = O\left( N^{\frac{1}{\left \lfloor \frac{k}{3} \right \rfloor}} + \log N\right)$ because there are two\\{\tt Vertical\_Column} tile types, a constant number of gadget units (Seed, Counter, Return Row, and Roof), the Counter unit contributes $O\left( N^{\frac{1}{\left \lfloor \frac{k}{3} \right \rfloor}} + \log N \right)$ tile types and the Seed, Return Row and Roof units each contribute $O\left( \log N \right)$ tile types.

Now, let $\vec{\alpha}$ be the assembly sequence in $\mathcal{T}_{k,N}$, with result $\alpha$, such that the order in which $\vec{\alpha}$ places tiles can be inferred from Figure~\ref{fig:Example_Full}, and with the additional constraint that, when building the assembly sequence in a depth-first manner, $\vec{\alpha}$ always tries to place as many tiles as possible in the $z=0$ plane before placing a tile in the $z=1$ plane, breaking ties on which direction in which to proceed based on the ordering: north, east, south and finally west. Thus, $R^3_{k,N}$ self-assembles in $\mathcal{T}_{k,N}$. To show that $\mathcal{T}_{k,N}$ is conditionally deterministic, and therefore directed, it suffices to show that $\vec{\alpha}$ is conditionally deterministic.

To prove that $\vec{\alpha}$ is conditionally deterministic, first note that, by the way we construct all the gadgets, all tiles placed by $\vec{\alpha}$ initially bind deterministically with exactly strength $\tau = 1$. Now let $\vec{m} \in \textmd{dom}\;{\alpha}$ and\\$\vec{u} \in \{(0,1,0),(1,0,0),(0,-1,0),(-1,0,0),(0,0,1),(0,0,-1)\}$ such that $\alpha(\vec{m}+\vec{u})$ is defined. The following statements can be verified by inspection of how the gadgets defined in the previous subsection attach to each other:
\begin{enumerate}
    \item If $\alpha(\vec{m})$ has a strength-$1$ glue in direction $\vec{u}$ and $\alpha(\vec{m}+\vec{u})$ has a strength-$0$ glue in direction $-\vec{u}$, then $\vec{m}$ depends on $\vec{m} + \vec{u}$ in every assembly sequence in $\mathcal{T}_{k,N}$.
    \item If $\alpha(\vec{m})$ has a strength-$1$ glue in direction $\vec{u}$ and $\alpha(\vec{m}+\vec{u})$ has a strength-$1$ glue in direction $-\vec{u}$, then there are two sub-cases to consider, depending on whether the glues interact:
    \begin{enumerate}
    	\item The case where the glues are not equal and therefore do not interact does not happen in our construction. So, ignore this case.
    
    	\item  If the glues are equal, then, by the way we construct the gadgets, in every assembly sequence in $\mathcal{T}_{k,N}$, $\vec{m}+\vec{u}$ is an output side of $\alpha\left(\vec{m}\right)$ or $\vec{m}+\vec{u}$ is the unique input side of $\alpha\left(\vec{m}\right)$.
    \end{enumerate}
\end{enumerate}
We need not consider the cases where $\alpha(\vec{m})$ has a strength-$0$ glue in direction $\vec{u}$ and $\alpha(\vec{m}+\vec{u})$ has a strength-$0$ glue in direction $-\vec{u}$ and either one depends on the other or not. Thus, for every position $\vec{m} \in \textmd{dom}\;{\alpha}$, if $\alpha(\vec{m})$ and the union of its immediate output and dependent neighbors are removed from $\alpha$ to get $\alpha'$, then $\alpha(\vec{m})$ is the only tile that may bind (via its unique input side) at position $\vec{m}$ in $\alpha'$. Finally, since $\alpha$ is terminal, we may conclude that $\vec{\alpha}$ is conditionally deterministic and therefore $\mathcal{T}_{k,N}$ is directed.
\end{proof}

\begin{corollary}
If $R^3_{k,N}$ is a thin rectangle, then $K^1_{USA}\left( R^3_{k,N} \right) = O\left( N^{\frac{1}{\left \lfloor \frac{k}{3} \right \rfloor}} \right)$.
\end{corollary}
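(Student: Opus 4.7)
The plan is to derive this corollary directly from Theorem~\ref{thm:two}, which already gives the upper bound $O\left(N^{\frac{1}{\lfloor k/3 \rfloor}} + \log N\right)$. The remaining work is purely arithmetic: under the thinness hypothesis $k < \frac{\log N}{\log \log N - \log \log \log N}$ inherited from Aggarwal et al., I would argue that the additive $\log N$ term is swallowed by $N^{1/\lfloor k/3 \rfloor}$, so the sum collapses to the single term claimed.

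Concretely, my first step is to observe that it suffices to prove $\log N = O\bigl(N^{1/\lfloor k/3 \rfloor}\bigr)$ whenever $R^3_{k,N}$ is thin and $N$ is sufficiently large. This is equivalent to showing $\lfloor k/3 \rfloor \leq \log N / \log \log N$, because taking $\log$ of $N^{1/\lfloor k/3 \rfloor} \geq \log N$ gives exactly this inequality. Second, I would chain the thinness bound with the trivial estimate $\lfloor k/3 \rfloor \leq k/3$ to obtain
\[
\lfloor k/3 \rfloor \;\leq\; \frac{k}{3} \;<\; \frac{\log N}{3\bigl(\log \log N - \log \log \log N\bigr)}.
\]
Third, I would verify that for all sufficiently large $N$,
\[
3\bigl(\log \log N - \log \log \log N\bigr) \;\geq\; \log \log N,
\]
which rearranges to $2 \log \log N \geq 3 \log \log \log N$ and clearly holds for large $N$ since $\log \log \log N = o(\log \log N)$. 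Combining this with the previous bound yields $\lfloor k/3 \rfloor < \log N / \log \log N$, hence $N^{1/\lfloor k/3 \rfloor} \geq 2^{\log \log N} = \log N$, as required.

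Finally, I would note that the construction of Theorem~\ref{thm:two} requires $k \geq 3$ (so that $\lfloor k/3 \rfloor \geq 1$ and the min-height bound in Lemma~\ref{lem:min-height-counter} is meaningful), and that for any fixed finitely many small cases of $k$ or $N$ the bound $O\bigl(N^{1/\lfloor k/3 \rfloor}\bigr)$ trivially absorbs a finite additive constant. Plugging the estimate $\log N = O\bigl(N^{1/\lfloor k/3 \rfloor}\bigr)$ back into Theorem~\ref{thm:two} gives $K^1_{USA}(R^3_{k,N}) = O\bigl(N^{1/\lfloor k/3 \rfloor}\bigr)$ for thin rectangles, completing the argument. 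The only subtlety, and really the only thing that can go wrong, is making sure the asymptotic comparison in the third step is stated carefully for all thin $k$; everything else is a direct citation of Theorem~\ref{thm:two}.
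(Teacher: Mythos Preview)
Your proposal is correct and follows essentially the same approach as the paper: both arguments use the thinness hypothesis to show that $\log N = O\bigl(N^{1/\lfloor k/3\rfloor}\bigr)$, so that the additive $\log N$ term from Theorem~\ref{thm:two} is absorbed. The paper first derives $k < \frac{2\log N}{\log\log N}$ (for $N>2^{2^4}$) and then rewrites $\log N = N^{\log\log N/\log N}$ to chain up through $O(N^{2/k})$, $O(N^{1/(k/3)})$, $O(N^{1/\lfloor k/3\rfloor})$; you instead bound $\lfloor k/3\rfloor$ directly by $\log N/\log\log N$ via the inequality $3(\log\log N - \log\log\log N)\geq \log\log N$, which is the same computation in slightly different dress.
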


\begin{proof}
By the definition of thin rectangle, we have
$$
k < \frac{\log N}{\log \log N - \log \log \log N} < \frac{\log N}{\log \log N - \frac{1}{2}\log \log N} = \frac{2\log N}{\log \log N},
$$
where the second inequality holds for $N > 2^{2^{4}}$.
Then we have
\begin{eqnarray*}
	\log N	& = &  2^{\log \log N} 
			\ = \ \left( N^{\frac{1}{\log N}} \right)^{\log \log N}  
			\ = \ \left( N^{\frac{1}{\frac{\log N}{\log \log N}}} \right)^{\frac{2}{2}} 
			\ = \ N^{\frac{2}{\frac{2 \log N}{\log \log N}}} \\
			& = & O\left( N^{\frac{2}{k}} \right) \ = \ O\left( N^{\frac{1}{\frac{k}{2}}} \right) \ = \ O\left( N^{\frac{1}{\frac{k}{3}}} \right) \ = \ O\left( N^{\frac{1}{\left \lfloor \frac{k}{3} \right \rfloor}} \right). \\
\end{eqnarray*}

\end{proof}

\section{Future work}
\label{sec:future-work}

It is well-known that the tile complexity of a 2D $N \times N$ square at temperature-2 is $O\left( \frac{\log N}{\log \log N} \right)$. 
More formally, for an $N \times N$ square $S^2_N = S_N = \{0,1,\ldots, N-1\}\times\{0,1,\ldots,N-1\}$, $K^2_{USA}\left(S_N\right) = O\left( \frac{\log N}{\log \log N}\right)$ \cite{AdlemanCGH01}. 
However, it is conjectured \cite{ManuchSS10} that $K^1_{USA}\left(S_N\right)=2N-1$, meaning that a 2D $N \times N$ square does not self-assemble efficiently at temperature-1. 
Yet, the tile complexity of a just-barely 3D $N \times N$ square at temperature-1 is $O\left(\frac{\log N}{\log \log N}\right)$. 
That is, $K^1_{USA}\left(S^3_N\right) = O\left(\frac{\log N}{\log \log N} \right)$, where $S^3_N$ is a just-barely 3D $N \times N$ square, satisfying $\{0,1,\ldots,N-1\}\times\{0,1,\ldots,N-1\}\times\{0\} \subseteq S^3_N \subseteq \{0,1,\ldots,N-1\}\times\{0,1,\ldots,N-1\}\times\{0,1\}$ \cite{jFurcyMickaSummers}. 
So, a 2D $N \times N$ square has the same asymptotic tile complexity at temperature-2 as its just-barely 3D counterpart does at temperature-1. 
Regarding thin rectangles, we know that $K^2_{USA}\left(R^2_{k,N}\right) = O\left(N^{\frac{1}{k}}\right)$ \cite{AGKS05g} and we speculate whether a similar upper bound holds for a just-barely 3D $k \times N$ thin rectangle at temperature-1. 
In other words, is it the case that either $K^1_{SA}\left(R^3_{k,N}\right)$ or $K^1_{USA}\left(R^3_{k,N}\right)$ is equal to $O\left(N^{\frac{1}{k}}\right)$? 
If not, then what are tight bounds for $K^1_{SA}\left(R^3_{k,N}\right)$ and  $K^1_{USA}\left(R^3_{k,N}\right)$? 
We conjecture that $K^1_{USA}\left( R^3_{k,N} \right) = o\left( N^{ \frac{1}{\left \lfloor \frac{k}{3} \right \rfloor} }\right)$. 

\section*{Acknowledgement}
We thank Ryan Breuer (University of Wisconsin Oshkosh Computer Science Major, class of 2019) for writing a computer program that implements the construction for our main positive result. Through his implementation, Ryan uncovered a couple technical flaws with an earlier version of the construction, which have since been corrected in version that was presented here. 

\bibliographystyle{amsplain}
\bibliography{tam}

\providecommand{\bysame}{\leavevmode\hbox to3em{\hrulefill}\thinspace}
\providecommand{\MR}{\relax\ifhmode\unskip\space\fi MR }
% \MRhref is called by the amsart/book/proc definition of \MR.
\providecommand{\MRhref}[2]{%
  \href{http://www.ams.org/mathscinet-getitem?mr=#1}{#2}
}
\providecommand{\href}[2]{#2}
\begin{thebibliography}{10}

\bibitem{AdlemanCGH01}
Leonard~M. Adleman, Qi~Cheng, Ashish Goel, and Ming-Deh~A. Huang, \emph{Running
  time and program size for self-assembled squares}, STOC, 2001, pp.~740--748.

\bibitem{AGKS05g}
Gagan Aggarwal, Qi~Cheng, Michael~H. Goldwasser, Ming-Yang Kao, Pablo~Moisset
  de~Espan\'{e}s, and Robert~T. Schweller, \emph{Complexities for generalized
  models of self-assembly}, SIAM Journal on Computing \textbf{34} (2005),
  1493--1515.

\bibitem{BarSchRotWin09}
Robert~D. Barish, Rebecca Schulman, Paul~W. Rothemund, and Erik Winfree,
  \emph{An information-bearing seed for nucleating algorithmic self-assembly},
  Proceedings of the National Academy of Sciences \textbf{106} (2009), no.~15,
  6054--6059.

\bibitem{CookFuSch11}
Matthew Cook, Yunhui Fu, and Robert Schweller, \emph{Temperature 1
  self-assembly: Deterministic assembly in {3D} and probabilistic assembly in
  {2D}}, Proceedings of the 22nd Annual ACM-SIAM Symposium on Discrete
  Algorithms, 2011.

\bibitem{jLSAT1}
David Doty, Matthew~J. Patitz, and Scott~M. Summers, \emph{Limitations of
  self-assembly at temperature 1}, Theoretical Computer Science \textbf{412}
  (2011), 145--158.

\bibitem{jFurcyMickaSummers}
David Furcy, Samuel Micka, and Scott~M. Summers, \emph{Optimal program-size
  complexity for self-assembled squares at temperature 1 in {3D}}, Algorithmica
  \textbf{77} (2017), no.~4, 1240--1282.

\bibitem{FurcyS18}
David Furcy and Scott~M. Summers, \emph{Optimal self-assembly of finite shapes
  at temperature 1 in {3D}}, Algorithmica \textbf{80} (2018), no.~6,
  1909--1963.

\bibitem{Graham:1994}
Ronald~L. Graham, Donald~E. Knuth, and Oren Patashnik, \emph{Concrete
  mathematics: A foundation for computer science}, 2nd ed., Addison-Wesley
  Longman Publishing Co., Inc., Boston, MA, USA, 1994.

\bibitem{LutzShutters12}
Jack~H. Lutz and Brad Shutters, \emph{Approximate self-assembly of the
  sierpinski triangle}, Theory Comput. Syst. \textbf{51} (2012), no.~3,
  372--400.

\bibitem{ManuchSS10}
J{\'a}n Manuch, Ladislav Stacho, and Christine Stoll, \emph{Two lower bounds
  for self-assemblies at temperature 1}, Journal of Computational Biology
  \textbf{17} (2010), no.~6, 841--852.

\bibitem{MaoLabReiSee00}
Chengde Mao, Thomas~H. LaBean, John~H. Relf, and Nadrian~C. Seeman,
  \emph{Logical computation using algorithmic self-assembly of {D}{N}{A}
  triple-crossover molecules.}, Nature \textbf{407} (2000), no.~6803, 493--6.

\bibitem{WindowMovieLemma}
P.-E. Meunier, M.~J. Patitz, S.~M. Summers, G.~Theyssier, A.~Winslow, and
  D.~Woods, \emph{Intrinsic universality in tile self-assembly requires
  cooperation}, Proceedings of the 25th Annual ACM-SIAM Symposium on Discrete
  Algorithms (SODA), 2014, pp.~752--771.

\bibitem{MeunierW17}
Pierre{-}{\'{E}}tienne Meunier and Damien Woods, \emph{The non-cooperative tile
  assembly model is not intrinsically universal or capable of bounded turing
  machine simulation}, Proceedings of the 49th Annual {ACM} {SIGACT} Symposium
  on Theory of Computing, {STOC} 2017, Montreal, QC, Canada, June 19-23, 2017,
  2017, pp.~328--341.

\bibitem{RotWin00}
Paul W.~K. Rothemund and Erik Winfree, \emph{The program-size complexity of
  self-assembled squares (extended abstract)}, STOC '00: Proceedings of the
  thirty-second annual ACM Symposium on Theory of Computing, 2000,
  pp.~459--468.

\bibitem{RoPaWi04}
Paul~W.K. Rothemund, Nick Papadakis, and Erik Winfree, \emph{Algorithmic
  self-assembly of {DNA} {S}ierpinski triangles}, PLoS Biology \textbf{2}
  (2004), no.~12, 2041--2053.

\bibitem{SchWin07}
Rebecca Schulman and Erik Winfree, \emph{Synthesis of crystals with a
  programmable kinetic barrier to nucleation}, Proceedings of the National
  Academy of Sciences \textbf{104} (2007), no.~39, 15236--15241.

\bibitem{Seem82}
Nadrian~C. Seeman, \emph{Nucleic-acid junctions and lattices}, Journal of
  Theoretical Biology \textbf{99} (1982), 237--247.

\bibitem{SolWin07}
David Soloveichik and Erik Winfree, \emph{Complexity of self-assembled shapes},
  SIAM J. Comput. \textbf{36} (2007), no.~6, 1544--1569.

\bibitem{Winf98}
Erik Winfree, \emph{Algorithmic self-assembly of {D}{N}{A}}, Ph.D. thesis,
  California Institute of Technology, June 1998.

\bibitem{WinLiuWenSee98}
Erik Winfree, Furong Liu, Lisa~A. Wenzler, and Nadrian~C. Seeman, \emph{Design
  and self-assembly of two-dimensional {D}{N}{A} crystals.}, Nature
  \textbf{394} (1998), no.~6693, 539--44.

\end{thebibliography}

\clearpage

\end{document}